\def\usingIEEEtran{0}
\def\usingACM{1}
\def\isanonymous{0}
\def\isfullversion{1}
\def\isacmccsreferences{0}
\def\istoappearfootnote{1}
\newcommand{\anonymous}[2]{%
	\ifthenelse{\equal{\isanonymous}{1}}%
	{#1}%
	{#2}%
}%
\newcommand{\fullversion}[2]{%
	\ifthenelse{\equal{\isfullversion}{1}}%
	{#1}%
	{#2}%
}%
\newcommand{\IEEEtrans}[2]{%
	\ifthenelse{\equal{\usingIEEEtran}{1}}%
	{#1}%
	{#2}%
}%
\newcommand{\ACM}[2]{%
	\ifthenelse{\equal{\usingACM}{1}}%
	{#1}%
	{#2}%
}%
\newcommand{\acmccsreferences}[2]{%
	\ifthenelse{\equal{\isacmccsreferences}{1}}%
	{#1}%
	{#2}%
}%
\newcommand{\toappearfootnote}[2]{%
	\ifthenelse{\equal{\istoappearfootnote}{1}}%
	{#1}%
	{#2}%
}%
	\definecolor{Gray}{RGB}{50,50,50}
	\definecolor{oxygenorange}{HTML}{FFDD00}
	\definecolor{Gray}{RGB}{50,50,50}
	\definecolor{NavyBlue}{RGB}{0,0,128}
	\definecolor{ForestGreen}{RGB}{34, 139, 34}
\newcommand\Wider[2][3em]{%
	\makebox[\linewidth][c]{%
		\begin{minipage}{\dimexpr\textwidth+#1\relax}
			\raggedright#2
		\end{minipage}%
	}%
}
	\newtheorem{definition}{Definition}
	\newtheorem{lemma}{Lemma}
	\newtheorem{remark}{Remark}
	\newtheorem{corollary}{Corollary}
\newtheorem{remark}{Remark}
\newcommand{\queedee}{
	\IEEEtrans{\tag*{\qedhere}}{\tag*{\ensuremath{\Box}}}
}
\newcommand{\pcmycomment}[1]{\textcolor{Gray}{\texttt{//~#1}}}
\definecolor{DarkPurple}{HTML}{332288}
\definecolor{DarkBlue}{HTML}{6699CC}
\definecolor{LightBlue}{HTML}{88CCEE}
\definecolor{DarkGreen}{HTML}{117733}
\definecolor{DarkRed}{HTML}{661100}
\definecolor{LightRed}{HTML}{CC6677}
\definecolor{LightPink}{HTML}{AA4466}
\definecolor{DarkPink}{HTML}{882255}
\definecolor{LightPurple}{HTML}{AA4499}
\definecolor{DarkBrown}{HTML}{604c38}
\definecolor{DarkTeal}{HTML}{23373b}
\definecolor{LightBrown}{HTML}{EB811B}
\definecolor{LightGreen}{HTML}{14B03D}
\DeclareFontFamily{OT1}{pzc}{}
\DeclareFontShape{OT1}{pzc}{m}{it}%
{<-> s * [1.150] pzcmi7t}{}
\DeclareMathAlphabet{\mathscript}{OT1}{pzc}{m}{it}
\DeclareMathAlphabet{\mathsc}{OT1}{cmr}{m}{sc}
\DeclareMathAlphabet{\mathsl}{OT1}{cmr}{m}{sl}
\newcommand*{\getsr}{{\:{\leftarrow{\hspace*{-3pt}\raisebox{.75pt}{$\scriptscriptstyle\$$}}}\:}} 
\mathchardef\mhyphen="2D 
\renewcommand{\nn}{\nonumber}
\newcommand{\up}{{\textsf{up}}}
\newcommand{\qry}{{\textsf{qry}}}
\newcommand{\amq}{\text{AMQ-PDS}\xspace}
\newcommand{\amqs}{\text{AMQ-PDS}\xspace}
\newcommand{\repsimO}{\textbf{RepSim}}
\newcommand{\PermuteO}{\textbf{Per}}
\newcommand{\upsimO}{\textbf{UpSim}}
\newcommand{\qrysimO}{\textbf{QrySim}}
\newcommand{\revealsimO}{\textbf{RevSim}}
\newcommand{\simS}{\mathcal{S}}
\newcommand{\setupS}{\textsf{setup}}
\newcommand{\upS}{\textsf{up}}
\newcommand{\qryS}{\textsf{qry}}
\newcommand{\somealgS}{\textsf{alg}}
\newcommand{\nai}{\text{NAI}}
\newcommand{\naigen}{\text{\nai-gen}}
\newcommand{\statdist}{SD}
\newcommand{\pfpsym}{P_{\Pi,pp}}
\newcommand{\pfp}[1]{\pfpsym(FP\,|\,{#1})}
\newcommand{\opfp}[1]{\overline{\pfpsym}(FP \mid {#1})}
\newcommand{\Load}[1]{\text{load}(\ensuremath{#1})}
\newcommand{\Values}[1]{\text{vals}(\ensuremath{#1})}
\newcommand{\setV}{{V}}
\newcommand{\E}{\mathbf{E}}
\newcommand{\tinit}{\text{init}}
\newcommand{\tupenabled}{\text{UpIsEnabled}}
\newcommand{\tinserted}{\text{inserted}}
\newcommand{\tFPlist}{\text{FPlist}}
\newcommand{\tCALQ}{\text{CALQ}}
\newcommand{\repO}{\textbf{Rep}}
\newcommand{\repleakO}{\ensuremath{\textbf{RepLeak}}\xspace}
\newcommand{\elemleakO}{\ensuremath{\textbf{ElemLeak}}\xspace}
\newcommand{\upO}{\textbf{Up}}
\newcommand{\qryO}{\textbf{Qry}}
\newcommand{\revO}{\textbf{Reveal}}
\newcommand{\RRO}{\textbf{RoR}}
\newcommand{\advA}{{\mathcal{A}}}
\newcommand{\advB}{{\mathcal{B}}}
\newcommand{\dD}{\mathcal{D}}
\newcommand*{\Kspace}{\mathcal{K}}
\newcommand{\bits}{\{0,1\}}
\newcommand*{\Perms}{\mathsf{Perms}}
\newcommand*{\Funcs}{\mathsf{Funcs}}
\newcommand*{\setS}{S}
\newcommand*{\statespace}{\Sigma}
\newcommand*{\identity}{\text{Id}}
\newcommand*{\plist}{\mathcal{P}_\text{lists}}
\newcommand*{\gameRoI}{\text{Real-or-Ideal}}
\newcommand*{\gameRoIERP}{\text{R-or-I-ElemRepPriv}}
\newcommand*{\gameRoIRP}{\text{R-or-I-RepPriv}}
\newcommand*{\gamePI}{Exp^{PI}_\Pi}
\newcommand*{\Ideal}{\textit{Ideal}}
\newcommand*{\Real}{\textit{Real}}
\newcommand{\pr}[1]{{\Pr}\left[\,#1\,\right]}
\newlength{\saveparindent}
\newlength{\saveparskip}
\newcounter{ctr}
\renewcommand{\eqref}[1]{\mbox{Eq.~(\ref{#1})}}
\renewcommand{\tag}{\mathrm{tag}}
\newcommand{\domain}{\mathfrak{D}}
\newcommand{\range}{\mathfrak{R}}
\newcommand{\spaceK}{\mathcal{K}}
\newcommand{\coinspace}{\mathcal{R}}
	\renewcommand\footnotetextcopyrightpermission[1]{} 
\begin{document}

\title{Adversarial Correctness and Privacy for Probabilistic Data Structures}
\ACM{
\acmccsreferences{
	}{
		\toappearfootnote{
			\ifthenelse{\equal{\isfullversion}{1}}%
			{\titlenote{This is the full version of a paper to appear at ACM CCS '22.}}%
			{\titlenote{This is a version of a paper to appear at ACM CCS '22.}}%
		}{}
	}
}{}

\ACM{
\author{Mia Fili\'{c}}
\affiliation{%
	\institution{ETH Z\"urich}
	\city{Z\"urich}
	\country{Switzerland}}
\email{mia.filic@inf.ethz.ch}

\author{Kenneth G. Paterson}
\affiliation{%
	\institution{ETH Z\"urich}
	\city{Z\"urich}
	\country{Switzerland}}
\email{kenny.paterson@inf.ethz.ch}

\author{Anupama Unnikrishnan}
\affiliation{%
	\institution{ETH Z\"urich}
	\city{Z\"urich}
	\country{Switzerland}}
\email{anupama.unnikrishnan@inf.ethz.ch}

\author{Fernando Virdia}
\authornote{The work of Virdia was carried out while employed by ETH Z\"urich.}
\affiliation{%
	\institution{Intel Labs}
	\city{Z\"urich}
	\country{Switzerland}}
\email{fernando.virdia@intel.com}
}{
	\maketitle
	\thispagestyle{plain}
	\pagestyle{plain}
}

\begin{abstract}
	We study the security of Probabilistic Data Structures (PDS) for handling Approximate Membership Queries (AMQ); prominent examples of AMQ-PDS are Bloom and Cuckoo filters. 
	AMQ-PDS are increasingly being deployed in environments where adversaries can gain benefit from carefully selecting inputs, for example to increase the false positive rate of an AMQ-PDS. 
	They are also being used in settings where the inputs are sensitive and should remain private in the face of adversaries who can access an AMQ-PDS through an API or who can learn its internal state by compromising the system running the AMQ-PDS. 
	
	We develop simulation-based security definitions that speak to correctness and privacy of AMQ-PDS. 
	Our definitions are general and apply to a broad range of adversarial settings. 
	We use our definitions to analyse the behaviour of both Bloom filters and insertion-only Cuckoo filters. 
	We show that these AMQ-PDS can be provably protected through replacement or composition of hash functions with keyed pseudorandom functions in their construction.
	We also examine the practical impact on storage size and computation of providing secure instances of Bloom and insertion-only Cuckoo filters.
\end{abstract}

\begin{CCSXML}
	<ccs2012>
	<concept>
	<concept_id>10002978.10002979.10002982</concept_id>
	<concept_desc>Security and privacy~Symmetric cryptography and hash functions</concept_desc>
	<concept_significance>500</concept_significance>
	</concept>
	</ccs2012>
\end{CCSXML}

\ccsdesc[500]{Security and privacy~Symmetric cryptography and hash functions}


\keywords{probabilistic data structures; simulation-based proofs; Bloom filters; Cuckoo filters} 

\ACM{
\maketitle
\acmccsreferences{
	}{
	\thispagestyle{plain}
	\pagestyle{plain}
	}
}{}

\section{Introduction}\label{sec:introduction}
Probabilistic data structures (PDS) are becoming ubiquitous in practice. They enjoy improved efficiency over exact data structures but this comes at the cost of them only giving approximate answers. Still, this is sufficient in many use-cases, for example when computing statistics on large data sets (e.g.\ finding the number of distinct items in the set~\cite{flajolet2007hyperloglog}, or moments of the frequency distribution of the set elements~\cite{DBLP:conf/stoc/AlonMS96}), in answering set membership queries (e.g.\ to decide, in a storage-efficient manner, whether a particular data item has been encountered before~\cite{bloom1970space}), or in identifying so-called \emph{heavy hitters} in a set, that is, high frequency elements~\cite{Cormode2008b}. Often, these problems appear in the streaming setting, where memory is limited and items have to be processed in an online manner. 

We refer to PDS that provide approximate answers to membership queries as AMQ-PDS; they are the focus of this paper. Prominent examples of AMQ-PDS include Bloom filters~\cite{bloom1970space}, Cuckoo filters~\cite{_CoNEXT:FAKM14}, and Morton filters~\cite{DBLP:journals/pvldb/BreslowJ18}. \amq find use in applications ranging from certificate revocation systems (eg. CRLite)~\cite{_SP:LCLMMW17}, database query speedup~\cite{E-S:PMBS} and DNA sequence analysis~\cite{BMC:MelPri}. 

As early as the 1990s it was recognised that simple data structures like hash tables could be manipulated into poor performance or leak information about their inputs~\cite{LiptonN93}. Increasingly, PDS are being used in environments where the input may be adversarially chosen, see~\cite{_DSN:GerKumLau15,_CCS:ClaPatShr19,PatRay22} for examples. 
However, PDS are not usually designed to work reliably in such settings, and their performance guarantees are typically proven for the case where the input distribution is benign. 
This disjunction can result in security vulnerabilities, for example, non-detection of network scanning attacks~\cite{PatRay22} or a Bloom filter reporting false positives on some targeted set of inputs~\cite{_DSN:GerKumLau15}. PDS may also inadvertently leak information about their inputs in settings where it is desirable to keep those inputs private, cf.\ \cite{_PETS:KKDM11,DBLP:conf/psd/BianchiBL12,DBLP:conf/acsac/GervaisCKG14,cardestprivacy}. Given the rising use of PDS in potentially malicious settings, there is an urgent need to better understand their performance in such settings, and to secure them against adversarial manipulation. 

\subsection{Our Contributions} 
In this paper, we ask (and answer) the question: \emph{How can we provably protect PDS against attacks at low cost?}
Given the vast range of PDS in the literature and in use, we focus our attention here on AMQ-PDS. These are amongst the most widely used PDS, with applications in dictionaries, databases and networking~\cite{DBLP:journals/im/BroderM03}. Bloom filters in particular are in common use, because of their implementation simplicity and easily understood performance guarantees; Cuckoo filters were more recently introduced~\cite{_CoNEXT:FAKM14} and offer superior performance to Bloom filters.

\paragraph{Syntax for AMQ-PDS} After establishing a syntax for AMQ-PDS, inspired by~\cite{_CCS:ClaPatShr19}, we surface \emph{consistency rules} satisfied by Bloom and insertion-only Cuckoo filters which we use to prove security theorems.

\paragraph{Simulation-based security.} We then develop simulation-based security definitions for analysing the security of AMQ-PDS. Such an approach was recently used to study cardinality estimation~\cite{PatRay22}, an important (but distinct) application domain for PDS. Inspired by~\cite{PatRay22}, our approach introduces two worlds, a real world in which the adversary interacts with an AMQ-PDS instantiation through an API (allowing it to e.g. insert items and make membership queries) to produce some output, and an ideal world in which a simulator $\simS$ produces the output. Our core security definitions say that the two outputs, one taken from each world, should be close.\footnote{Standard cryptographic notions here would mandate ``indistinguishable'' in place of ``close''. Our results do not reach this level for concrete AMQ-PDS like Bloom and insertion-only Cuckoo filters without blowing up the AMQ-PDS parameters, e.g.\ the required storage. Yet they are still useful when it comes to setting parameters in practice, for example, when limiting the false positive probability that an adversary can attain.} The intuition is that whatever the adversary can learn from interacting with the AMQ-PDS in the real world can be simulated without access to the AMQ-PDS instantiation. The only leakage, then, is whatever input $\simS$ receives. Security comes from carefully evaluating this input and/or restricting the simulator to behave in certain ways which imply it is effectively operating in a non-adversarial manner.

Our simulation-based approach appears conceptually complex, but it enables us to provide a fairly unified approach to both correctness (under adversarial input) and privacy of AMQ-PDS. It avoids the use of very specific winning conditions of the type used in~\cite{_CCS:ClaPatShr19}, since security is defined purely in terms of closeness between the real and ideal world. For an introduction to the simulation-based approach, we recommend~\cite{_EPRINT:Lindell16}.

\paragraph{Adversarial correctness.} We begin by establishing what degree of correctness it is reasonable to expect from an AMQ-PDS under non-adversarial inputs. Intuitively, we want to capture the behaviour of a given AMQ-PDS under ``honest'' or ``non-adversarial'' inputs. In principle, the distribution of queries that an honest user makes to a PDS will be application-dependent: a PDS deployed to store images will receive different inputs than one deployed to store IP addresses. However, we identify two properties of AMQ-PDS (satisfied by Bloom and insertion-only Cuckoo filters) that allow us to predict the state of an \amq under honest inputs, in an application-independent manner.
We use these to formalise the notion of a non-adversarially-influenced (NAI) state generator for an AMQ-PDS, and then define the NAI false positive probability. Essentially, this is the rate at which the AMQ-PDS incorrectly answers membership queries after having been populated with $n$ randomly chosen elements (where $n$ is a parameter of the definition), reflecting ``average case'' performance of the AMQ-PDS. This quantity can be computed or bounded from above for Bloom and Cuckoo filters, which will be sufficient for our purposes.

With this machinery in hand, we define adversarial correctness for AMQ-PDS in the following way: an AMQ-PDS is adversarially correct if there exists an ideal world simulator that provides a view of the AMQ-PDS to the adversary that is close to the view produced by an NAI generator. The definition implies that the adversary cannot tell the difference between the real world setting and the ideal world setting. Moreover, in the ideal world setting, the adversary effectively sees an AMQ-PDS operating under non-adversarial conditions. It follows from this chain of reasoning that, if our definition is satisfied, an adversary can only influence the false positive probability of the AMQ-PDS to a concretely bounded extent. Here, one should contrast the security definition with those more commonly seen in the simulation-based paradigm: we get security by restricting the \emph{behaviour} of $\simS$ rather than by limiting its \emph{input}. An analogous formulation was used in~\cite{PatRay22} when dealing with adversarial correctness of cardinality estimators. 

When it comes to proving application-independent results and using our adversarial correctness definition to study specific AMQ-PDS, we identify \emph{function-decomposability} as a key property.
Informally, this property says that the input to an AMQ-PDS is always first transformed using some function $F$ before any further processing is applied. Bloom filters operate in this way ($F$ represents the collection of hash functions used to map the input element to a vector of indices). Cuckoo filters do not, but we show how they can easily be modified to do so. Indeed, any existing AMQ-PDS can have its inputs ``wrapped'' in order to make it function-decomposable. Our main result shows that if an AMQ-PDS is function-decomposable, and $F$ is a truly random function, then our notion of adversarial correctness is achievable. In practice, we instantiate $F$ by a pseudorandom function (PRF), with the PRF key being held securely by the AMQ-PDS. This step introduces a PRF-to-random-function switching cost to our security bounds.\footnote{Note that we are not operating in the Random Oracle Model here, since the adversary does not have (direct) oracle access to $F$; a good analogy is constructing a symmetric encryption mode using a random permutation $\pi$ for the analysis, and then replacing $\pi$ with a block cipher.}

\paragraph{Privacy.}  We consider two distinct but related security models for privacy of AMQ-PDS.
The main question our models address is: what information can leak to an adversary about the elements already contained in an AMQ-PDS? 
Our first model recognises that such leakage may be inevitable if the adversary has access to a sufficiently rich API for a target AMQ-PDS (in particular, if it can make membership queries to the AMQ-PDS), but that it may be possible to prove that nothing more than the results of such queries may leak. This necessitates equipping the simulator with an oracle telling when a queried element is already contained in the AMQ-PDS. The second privacy model dispenses with this oracle, and only allows the simulator to learn \emph{how many} elements have been stored in the AMQ-PDS. It captures the intuitive idea that the only way an adversary should be able to break privacy is by guessing any of the previously stored elements. As one application, our security definition ensures that, provided the previously stored set has sufficiently high min-entropy, the adversary learns nothing about that set during its attack, even when equipped with a rich API for interacting with the AMQ-PDS, and even if it can compromise the AMQ-PDS' state. We show how these two models are related.

We also identify a property of AMQ-PDS that we call \emph{permutation invariance (PI)}. This property enables us to quickly establish privacy according to our definitions. Informally, PI says that an adversary cannot tell if a random permutation has been applied to the elements before they are added to the AMQ-PDS, or not. We are able to show that PI follows from function-decomposability, so the latter property useful to establishing both adversarial correctness and privacy of an AMQ-PDS.

\paragraph{Analysis of Bloom and Cuckoo filters.} Our final contribution is to use our security models to analyse two concrete AMQ-PDS, namely Bloom and insertion-only Cuckoo filters. Neither is secure according to our definitions in their original formulations, but we show how they can be made so, provably and at low cost, by replacing hash functions used in their constructions with keyed Pseudo-Random Functions (PRFs). Thus we invoke the use of cryptographic objects to achieve security. This is not completely cost-free of course, since now users must manage the needed cryptographic keys. However, PRFs are in general only a little more expensive to compute than hash functions (since a PRF on a given domain $\domain$ can always be constructed by first hashing on $\domain$ to obtain a fixed-length string and then applying a fixed-domain PRF, e.g. one based on a block cipher like AES, and fast, dedicated PRF constructions also exist~\cite{_INDOCRYPT:AumBer12}). Since our results do not require ``indistinguishability'' of the real and ideal worlds, one could potentially design faster and cheaper ``weaker'' PRFs apt for use in \amq. Moreover,~\cite{_C:NaoYog15} have shown that adversarially correct Bloom filters \emph{imply} one-way functions, in a weaker attack model than ours. Thus, using cryptographic approaches appears to be necessary for security of AMQ-PDS. We use our analysis to compute concrete bounds on how much security can be expected from PRF-based Bloom and insertion-only Cuckoo filters. These bounds can be used by practitioners who need to invoke AMQ-PDS in adversarial settings to choose appropriate parameters for their data structures. \fullversion{The code to replicate the analysis can be found at
\href{https://github.com/securing-pds/securing-pds}{\color{black!80}{https://github.com/securing-pds/securing-pds}}.}{}

\subsection{Related Work}
 
Clayton~\emph{et al.}~\cite{_CCS:ClaPatShr19} provided a provable-security treatment of PDS. They used a game-based formalism to analyse the correctness of Bloom filters, counting (Bloom) filters and count-min sketch data structures under adversarial inputs. Their approach required the introduction of a bespoke winning condition for the adversary and only addressed correctness. Our simulation-based approach dispenses with such a winning condition and allows us to consider both correctness and privacy in a single framework, at the cost of slightly worse bounds for their success condition of choice. We will consider their most powerful adversarial setting, where the adversary can access the internal state of the PDS and insert new items after an initial setup.
We provide a more detailed comparison of our correctness results in \cref{sec:correctness-discussion}. 

Naor and Yogev~\cite{_C:NaoYog15,DBLP:journals/talg/NaorY19} analyse adversarial correctness (but not privacy) of Bloom filters in a game-based setting in which the capabilities of the adversary are strictly weaker than ours (the adversary can initialise the filter with a set of items and then make membership queries, but cannot access the internal state of the filter or make further insertions). They explore the relations between secure Bloom filters and one-way functions. They show that pre-processing the inputs to a Bloom filter using a PRP achieves adversarial correctness. This is similar to our function-decomposability result (but limited to Bloom filters and in a weaker attack model). 

Earlier work on using Bloom filters in combination with cryptographic techniques to build searchable encryption (SE) schemes includes~\cite{_EPRINT:Goh03,_EPRINT:BelChe04,_C:BKOS07,DBLP:conf/ccs/RaykovaVBM09,DBLP:journals/tdp/NojimaK09}. 
In particular,~\cite{_EPRINT:Goh03} cleverly combines Bloom filters with PRFs to build secure indexes, 
while
~\cite{DBLP:journals/tdp/NojimaK09} introduces a simulation-based security model for analysing a privacy-enhanced Bloom filter using blind signatures and oblivious PRFs (however, the model in~\cite{DBLP:journals/tdp/NojimaK09} is weaker than ours, since it does not allow any new insertions after initialisation).
SE schemes target outsourced storage of data with added functionality (like searchability); this involves complex two (and multi-) party cryptographic protocols. On the other hand, our work targets making secure versions of PDS that are widely deployed in real-world computing systems. A key difference is that the target in SE schemes is achieving privacy for users against a malicious server, whereas we focus on privacy (and adversarial correctness) of PDS against malicious users. 

Paterson and Raynal~\cite{PatRay22} considered the HyperLogLog (HLL) cardinality estimator~\cite{flajolet2007hyperloglog}, introducing attacks and simulation-based definitions to study the correctness of HLL under adversarial inputs. Our approach is inspired by that of~\cite{PatRay22} but applies to a broad class of AMQ-PDS rather than to a specific cardinality estimator.  
In fact, the functionality of \amq hinders the analysis of their behaviour in adversarial settings; the information revealed by membership queries can be leveraged to make adaptive insertions, resulting in more complicated proofs than in the HLL case.

Prior attack work has focused on Bloom filters~\cite{_DSN:GerKumLau15}, flooding of hash tables~\cite{LiptonN93,_USENIX:CroWal03},  or key collision attacks on hash tables~\cite{VU_903934}. References~\cite{_DSN:GerKumLau15,_CCS:ClaPatShr19} provide partial summaries of prior work on the security of PDS, focused on Bloom filters. Work on privacy properties of Bloom filters can be found in~\cite{_PETS:KKDM11,DBLP:conf/psd/BianchiBL12,DBLP:conf/acsac/GervaisCKG14}. 

A recent line of work~\cite{Ben-EliezerJWY20,DBLP:journals/corr/abs-2011-07471,_C:KMNS21} has studied broad classes of streaming algorithms under adversarial input. The models used are game-based and limit the adversary's capabilities (no corruption of the data structure is permitted). However, they are able to avoid introducing cryptographic assumptions, at the cost of producing schemes that are inefficient in practice. Our approach involves replacing hash functions in existing constructions with keyed PRFs, and so does involve cryptographic building blocks, but remains fully practical. Meanwhile, \cite{HassidimKMMS20} considers adding robustness to streaming algorithms using differential privacy.

\subsection{Paper Organisation}
After preliminaries in \S~\ref{sec:preliminaries}, \S~\ref{sec:ins-only-amq} develops our syntax for AMQ-PDS and formalises the NAI concept that we use to characterise correctness in the non-adversarial setting. Adversarial correctness is addressed in \S~\ref{sec:correctness} and privacy in \S~\ref{sec:privacy}. In \S~\ref{sec:secure-instantiations} we concretely evaluate the security our theorems provide when instantiating specific AMQ-PDS using pseudorandom functions.

\section{Preliminaries}\label{sec:preliminaries}

\paragraph{Notation.}\label{sec:notation} Given an integer $m \in \ZZ_{\ge 1}$, we write $[m]$ to mean the set $\{1,2,\dots,m\}$. We consider all logarithms to be in base 2. If $S$ is a set, we denote by $\mathcal{P}(S)$ the power set of $S$, and $\plist(S)$ the set of all lists with non-repeated elements from $S$. Given a statement $S$, we denote by $[S]$ the function that returns $\top$ if $S$ is true and $\bot$ otherwise. 
Given two sets $\domain$ and $\range$, we define $\Funcs[\domain,\range]$ to be the set of functions from $\domain$ to $\range$. We write $F \getsr \Funcs[\domain,\range]$ to mean that $F$ is a random function $\domain\xrightarrow{F}\range$.
Given a set $\domain$ we define $\Perms[\domain]$ to be the set of permutations over $\domain$. We write $\pi \getsr \Perms[\domain]$ to mean that $\pi$ is a random permutation over $\domain$.
Given a set $S$, we denote the identity function over $S$ as $\identity_S \colon S \rightarrow S$.
If $D$ is a probability distribution, we write $x \getsr D$ to mean that $x$ is sampled according to $D$.
We denote the uniform distribution over a finite set $S$ as $U(S)$.
Given a set $S$ (resp.~a list $L$), we denote by $|S|$ (resp.~$|L|$) the number of elements in $S$ (resp.~$L$).
To indicate a fixed-length list of length $s$ initialised empty, we write $a \gets \bot^s$. We let $\Load{a}$ be the number of set entries of $a$. To insert an entry $x$ into the first unused slot in $a$ we write $a'\,{\gets}\,a \diamond x$ such that $a'\,{=}\,x\,\bot\,\dots\,\bot$ with $s{-}1$ trailing $\bot$s and $\Load{a'} = 1$. A further insertion $a''\,{\gets}\,a' \diamond y$ results in $a''\,{=}\,x\,y\,\bot\,\dots\,\bot$ with $\Load{a''} = 2$, and so on. We refer to the $i$-th entry in a list $a$ as $a[i]$.
In algorithms, we assume that all key-value stores are initialised with value $\bot$ at every index, using the convention that $\bot < n$, $\forall n \in \mathbb{R}$, and we denote it as $\{\}$. Given a key-value store $a$, we refer to the value of the entry with key $k$ as $a[k]$, and we refer to the set of all inserted values as $\Values{a}$. We write variable assignments using $\gets$, unless the value is output by a randomised algorithm, for which we use $\getsr$.

For any randomised algorithm $\textsf{alg}$, we may denote the coins that $\textsf{alg}$ can use as an extra argument $r \in \coinspace$ where $\coinspace$ is the set of possible coins, and write $\text{\textit{output}} \gets \somealgS(\text{input}_1, \text{input}_2, \dots, \text{input}_\ell; r).$
We may also suppress coins whenever it is notationally convenient to do so.
If an algorithm is deterministic, we allow setting $r$ to $\bot$.
We remark that the output of a randomised algorithm can be seen as a random variable over the output space of the algorithm. Unless otherwise specified, we will consider random coins to be sampled uniformly from $\coinspace$, independently from all other inputs and/or state. We may refer to such $r$ as ``freshly sampled''.
We write  $\somealgS^{f_1, \dots f_n}$ whenever $\somealgS$ is given oracle access to functions $f_1$, \dots, $f_n$.

In this work we will consider \amq that can store elements from finite domains $\domain$ by letting $\domain = \cup_{\ell=0}^L \{0,1\}^\ell$ for some large but finite value of $L$, say $L = 2^{64}$. This allows us to give a natural definition of the setting where an \amq is not influenced by an adversary in Def.~\ref{def:nai-state}. Our constructions make use of pseudorandom functions, which we model as truly random functions to which the \amq has oracle access. While a priori multiple random functions may be used by a particular \amq, our results are stated in terms of a single function $F \colon \domain \rightarrow \range$, with $\domain$ being the finite set described above, and $\range$ depending on the \amq algorithms and public parameters. We will see that in practice this will not be a limitation, even in the case of insertion-only Cuckoo filters, originally described as using two hash functions.

\fullversion{
\begin{definition}
	Consider the PRF experiment in Fig.~\ref{fig:PRF}.
	We say a pseudorandom function family $R \colon \Kspace \times \domain \rightarrow \range$ is $(q, t, \varepsilon)$-secure if for all adversaries $\advB$ running in time at most $t$ and making at most $q$ queries to its $\RRO$ oracle in $Exp_R^{PRF}$, we have: 
	\begin{align*}
		\text{Adv}_R^{PRF} (\advB) &\coloneqq \abs{ \pr{b' = 1 | b=0} - \pr{b' = 1 | b=1} } \leq \varepsilon.
	\end{align*}
	We say $\advB$ is a $(q,t)$-PRF adversary.
\end{definition}
\begin{figure}[t]
	\begin{pchstack}[boxed,center,space=1em]
		\procedure[linenumbering,headlinecmd={\vspace{.1em}\hrule\vspace{.3em}}]{$Exp_R^{PRF}(\advB)$}{%
			K \getsr \mathcal{K}; \ F \getsr \textrm{Func}[\domain,\range] \\
			b \getsr \bin;\ b' \getsr \advB^{\textbf{RoR}} \\
			\pcreturn b'
		}
		\procedure[linenumbering,headlinecmd={\vspace{.1em}\hrule\vspace{.3em}}]{Oracle $\RRO(x)$}{%
			\pcif b = 0:\ y \gets R_K(x) \\
			\pcelse:\ y \gets F(x) \\
			\pcreturn y 
		}
	\end{pchstack}
	\caption{The PRF experiment.}
	\label{fig:PRF}
\end{figure}
}{
\begin{definition}
	We say a pseudorandom function family $R \colon \Kspace \times \domain \rightarrow \range$ is $(q, t, \varepsilon)$-secure if any adversary $\advB$ running in time at most $t$ and making at most $q$ queries to either $R_K$ where $K \getsr \mathcal{K}$ or~a random function $F \getsr \textrm{Func}[\domain,\range]$, has distinguishing advantage:
	\begin{align*}
		\text{Adv}_R^{PRF}(\advB) \coloneqq |\Pr[\advB^{R_K(\cdot)} = 1] - \Pr[\advB^{F(\cdot)} = 1]| \leq \varepsilon.
	\end{align*}
	We say $\advB$ is a $(q,t)$-PRF adversary.
\end{definition}
}

\paragraph{Probabilistic Data Structures (PDS)}\label{sec:pds-prelimns} 
Different PDS have been proposed to efficiently provide approximate responses to various kinds of queries. For example, Bloom filters~\cite{bloom1970space} provide answers to approximate membership queries (AMQs); that is, they answer queries of the form $x \overset{\mathrm{?}}{\in} \setS$ for some finite set $\setS$. Count-min sketches~\cite{_JoA:CorMut05} provide approximate frequency estimates for events in a data stream, and HyperLogLog~\cite{flajolet2007hyperloglog} computes the approximate number of distinct elements in a data stream.
For each problem, multiple PDS designs have been proposed, with a tradeoff between their performance and the features they offer. For example, while Bloom filters provide AMQ answers for a set $\setS \subset \domain$ under the assumption that items can only be inserted into $\setS$ but not removed, Cuckoo filters~\cite{_CoNEXT:FAKM14} also allow deletion of items, but at the cost of introducing false negative errors. 

The specific design of a PDS implies certain \emph{consistency rules} that the data structure will satisfy. For example, Bloom filters provide the guarantee of having no false negatives, which means they will never answer ``false'' to a membership query on $x$ when $x \in \setS$. Consistency rules will be important when we prove security properties of PDS; if they are not respected, an adversary in a security game may be able to observe inconsistent behaviour and tell they are interacting with a simulation of the PDS rather than the real PDS.

In this paper, we will study PDS that rely on hash functions to provide good expected behaviour on non-adversarial inputs. In our proofs, we will replace such hash functions with random functions, and eventually replace these with keyed pseudorandom functions. We refer to the latter as \emph{keyed} PDS. We focus on insertion-only \amq, i.e.~those that support insertions and membership queries, but not deletions. 

\section{Insertion-only \amq}\label{sec:ins-only-amq}
We proceed to define the syntax of an insertion-only \amq. We remark that we only consider \amq with deterministic membership checks that do not modify the state of the \amq, since these include the two most popular \amq, namely Bloom and (insertion-only) Cuckoo filters.

We denote public parameters for an \amq by $pp$. 
We denote the state of a PDS $\Pi$ as $\sigma \in \statespace$, where  $\statespace$ denotes the space of possible \emph{states} of $\Pi$.
The set of elements that can be inserted into the \amq is denoted by $\domain$, unless stated otherwise.
We consider a syntax consisting of three algorithms:
\begin{itemize}[leftmargin=1.5em]
	\item The setup algorithm $\sigma \gets \setupS(pp;\, r)$ sets up the initial state of an empty PDS with public parameters $pp$; it will always be called first to initialise the \amq.
	\item The insertion algorithm $(b,\, \sigma') \gets \upS(x,\, \sigma;\, r)$, given an element $x \in \domain$, attempts to insert it into the \amq, and returns a bit $b \in \{\bot,\top\}$ representing whether the insertion was successful ($b = \top)$ or not ($b = \bot$), and the state $\sigma'$ of the \amq after the insertion.
	\item The membership querying algorithm $b \gets \qryS(x,\, \sigma)$, given an element $x \in \domain$, returns a bit $b \in \{\bot,\top\}$ (approximately) answering whether $x$ was previously inserted ($b = \top$) or not ($b = \bot$) into the \amq.
\end{itemize}
We note that $\qryS$ does not change the state of the \amq, and hence does not output a new $\sigma'$ value. 

In the case of Bloom and Cuckoo filters, $\qryS$ calls may return a false positive result with a certain probability. That is, we may have $\top\,{\gets}\,\qryS(x,\, \sigma)$ even though no call $\upS(x,\, \sigma';\, r)$ was made post setup and prior to the membership query.
We refer to the probability $\Pr[\top\,{\gets}\,\qryS(x, \sigma) \mid x\ \text{was not inserted into}\ \Pi]$ as the \emph{false positive probability} 
of an \amq $\Pi$.
This probability depends on $\sigma$, which could be generated as the result of a sequence of adversarially chosen insertions, or which could just arise through insertions made by honest users.

\paragraph{AMQ-PDS in the honest setting.}
To argue about how the state of an \amq under adversarial queries may deviate from expected in a ``honest'' or ``non-adversarial'' setting, we first define what we expect to see in an honest setting.
In general, non-adversarial input distributions are application-dependent. For example, honest users may sample inputs from distributions with different entropy in different applications. Similarly, different applications may imply that domain elements may be likely to be inserted multiple times, or just once. 
This suggests that proving results for arbitrary AMQ-PDS may require defining the distributions produced by non-adversarial actors, in order to quantify the expected performance of the data structure under honest inputs.

We overcome this issue by noticing two properties that an AMQ-PDS can satisfy and that suffice to argue about their performance under non-adversarial inputs in an application-independent manner: function-decomposability and reinsertion invariance.

\begin{definition}[Function-decomposability]\label{def:f-decomposability}
	Let $\Pi$ be an insertion-only $\amq$ and let $F \getsr \Funcs[\domain,\range]$ with $\range \subset \domain$ be a random function to which $\Pi$ has oracle access. Let $\identity_\range$ be the identity function over $\range$. We say that $\Pi$ is $F$-decomposable if we can write 
		\begin{align*}
			\upS^F(x,\, \sigma;\, r) &= \upS^{\identity_\range}(F(x),\, \sigma;\, r)\quad \forall x \in \domain,\, \sigma \in \Sigma,\, r \in \coinspace, \\
			\qryS^F(x,\, \sigma) &= \qryS^{\identity_\range}(F(x),\, \sigma)\quad \forall x \in \domain,\, \sigma \in \Sigma,
		\end{align*}
		where $\upS^{\identity_\range}$ and $\qryS^{\identity_\range}$ cannot internally evaluate $F$ due to not having oracle access to it and $F$ being truly random.
	
	Function-decomposability also applies to \amq with oracle access to multiple functions. For example, if $\Pi$ had oracle access to $t$ functions $F_1, \dots, F_t$ and was $F_1$-decomposable, we would write 
		\begin{align*}
			\upS^{F_1, \dots, F_t}(x,\, \sigma;\, r) &= \upS^{\identity_\range, F_2, \dots, F_t}(F_1(x),\, \sigma;\, r)\ \forall x, \sigma, r, \\
			\qryS^{F_1, \dots, F_t}(x,\, \sigma) &= \qryS^{\identity_\range,F_2, \dots, F_t}(F_1(x),\, \sigma)\ \forall x, \sigma.
	\end{align*}
\end{definition}

Function-decomposability has the effect of ``erasing'' any structure on the input domain $\domain$. Essentially, function-decomposable \amq replace any input element $x \in \domain$ with a fixed element $y \in \range$ sampled uniformly at random, and proceed to do any further processing on $y$. This allows us to disregard the input distribution on $\domain$ and instead think of input elements sampled uniformly at random from $\range$.

In \S~\ref{sec:correctness} and \S~\ref{sec:privacy}, we will prove theoretical guarantees for \amq instantiated using pseudorandom functions (PRFs). After a switch from PRFs to truly random functions, we will be able to assume function-decomposability.

\begin{definition}[Reinsertion invariance]\label{def:reinsertion-invariance}
	Consider an \amq $\Pi$. We say $\Pi$ is \emph{reinsertion invariant} if for all $x \in \domain,\ \sigma \in \statespace$ such that $\top \gets \qryS(x,\,\sigma)$, we have $(b, \sigma') \gets \upS(x, \sigma; r) \implies \sigma = \sigma'\ \forall r \in \coinspace.$ Informally, if $x$ appears to have been inserted, then further insertions of $x$ will not cause the state $\sigma$ of $\Pi$ to change.
\end{definition}

This property is shared by Bloom and insertion-only Cuckoo filters, and it appears natural since insertion-only \amq aim to represent sets and not multisets; hence repeated insertions need not change the state.

The state of a function-decomposable, reinsertion invariant AMQ-PDS 
is not affected by any structure on the input elements sampled from $\domain$, nor by elements being reinserted more than once. Rather, it only depends on the number of distinct elements in the data structure. This allows us to define the notion of an $(n, \varepsilon)$-non-adversarially-influenced state as follows.

\begin{definition}[$(n, \varepsilon)$-NAI]\label{def:nai-state}
	Let $\varepsilon > 0$, and let $n$ be a non-negative integer.
	Let $\Pi$ be an \amq with public parameters $pp$ and state space $\Sigma$, such that its $\upS$ algorithm makes use of oracle access to functions $F_1,\dots,F_t$.
	Let $\somealgS$ be a randomised algorithm outputting values in $\statespace$.
	Let $\sigma$ and $\sigma^{(n)}$ be random variables representing respectively the outputs of $\somealgS$ and of the randomised algorithm $n$-$\naigen$ described in Fig.~\ref{fig:NAI-gen}.
	We say that $\somealgS$ outputs an $(n, \varepsilon)$-non-adversarially-influenced state (denoted by $(n, \varepsilon)$-NAI)
	if $\sigma$ is $\varepsilon$-statistically close to $\sigma^{(n)}$.
	\begin{figure}[t]
		\centering
		\begin{pcvstack}[boxed]
			\procedure[linenumbering,headlinecmd={\vspace{.1em}\hrule\vspace{.3em}}]{$n$-$\naigen^{F_1,\dots,F_t}(pp)$}{
				\sigma^{(0)} \getsr \setupS(pp)\\
				{[}x_1, \dots, x_n{]} \getsr \{S \in \plist(\domain) \mid |S| = n\}\label{line:sample-in-NAI-gen}\\
				\pcfor j=1, \dots, n: \t (b, \sigma^{(j)}) \getsr \upS^{F_1,\dots,F_t}(x_j, \sigma^{(j-1)})\label{line:up-in-NAI-gen}\\
				\pcreturn \sigma^{(n)}
			}
		\end{pcvstack}
		\caption{Algorithm returning non-adversarially-influenced (NAI) state.}\label{fig:NAI-gen}
	\end{figure}
\end{definition}

In Def.~\ref{def:nai-state}, the $n$-$\naigen$ algorithm imitates the behaviour of an honest user inserting distinct elements into the PDS.\footnote{Note that in the case of Cuckoo filters, not all insertions may succeed; see discussion later in this section.} The distribution output by $n$-$\naigen$ then becomes the benchmark for how close to honestly-generated (or non-adversarially-influenced, NAI) the state of an \amq is. 
Now we are ready to introduce the NAI false positive probability for an \amq.

\begin{definition}[NAI false positive probability]\label{def:nai-fpp}
	Let $\Pi$ be a function-decomposable reinsertion invariant \amq with public parameters $pp$, using functions $F_1$,\dots,$F_t$ sampled respectively from distributions $D_{F_1}$,\dots,$D_{F_t}$ to instantiate its functionality. 
	Let $n$ be a non-negative integer.
	Define the \emph{NAI false positive probability after $n$ distinct insertions} as
	\begin{align*}
		\pfp{n}{\coloneqq}\Pr\left[\begin{matrix}F_1{\getsr}D_{F_1}, {\dots}, F_t{\getsr}D_{F_t},\\ \sigma \getsr \text{$n$-$\naigen(pp)$},\\x \getsr \domain\setminus V\\ \end{matrix}\hspace{-.3em}: \top{\gets}\qry^{F_1, \dots, F_t}(x,\sigma)\right],
	\end{align*}
	where $V$ is the list $[x_1,\dots,x_n]$ sampled on line~\ref{line:sample-in-NAI-gen} of $n$-$\naigen(pp)$.
\end{definition}
\noindent
Definition \ref{def:nai-fpp} captures the probability that a non-adversarial user experiences a false positive membership query result after inserting $n$ distinct elements into $\Pi$.

\paragraph{Bloom filters.}
The most popular \amq is the Bloom filter~\cite{bloom1970space}. Informally, a Bloom filter consists of a bitstring $\sigma$ of length $m$ initially set to $0^m$, and a family of $k$ independent hash functions $H_i: \bits^* \rightarrow [m]$, for $i\in[k]$. An element $x$ is inserted into the filter by setting bit $H_i(x)$ of $\sigma$ to $1$, for every $i \in [k]$. A membership query on $x$ is carried out by checking if the $k$ bits $H_i(x), i \in [k]$ are set to $1$ in $\sigma$. Bloom filters have no false negatives, i.e.\ if $x$ has been inserted then a membership query on $x$ always returns $\top$. However, Bloom filters can have false positives, i.e.\ a membership query on $x$ can return  $\top$ even when $x$ has not been inserted, 
due to the potential for collisions in the hash functions $H_i$. 

In this work, we bundle the $k$ hash functions $H_i$ into a single function $F \colon \domain \rightarrow \range = [m]^k$. We will later instantiate $F$ with a pseudorandom function, rather than a fixed hash function, to achieve our security notions. This bundling is convenient for making our formal description of a Bloom filter (that follows) fit with our general \amq syntax.
We now formally define Bloom filters.

\begin{definition}\label{def:bitmap}
	Let $B_{m,k} : [m]^k \rightarrow \bits^{m}$ be the map that on input $\vec{x} \in [m]^k$ returns an $m$-bit ``bitmap'' where all bits are zero except those at the indices $x_i$, $i \in [k]$. 
\end{definition}

\begin{definition}\label{def:bloom-filter} 
Let $m,\,k$ be positive integers. We define an $(m,k)$-Bloom filter to be the \amq with algorithms defined  in Fig.~\ref{fig:bloom-filter-api}, with $pp = (m,k)$, and $F \colon \domain \rightarrow \range \equiv [m]^k$. 
\begin{figure}[t]
	\Wider[4em]{
	\centering
	\begin{pcvstack}[boxed,space=0.5em]
		\begin{pchstack}
		\procedure[linenumbering,headlinecmd={\vspace{.1em}\hrule\vspace{.3em}}]{$\setupS(pp)$}{%
			m,\,k \gets pp;\ \sigma \gets 0^m\\
			\pcreturn \sigma
		}
		\procedure[linenumbering,headlinecmd={\vspace{.1em}\hrule\vspace{.3em}}]{$\upS^F(x, \sigma)$}{%
			\sigma' \gets \sigma \lor B_{m,k}(F(x))\\
			\pcreturn \top, \sigma'
		}
		\procedure[linenumbering,headlinecmd={\vspace{.1em}\hrule\vspace{.3em}}]{${\qryS^F}(x, \sigma)$}{%
			b \gets B_{m,k}(F(x)) \\
			\pcreturn [b = \sigma \wedge b]
		}
		\end{pchstack}	
	\end{pcvstack}
	}
	\caption{\amq syntax instantiation for the Bloom filter.}
	\label{fig:bloom-filter-api}
\end{figure}
\end{definition}

We now recall from the literature an estimate for and an upper bound on the NAI false positive probability for Bloom filters.
\begin{lemma}(\cite{bloom1970space},\cite[Theorem 4.3]{_SIGMETRICS:GoeGup10})
\label{lem:bf-false-positive-probability} 
Let $\Pi$ be an $(m,k)$-Bloom filter using a random function $F \colon \domain \rightarrow [m]^k$. Define $\opfp{n} \coloneqq \left(1 - e^{-\frac{(n+0.5)k}{m-1}}\right)^k$.
Then for any $n$, 1) \(\pfp{n} \xrightarrow{\: m \to \infty \: } (1 - e^{-nk/m})^k\), and 2) \(\opfp{n} \ge \pfp{n}.\)
\end{lemma}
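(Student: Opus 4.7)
The plan is to reduce the NAI false positive probability of Def.~\ref{def:nai-fpp} to the classical balls-into-bins model of a Bloom filter, at which point both claims follow from known results. Inspecting Fig.~\ref{fig:bloom-filter-api}, an $(m,k)$-Bloom filter is $F$-decomposable in the sense of Def.~\ref{def:f-decomposability}: both $\upS^F$ and $\qryS^F$ touch their input $x$ only through $F(x)$. Since $F \getsr \Funcs[\domain, [m]^k]$ is truly random and the elements $x_1, \dots, x_n$ produced on line~\ref{line:sample-in-NAI-gen} of $n$-$\naigen$ are pairwise distinct, the vectors $F(x_1), \dots, F(x_n) \in [m]^k$ are i.i.d.\ uniform. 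The query input $x \in \domain \setminus V$ then makes $F(x)$ uniform on $[m]^k$ and independent of the previously evaluated hashes. Consequently the distribution of $\sigma$ after $n$ NAI insertions is exactly that obtained by throwing $nk$ i.i.d.\ uniform balls into $m$ bins, and $\qryS^F(x, \sigma)$ returns $\top$ iff $k$ further i.i.d.\ uniform bins are all occupied---this is precisely the textbook Bloom filter setting.

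For part (1), I would compute the marginal probability that a specific bit is unset after $n$ insertions, namely $(1-1/m)^{nk}$, which tends to $e^{-nk/m}$ as $m \to \infty$. In the same limit the bit-occupancy indicators decouple (their pairwise covariances vanish), so the joint probability that any fixed $k$ query positions are all set factorises and gives $\pfp{n} \to (1 - e^{-nk/m})^k$, the classical approximation of~\cite{bloom1970space}. For part (2), I would invoke~\cite[Theorem 4.3]{_SIGMETRICS:GoeGup10} directly: their proof works from the exact expression $\pfp{n} = \E[(X/m)^k]$, where $X$ is the random number of set bits, and tightens the naive independence heuristic via the inequality $(1 - 1/m)^{nk} \le e^{-(n+0.5)k/(m-1)}$ to obtain the stated upper bound $\opfp{n} \ge \pfp{n}$.

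The only non-routine step is verifying that the NAI distribution over inputs (distinct elements from the structured finite domain $\domain$) induces the same distribution on bit patterns as the i.i.d.-uniform-hash model on which the cited analyses rely; function-decomposability combined with distinctness of the $x_i$ settles this in one line. After that, both claims are restatements of well-established Bloom filter facts, so I expect no combinatorial obstacle beyond correctly translating our NAI definition into the classical framework.
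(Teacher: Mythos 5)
Your proposal is correct and matches the paper's treatment: the paper does not prove this lemma but imports it from \cite{bloom1970space} and \cite[Theorem 4.3]{_SIGMETRICS:GoeGup10}, exactly as you do. Your one-line bridge — that $F$-decomposability with a truly random $F$ and distinct NAI inputs makes the insertion/query hashes i.i.d.\ uniform on $[m]^k$, so $\pfp{n}$ coincides with the classical Bloom filter false positive probability — is the right (and only) observation needed to justify the citation.
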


\paragraph{Insertion-only Cuckoo filters.}
Proposed as an improvement over Bloom filters, Cuckoo filters~\cite{_CoNEXT:FAKM14} allow deletion of elements, at the cost of potentially introducing false negatives if a user attempts to delete an element that was not previously inserted.
In ~\cite[\S3.1]{_CoNEXT:FAKM14}, the authors also consider the case of \emph{insertion-only Cuckoo filters}, a variant that does not support deletions.
In their original description, insertion-only Cuckoo filters use two hash functions $H_I\colon \domain \rightarrow \{0,1\}^{\lambda_I}$ and $H_T\colon \domain \rightarrow \{0,1\}^{\lambda_T}$. Let $\sigma$ consist of a collection $(\sigma_i)_i$ of $m = 2^{\lambda_I}$ fixed-length lists, or ``buckets'', indexed by $i \in [m]$ and each containing $s$ slots, together with a stash $\sigma_{evic}$ containing one more slot.\footnote{We note that while no restrictions on $m$ are mentioned explicitly in~\cite{_CoNEXT:FAKM14}, it was recently pointed out~\cite{_CoNEXT:BosTraSta20} that $m$ should be a power of two in order to avoid the potential introduction of false negative results.} 
A detailed description of the internals of Cuckoo filters can be found in\fullversion{~\cref{app:cf-algorithms}}{ the full version}.
In contrast to Bloom filters, two hash functions $H_T$ and $H_I$ are used at different points in the $\upS$ and $\qryS$ algorithms of Cuckoo filters. Hence, instead of bundling them into a single function $F$, we will give $\upS$ and $\qryS$ access to both functions.
In \S~\ref{sec:cf-correctness-variants}, we will discuss different approaches to replacing these with PRFs in order to achieve our security notions.

\begin{definition}\label{def:cuckoo-filter} Let $pp = (s,\,\lambda_I,\,\lambda_T,\,num)$ be a tuple of positive integers. We define an $(s,\lambda_I,\lambda_T,num)$-Cuckoo filter to be the \amq with algorithms defined in \fullversion{Fig.~\ref{fig:cuckoo-filter-api} of~\cref{app:cf-algorithms}}{the full version}, making use of hash functions $H_T {\colon}\,\domain\,{\rightarrow}\,\{0,1\}^{\lambda_T}$ and $H_I {\colon}\,\domain\,{\rightarrow}\,\{0,1\}^{\lambda_I}$.
\end{definition}

In the original definition of~\cite{_CoNEXT:FAKM14}, the false positive probability of a Cuckoo filter with all its buckets full is computed assuming $H_T$ is a random function, and is given by $1 - (1 - {2^{-\lambda_T}})^{2s+1}$. In~\cite{_CoNEXT:FAKM14}, this value is shown to be an upper bound on the false positive probability of a Cuckoo filter containing $n \le m \cdot s$ elements.

\begin{lemma}(\cite{_CoNEXT:FAKM14})\label{lem:cf-false-positive-probability-upperbound}
	Let $\Pi$ be a $(s,\lambda_I,\lambda_T,num)$-Cuckoo filter and let $H_T\colon \domain \rightarrow \{0,1\}^{\lambda_T}$ be a random function. For any non-negative integer $n$,
	${\opfp{n}} \coloneqq 1 - \left(1 - {2^{-\lambda_T}}\right)^{2s+1} \ge {\pfpsym}(FP \mid n).$
\end{lemma}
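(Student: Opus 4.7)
The plan is to directly bound the number of slots inspected by $\qryS$ and the chance that any of their stored fingerprints equals the query fingerprint $t := H_T(x)$. On input $x \notin V$ the algorithm checks the $s$ slots of the primary bucket $\sigma_{H_I(x)}$, the $s$ slots of the alternative bucket $\sigma_{H_I(x)\oplus h(t)}$, and the single stash slot, for at most $2s+1$ inspected slots in total.

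Since $H_T$ is a truly random function and $x \notin V$, the value $t$ is uniformly distributed over $\{0,1\}^{\lambda_T}$ and independent of $H_T|_V$, which determines every fingerprint stored in $\sigma$. Each occupied inspected slot contains $H_T(y)$ for a distinct previously inserted $y \in V$; by the random-function property, these stored fingerprints are i.i.d.\ uniform and jointly independent of $t$. Writing $\ell \le 2s+1$ for the number of occupied inspected slots, with fingerprints $f_1,\dots,f_\ell$, and conditioning on $t$, independence gives
\begin{align*}
\Pr[f_j \ne t\ \forall j \mid t] = (1-2^{-\lambda_T})^\ell \ge (1-2^{-\lambda_T})^{2s+1}.
\end{align*}
Averaging over $t$ and the remaining randomness in Def.~\ref{def:nai-fpp} preserves this inequality, yielding $\pfpsym(FP \mid n) \le 1 - (1-2^{-\lambda_T})^{2s+1} = \opfp{n}$.

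The main obstacle I anticipate is that $t$ simultaneously indexes the alternative bucket $\sigma_{H_I(x)\oplus h(t)}$, so the set of inspected slots---and hence the identities of the $y_j$'s---is itself a function of $t$. In principle this could correlate $t$ with the $f_j$'s. It is resolved by observing that the inequality $(1-2^{-\lambda_T})^\ell \ge (1-2^{-\lambda_T})^{2s+1}$ holds pointwise for every realisation of the insertion history and every $\ell \in \{0,1,\dots,2s+1\}$, so the bound survives after taking outer expectations. This recovers the classical Cuckoo filter analysis of~\cite{_CoNEXT:FAKM14}, in which the $2s+1$ inspected slot positions are implicitly treated as harbouring independent uniform fingerprints.
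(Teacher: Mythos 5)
Your proposal follows essentially the same route as the source the paper relies on: the paper gives no independent proof of this lemma (it is imported from~\cite{_CoNEXT:FAKM14}), and the analysis there --- which the paper reuses verbatim in its proof of Lemma~\ref{clm:cf-variants-fp-bound} --- is exactly your count of at most $2s+1$ inspected slots (two buckets of $s$ slots plus the stash $\sigma_{evic}$) whose stored fingerprints are modelled as uniform and independent of the query tag $H_T(x)$, giving $1-\left(1-2^{-\lambda_T}\right)^{2s+1}$. The i.i.d./independence step you assert (and flag at the end) is the same idealisation made in~\cite{_CoNEXT:FAKM14} --- neither the paper nor the original rigorously handles the correlations introduced by bucket placement via $H_I$ and duplicate suppression in $\upS$ --- so your argument matches the paper's treatment in both approach and level of rigour.
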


While Bloom and Cuckoo filters instantiate the insertion-only \amq syntax described above, they also satisfy some additional properties that we refer to as ``consistency rules'', captured below.

\begin{definition}[Insertion-only \amq consistency rules]\label{def:bf-consistency-rules}
	Consider an \amq $\Pi$. We say $\Pi$ has:
	\begin{itemize}[leftmargin=1.0em]
		\item \emph{Element permanence} if for all $x \in \domain,\ \sigma \in \statespace$ such that $\top \gets \qryS(x,\,\sigma),$ and for any sequence of insertions resulting in a later state $\sigma'$, \quad $b \gets \qryS(x, \sigma') \implies b = \top.$
		\item \emph{Permanent disabling} if given $\sigma \in \statespace$ such that there exists $x \in \domain,\, r \in \coinspace$ where $(b,\,\overline{\sigma}) \gets \upS(x, \sigma; r)$ and $b = \bot$, then $\overline{\sigma} = \sigma$ and for any $x' \in \domain$, $r' \in \coinspace$, \quad $(b', \sigma') \gets \upS(x', \sigma; r') \Rightarrow b' = \bot \text{ and } \sigma' = \sigma.$
		\item \emph{Non-decreasing membership probability} if for all $\sigma \in \statespace$, $x,\, y \in \domain$, $r \in \coinspace$, \quad $(b, \sigma') \gets \upS(x, \sigma; r) \Rightarrow \Pr[\top \gets \qryS(y, \sigma)] \le \Pr[\top \gets \qryS(y, \sigma')].$
	\end{itemize}
\end{definition}

\section{Adversarial Correctness}\label{sec:correctness}

In this section, we develop simulation-based security definitions with which we then analyse the adversarial correctness of \amq. We derive bounds on the correctness of insertion-only \amq that are function-decomposable, reinsertion invariant, and obey the consistency rules in Def.~\ref{def:bf-consistency-rules}. Finally, we apply our results to provide correctness guarantees for PRF-instantiated Bloom filters and a straightforward variant of insertion-only Cuckoo filters. Amongst other things, these guarantees limit an adversary's ability to carry out pollution attacks~\cite{_DSN:GerKumLau15} and target-set coverage attacks~\cite{_CCS:ClaPatShr19} on Bloom filters. Both attacks involve an adversary manipulating the false positive probability of a Bloom filter, and our analysis shows that this is not possible (up to some security bounds that we make concrete in \S~\ref{sec:secure-instantiations}).

\paragraph{Settings.}\label{sec:deployment-settings}

Our model  (see Fig.~\ref{game:r-o-i}) considers an adversary $\advA$ interacting with an \amq $\Pi$ in two stages. In the first stage, the data structure is initialised empty, and $\advA$ provides a finite set of elements to insert into it through the $\repO$ oracle, which can be called only once. 
In the second stage, the adversary is given access to three other oracles: $\qryO$, responding to queries of the form ``has element $x$ been inserted into $\Pi$?'', $\upO$, inserting an element provided by $\advA$ into $\Pi$, and $\revO$, returning $\Pi$'s current state.

While both the $\repO$ and $\upO$ oracles allow insertion of elements, defining these as separate oracles allows us to treat initialisation and subsequent insertions/queries as distinct stages. Then, by disabling access to the $\upO$ oracle, we can define an \emph{immutable} setting in which no insertions are allowed after initialisation, as in \cite{_CCS:ClaPatShr19}. Further, this separation of $\repO$ and $\upO$ oracles will come in useful in our treatment of privacy in \S~\ref{sec:privacy}.

\subsection{Notions of Correctness}\label{sec:correctness-notions}

Our analysis of the adversarial correctness of \amq uses a simulation-based approach.
We start with a high-level explanation of our approach in order to provide some intuition. 

In our security framework, the adversary $\advA$ plays in either the real or ideal world. In the real world, it interacts with a keyed \amq $\Pi$, where it has access to oracles that allow it to insert elements into the data structure, as well as to make membership queries for elements of its choice.
In the ideal world, it interacts with a simulator $\simS$, constructed so as to provide an NAI view of $\Pi$ to $\advA$.
At the end of its execution, $\advA$ produces some output, which is given to a distinguisher $\dD$. $\advA$'s output is arbitrary -- for example, it could be the state of the AMQ-PDS obtained by making an appropriate oracle query. $\dD$'s task is to compute which world $\advA$ was operating in, based on its output. 
By bounding $\dD$'s ability to distinguish between the ideal and real worlds, we can quantify how much more harm $\advA$ can do in the real world (where it can make adaptive insertions and membership queries) compared to the ideal world (where everything is handled by $\simS$ in an NAI manner). 

We begin by defining the $\gameRoI$ game in Fig.~\ref{game:r-o-i}. We will use $\Real$ and $\Ideal$ to denote the real $(d = 0)$ and ideal $(d = 1)$ versions of $\gameRoI$, respectively. The game's output is the bit $d'$ generated by a distinguisher $\dD$ operating on $\advA$'s output (which is an arbitrary string whose length is incorporated into the running time of $\advA$).
Throughout the paper, if an oracle $\textbf{O}$ is not directly specified, we assume it is defined as in Fig.~\ref{game:r-o-i}.

\begin{definition}
Let $\Pi$ be an insertion-only \amq, with public parameters $pp$, and let $R_K$ be a keyed function family. 
We say $\Pi$ is $(q_u, q_t, q_v,t_a, t_s, t_d, \varepsilon)$-adversarially correct if, 
for all adversaries $\advA$ running in time at most $t_a$ and making at most a single query to $\repO$ and $q_u, q_t, q_v$ queries to oracles $\upO,\qryO,\revO$ respectively in the $\gameRoI$ game (Fig.~\ref{game:r-o-i}) with a simulator $\simS$ \emph{that provides an NAI view of $\Pi$ to $\advA$} and runs in time at most $t_s$,
and for all distinguishers $\dD$ running in time at most $t_d$, we have: 
\begin{align*}
	\text{Adv}_{\Pi, \mathcal{A, S}}^{RoI} (\dD){\coloneqq} \big| \, \pr{\Real(\advA, \dD){=}1}{-}\pr{\Ideal(\advA, \dD,\simS){=}1}\big| {\le} \varepsilon.
\end{align*}
\end{definition}
\begin{figure}[t]
	\Wider[4em]{
		\centering
		\begin{pchstack}[boxed,space=.0em]
			\begin{pcvstack}
				\procedure[linenumbering,space=1em,headlinecmd={\vspace{.1em}\hrule\vspace{.3em}}]{\gameRoI$(\advA, \simS, \dD, pp)$}{%
					d \getsr \{0, 1\} \\ 
					\pcif d = 0 \t \pcmycomment{\Real} \\
					\ \ K \getsr \mathcal{K}; F \gets R_K \\
					\ \ \tinit \gets \bot \\
					\ \ \sigma \gets \setupS(pp)\\
					\ \ out \getsr \advA^{\repO, \upO, \qryO, \revO} \\ 
					\pcelse \t \pcmycomment{\Ideal} \\
					\ \ out \getsr \simS(\advA, pp) \\
					\pcreturn d' \getsr \dD(out)
				}
			\end{pcvstack}
			\begin{pcvstack}[space=0.05em]
				\procedure[linenumbering,headlinecmd={\vspace{.1em}\hrule\vspace{.3em}}]{Oracle $\repO(\setV)$}{%
					\pcif \tinit = \top: \pcreturn \bot \\
					\tinit \gets \top \\
					\pcfor x \in V \\
					\ \ (b, \sigma) \getsr \upS^F(x, \sigma)\\
					\pcreturn \top
				}
				\procedure[linenumbering,headlinecmd={\vspace{.1em}\hrule\vspace{.3em}}]{Oracle $\upO(x)$}{%
					\pcif \tinit = \bot: \pcreturn \bot \\
					(b, \sigma) \getsr \upS^F(x, \sigma)\\
					\pcreturn b
				}
			\end{pcvstack}
			\begin{pcvstack}
				\procedure[linenumbering,headlinecmd={\vspace{.1em}\hrule\vspace{.3em}}]{Oracle $\qryO(x)$}{%
					\pcif \tinit = \bot\\
					\ \ \pcreturn \bot \\
					\pcreturn \qryS^F(x, \sigma)
				}
				\procedure[linenumbering,space=1em,headlinecmd={\vspace{.1em}\hrule\vspace{.3em}}]{Oracle $\revO()$}{%
					\pcreturn \sigma
				}
			\end{pcvstack}
		\end{pchstack}
	}
	\caption{Correctness game for \amq $\Pi$.}
	\label{game:r-o-i}
\end{figure}

\begin{remark}
	While we explicitly only cover the case where the adversary calls the $\repO$ oracle once, a hybrid argument could be used to derive a bound for the case with $q_r > 1$ $\repO$ queries, as long as the function $F$ is resampled between $\repO$ calls. This would come at the cost of introducing a factor $q_r$ to our bounds.
\end{remark}

\begin{remark}\label{rem:whyAdvCorrect}
We explain why our definition captures adversarial correctness. Consider an arbitrary adversary $\advA$ that in the course of its execution makes $\upO(\cdot)$ queries on adversarially selected inputs $x_1,\ldots,x_n$. These are (potentially) interspersed with other types of query permitted to $\advA$. 
Consider an extension of $\advA$, named $\advA^*$, that behaves exactly as $\advA$ does, but which makes a final membership query $\qryO(x)$ with $x \getsr \domain\setminus \{x_1,\ldots,x_n\}$. 
Suppose the output of $\advA^*$ is the result of that final query (a binary value), and $\dD$'s output is identical to that of $\advA^*$. 
Then it is easy to see that $\pr{\Real(\advA^*, \dD)}$ is exactly the \emph{adversarial} false positive probability of $\Pi$ produced by $\advA$, while $\pr{\Ideal(\advA^*, \dD,\simS)}$ is the NAI false positive probability. The definition, if satisfied, then says that these two probabilities must be within $\varepsilon$ of each other. Even if $\varepsilon$ cannot be shown to be very small for some specific AMQ-PDS, we may still obtain a useful result about adversarial false positive probability in practice.
The above argument involves an arbitrary $\advA$ and a specific $\dD$. The reader may imagine that other choices of $(\advA,\dD)$ may capture additional correctness properties. See \S~\ref{sec:correctness-discussion} for further discussion.
 \end{remark}
 
The details of how the simulator is constructed (and how to bound the distinguishing advantage) depend on the data structure under consideration. Recall that we only consider \amq that support insertions and membership queries, but not deletions.
In Fig.~\ref{fig:correctness-simulator}, we give a simulator $\simS$ that replicates the behaviour of \amq that satisfy the consistency rules from Def.~\ref{def:bf-consistency-rules}, are function-decomposable (see Def.~\ref{def:f-decomposability}) and reinsertion invariant (see Def.~\ref{def:reinsertion-invariance}).
By inspection, the runtime of $\simS$ is not significantly higher than that of the underlying \amq.

\begin{figure}[t]
	\Wider[30em]{
	\centering
	\begin{pcvstack}[boxed,space=.2em]
		\begin{pchstack}[space=.0em]
				\begin{pcvstack}[space=.2em]
					\procedure[linenumbering,space=auto,headlinecmd={\vspace{.1em}\hrule\vspace{.3em}}]{Simulator $\simS(\advA, pp)$}{%
						F \getsr \Funcs[\domain,\range] \label{line:F-rnd-ideal} \\
						\tinit \gets \bot;\ \tupenabled \gets \top \\
						\sigma \gets \setupS(pp) \\
						\tinserted,\, \tFPlist,\, \tCALQ \gets \{\},\, \{\},\, \{\} \\ 
						i \gets 0\ \pcmycomment{Qry counter} \\ 
						ctr \gets 0\ \pcmycomment{Distinct insertions} \\ 
						out \getsr \advA^{\textbf{RepSim, UpSim, QrySim, RevealSim}} \\
						\pcreturn out 
					}
					\procedure[linenumbering,headlinecmd={\vspace{.1em}\hrule\vspace{.3em}}]{Oracle $\repsimO(\setV)$}{%
						\pcif \tinit = \top: \pcreturn \bot \\
						\tinit \gets \top \\
						\pcfor x \in V: b \gets \upsimO(x) \\
						\pcreturn \top
					}
				\end{pcvstack}
				\begin{pcvstack}
					\procedure[linenumbering,space=1em,headlinecmd={\vspace{.1em}\hrule\vspace{.3em}}]{Oracle $\upsimO(x)$}{%
						\pcif \tinit = \bot: \pcreturn \bot \\
						\pcif \tinserted[x] = \bot\\
						\ \ (b, \sigma) \getsr \upS^F(x, \sigma) \\
						\ \ \tupenabled \gets b \\
						\ \ \pcif b = \top \\
						\ \ \t \tinserted[x] \gets \top \\
						\ \ \t ctr \gets ctr + 1 \\ 
						\ \ \pcreturn b \\
						\pcelse \\
						\ \ \pcreturn \tupenabled \label{line:reinsertion-invariant-return}
					}
						\procedure[linenumbering,space=1em,headlinecmd={\vspace{.1em}\hrule\vspace{.3em}}]{Oracle $\textbf{RevealSim}()$}{%
						\pcreturn \sigma
					}
				\end{pcvstack}
			\end{pchstack}
		\procedure[linenumbering,headlinecmd={\vspace{.1em}\hrule\vspace{.3em}}]{Oracle $\qrysimO(x)$}{%
			\pcif \tinit = \bot: \pcreturn \bot \\
			i \gets i + 1 \\ 
			\pcmycomment{Element was inserted or determined a false positive} \\
			\pcif \tinserted[x] = \top \text{ or } \tFPlist[x] = \top \label{line:elem-permanence-1}\\
			\t \pcreturn \top \label{line:elem-permanence-2}\\
			\pcmycomment{Element was not inserted and not false positive} \\
			\pcif \tCALQ[x] = ctr\ \pcmycomment{If no changes since last query of x} \label{line:qry-determinism-1}\\
			\t \pcreturn \bot \label{line:qry-determinism-2}\\
			\pcmycomment{Response needs to be (re)computed} \\
			\tCALQ[x] \gets ctr \\ 
			a_i^{\Ideal} \getsr \qry^{\identity_\range}(Y \getsr \range, \sigma) \label{line:bernoulli} \\ 
			a_i^{G^*} \gets \qryS^F(x, \sigma)\label{line:aG} \label{line:a-G*} \\
			\label{line:corr-E*} a \gets a_i^{\Ideal}\ \pcbox{a \gets a_i^{G^*}}\ \pcmycomment{\Ideal}\ \textcolor{Gray}{\pcbox{G^*}} \\
			\pcif a = \top: 
			\tFPlist[x] \gets \top \\
			\pcreturn a
		}
	\end{pcvstack}
	}
	\caption{Simulator $\simS$ used in Theorem~\ref{thm:general-correctness}. Lines \ref{line:a-G*}-\ref{line:corr-E*} corresponding to intermediate game $G^*$ are used in our proof.}
	\label{fig:correctness-simulator}
\end{figure}

We now proceed to state and prove our correctness theorem.
\begin{restatable}{theorem}{correctnessthm}\label{thm:general-correctness}
Let $q_u,\, q_t, \, q_v$ be non-negative integers, and let $t_a,\, t_d > 0$. 
Let $F\colon\domain\rightarrow\range$.
Let $\Pi$ be an insertion-only \amq with public parameters $pp$ and oracle access to $F$, such that $\Pi$ satisfies the consistency rules from Def.~\ref{def:bf-consistency-rules}, $F$-decomposability (Def.~\ref{def:f-decomposability}) and reinsertion invariance (Def.~\ref{def:reinsertion-invariance}). 
Let $n$ be the number of elements provided by $\advA$ for initial insertion into $\Pi$ by a query call to $\repO$.\footnote{Note there is no guarantee that all $n$ elements are successfully inserted.} 
Let $\alpha$ (resp. $\beta$) be the number of calls to $F$ required to insert (resp. query) an element in $\Pi$ using its $\upS$ (resp.~$\qryS$) algorithm.

If $R_K \colon \domain \rightarrow \range$ is an $(\alpha (n + q_u) + \beta q_t, t_a + t_d, \varepsilon)$-secure pseudorandom function with key $K \getsr \mathcal{K}$, then $\Pi$ is $(q_u, q_t, q_v, t_a, t_s, t_d, \varepsilon')$-adversarially correct with respect to the simulator in Fig.~\ref{fig:correctness-simulator}, where $\varepsilon' = \varepsilon + 2 q_t \cdot \pfp{n+q_u}$ and $t_s \approx t_a$.
\end{restatable}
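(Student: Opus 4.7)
The plan is to bridge $\Real$ and $\Ideal$ by a chain of three hybrids with losses summing to $\varepsilon + 2 q_t \cdot \pfp{n + q_u}$. I take $H_0 = \Real$, $H_1$ to be the game obtained by replacing the keyed PRF $R_K$ with a truly random $F \getsr \Funcs[\domain,\range]$, $H_2$ to be the game in which the oracles are rerouted through the simulator of Fig.~\ref{fig:correctness-simulator} using the bracketed $G^*$-branch on line~\ref{line:corr-E*}, and $H_3 = \Ideal$. First, $|\Pr[H_0 = 1] - \Pr[H_1 = 1]| \le \varepsilon$ follows by a standard reduction to the $(q,t,\varepsilon)$-PRF security of $R_K$, with $q \le \alpha(n+q_u) + \beta q_t$ (counting $F$-calls across $\repO$, $\upO$, and $\qryO$ queries) and $t \le t_a + t_d$.

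Next I would argue $\Pr[H_1 = 1] = \Pr[H_2 = 1]$ by inspection of each oracle of $\simS$: reinsertion invariance together with permanent disabling justify the $\tinserted$/$\tupenabled$ short-circuits in $\upsimO$; element permanence justifies the $\tFPlist$ short-circuit on lines~\ref{line:elem-permanence-1}--\ref{line:elem-permanence-2} of $\qrysimO$; and $\tCALQ$ correctly caches the $\bot$ response because $ctr$ advances exactly when $\sigma$ changes (successful first-time insertions), so equal counters imply identical $\sigma$ and, by determinism of $\qryS$, an identical response. All coins and oracle outputs thus coincide with those of $H_1$.

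The hard part will be bounding $|\Pr[H_2 = 1] - \Pr[H_3 = 1]|$: the two games differ only on line~\ref{line:corr-E*}, where by $F$-decomposability $H_2$ evaluates $\qryS^{\identity_\range}(F(x), \sigma)$ while $H_3$ evaluates $\qryS^{\identity_\range}(Y, \sigma)$ for fresh $Y \getsr \range$. Let $\mathsf{BAD}$ denote the event ``some query reaching line~\ref{line:bernoulli} returns $\top$''. Under $\neg \mathsf{BAD}$, every such query returns $\bot$ in both games, $\tFPlist$ is never populated on that codepath, and since both games drive identical insertions through $\upS^F$ on the same inputs, the state $\sigma$ and the entire transcript coincide; the Fundamental Lemma of Game Playing then gives
\begin{equation*}
|\Pr[H_2 = 1] - \Pr[H_3 = 1]| \le \Pr[\mathsf{BAD} \text{ in } H_2] + \Pr[\mathsf{BAD} \text{ in } H_3].
\end{equation*}
I would bound each term by $q_t \cdot \pfp{n + q_u}$. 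In $H_3$, each query draws $Y$ from $\range$ independently of $\sigma$; by $F$-decomposability the state $\sigma$ after at most $n+q_u$ distinct successful insertions is distributed as $\sigma^{(n')}$ from Def.~\ref{def:nai-state} for some $n' \le n+q_u$, and non-decreasing membership probability gives $\Pr[\qryS^{\identity_\range}(Y, \sigma) = \top] \le \pfp{n+q_u}$ per query. In $H_2$, since $x \notin \tinserted$ at the evaluation point, $F(x)$ has only been used by previous queries on $x$, and reaching line~\ref{line:bernoulli} implicitly conditions on those all returning $\bot$; non-decreasing membership then constrains $F(x)$ to be uniform on the complement of the most recent $\top$-set $T_{j^*}$, and the elementary inequality $(|T_i| - |T_{j^*}|)/(|\range| - |T_{j^*}|) \le |T_i|/|\range|$ ensures the conditional false-positive probability at query $i$ does not exceed the unconditional one. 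A union bound over the at most $q_t$ queries reaching line~\ref{line:bernoulli} completes the estimate, and $t_s \approx t_a$ is immediate from inspection of $\simS$. The main subtlety is this last step: I must ensure that adaptive repetition of membership queries on the same $x$ cannot inflate the bad-event probability in $H_2$, which is exactly where non-decreasing membership probability is needed.
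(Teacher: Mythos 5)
Your overall architecture mirrors the paper's: a PRF switch costing $\varepsilon$, an argument that the simulator's bookkeeping (justified by reinsertion invariance, permanent disabling, element permanence and determinism of $\qryS$) makes the $G^*$-branch game identical to the random-function game, and an identical-until-bad step whose union bound produces the factor $2q_t$. The genuine gap is in how you bound the bad event on the $H_2$ side. The paper never estimates the probability of a fresh-codepath $\top$ \emph{in the real-function game}: it defines the bad event as the disagreement $a_i^{\Ideal}\neq a_i^{G^*}$, where \emph{both} values are computed inside the one simulator of Fig.~\ref{fig:correctness-simulator}, and then evaluates $\Pr[\E]$ entirely in $\Ideal$. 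There, an induction (the paper's NAI lemma, built on a data-processing utility lemma) shows the state is $(c,0)$-NAI precisely \emph{because} $\qrysimO$ answers with fresh $Y$ and never releases any $F$-dependent bit about uninserted elements; consequently $F(x_i)$ is still uniform given the adversary's view, so $\Pr[a_i^{G^*}=\top]=\Pr[a_i^{\Ideal}=\top]=\pfp{c}\le\pfp{n+q_u}$. You instead try to bound $\Pr[\mathsf{BAD}\text{ in }H_2]$ directly in the game where the adversary \emph{does} see the $F$-based answers, and there your chain breaks at its last link: your inequality $(|T_i|-|T_{j^*}|)/(|\range|-|T_{j^*}|)\le |T_i|/|\range|$ only removes the conditioning coming from earlier queries on the same $x$; it leaves you comparing against $|T_i|/|\range|$, the false-positive probability of the \emph{current, adversarially influenced} state. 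Nothing you have proved shows this is at most $\pfp{n+q_u}$ — in $H_2$ the adversary adaptively chooses insertions after seeing $F$-dependent query answers and $\revO$ outputs (this is exactly the pollution-attack strategy, e.g.\ inserting only elements that query negative to pack a Bloom filter beyond its NAI fill), so the per-query bound you assert is false in general, and establishing any usable aggregate bound in $H_2$ is essentially the statement the theorem is trying to prove, i.e.\ your argument is circular at this point.

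Relatedly, your $H_3$-side claim that ``by $F$-decomposability the state is distributed as $\sigma^{(n')}$'' needs the same missing ingredient made explicit: $F$-decomposability alone does not give NAI under adaptive insertions; you also need that in $\Ideal$ the query oracle never evaluates or leaks $F$ on not-yet-inserted elements, so that $F(x_j)$ is still fresh and uniform at insertion time (this is the content of the paper's induction, and it is exactly the property that fails in $H_2$). The fix is to follow the paper's route: keep your hybrids, but define the bad event as the disagreement of the two answers computed side by side in the simulator, prove the NAI property of the simulated state by induction, and bound \emph{both} terms of your union bound inside $\Ideal$, where each equals the NAI false-positive probability and non-decreasing membership probability lets you take the maximum $\pfp{n+q_u}$.
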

\begin{proof}[sketch]
We start by defining an intermediate game $G$ that replaces the PRF in $\Real$ with a random function. We then bound the closeness of $\Real$ and $G$ in terms of the PRF advantage $\varepsilon$. To bound the distance between $G$ and $\Ideal$, we construct a game $G^*$ (Fig.~\ref{fig:correctness-simulator}) that looks identical to $G$, and show that $G^*$ and $\Ideal$ are equal up until the ``bad" event that $a_i^{\Ideal} \neq a_i^{G^*}$ for some $i \in [q_t]$. Then, we show that our simulator constructs an NAI view of $\Pi$ in $\Ideal$. Finally, we upper bound the probability of the bad event to obtain our result. The full proof is given in \fullversion{Appendix \ref{app:correctness-proofs}}{the full version}. 
\end{proof}

While Theorem \ref{thm:general-correctness} only refers to a single oracle function $F$ for notational simplicity, the same result holds also for \amqs using $t$ oracle functions $F_1, \dots F_t$ and being $F_1$-decomposable.
This requires adding sampling of the functions $F_1, \dots F_t$ from distributions $D_{F_1}, \dots D_{F_t}$, given by the specification of the \amq,
at the beginning of $\gameRoI$ (Fig.~\ref{game:r-o-i}), and either allowing oracle access to $F_2,\dots,F_t$ to $\simS$ or sampling them also at the beginning of the simulator  (Fig.~\ref{fig:correctness-simulator}).
Then one would replace all calls to $\up^{F}, \qry^{F}, \qryS^{\identity_\range}$ with $\up^{F_1, \dots F_t}$,$ \qry^{F_1, \dots F_t}$, $\qryS^{\identity_\range,F_2, \dots F_t}$.
We stress that the proof would still only incur into the PRF-switching cost of $F_1$, since only $F_1$-decomposability is used. 
However, every $D_{F_1}, \dots D_{F_t}$ would appear in the definition of $\pfp{n + q_u}$ and as such may directly influence the NAI false positive probability.

\subsection{Guarantees for Bloom and Cuckoo filters}\label{sec:correctness-examples}

Our goal is to give bounds on the adversarial correctness of Bloom and Cuckoo filters. 
We will prove that Bloom and a straightforward variant of insertion-only Cuckoo filters satisfy the consistency rules from Def.~\ref{def:bf-consistency-rules}, function-decomposability and reinsertion invariance.
This in turn allows us to use Theorem \ref{thm:general-correctness}  to provide concrete correctness guarantees.

\paragraph{Bloom filters.}\label{seubsec:example:BF}

We start by proving the function-decomposability of Bloom filters.

\begin{restatable}{lemma}{bfareconsistent}
	Bloom filters with oracle access to a random function $F$ are $F$-decomposable, reinsertion invariant, and satisfy the insertion-only \amq consistency rules from Def.~\ref{def:bf-consistency-rules}.
	\label{l:f-dec:BF}
\end{restatable}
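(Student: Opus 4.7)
The plan is to verify each of the five claimed properties in turn (function-decomposability, reinsertion invariance, element permanence, permanent disabling, and non-decreasing membership probability), all essentially as direct consequences of the fact that the Bloom filter's $\upS$ algorithm only OR-s new bits into $\sigma$, so that the state evolves monotonically in the bitwise partial order on $\{0,1\}^m$.

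First I would handle $F$-decomposability. Looking at Fig.~\ref{fig:bloom-filter-api}, both $\upS^F$ and $\qryS^F$ invoke $F$ exactly once, solely to compute $F(x)$, and then pass the result to $B_{m,k}(\cdot)$. Therefore we can literally rewrite $\upS^F(x,\sigma;r) = \upS^{\identity_\range}(F(x),\sigma;r)$ and $\qryS^F(x,\sigma) = \qryS^{\identity_\range}(F(x),\sigma)$ where the inner algorithms on the right never call their oracle (they receive the already-evaluated image $F(x)$). This matches Def.~\ref{def:f-decomposability}.

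Next, for reinsertion invariance: if $\top \gets \qryS^F(x,\sigma)$ then $B_{m,k}(F(x)) \land \sigma = B_{m,k}(F(x))$, i.e.\ every bit indicated by $B_{m,k}(F(x))$ is already set in $\sigma$. Hence $\sigma \lor B_{m,k}(F(x)) = \sigma$, so $\upS^F(x,\sigma)$ yields $\sigma' = \sigma$ as required by Def.~\ref{def:reinsertion-invariance}. For element permanence, the same monotonicity gives it immediately: any sequence of insertions only sets further bits of $\sigma$, producing some $\sigma' \ge \sigma$ bitwise; since the positions indicated by $B_{m,k}(F(x))$ were already $1$ in $\sigma$, they are still $1$ in $\sigma'$, so $\qryS^F(x,\sigma') = \top$.

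For permanent disabling, observe from Fig.~\ref{fig:bloom-filter-api} that $\upS^F$ unconditionally returns the bit $b = \top$. The hypothesis of the property—existence of an $(x,r)$ with $\upS$ outputting $b = \bot$—is therefore never triggered, so the property holds vacuously. Finally, non-decreasing membership probability follows once more from monotonicity: any insertion yields $\sigma' \ge \sigma$ bitwise, so for any $y$, $B_{m,k}(F(y)) \land \sigma = B_{m,k}(F(y))$ implies $B_{m,k}(F(y)) \land \sigma' = B_{m,k}(F(y))$. Thus the deterministic event $\{\top \gets \qryS^F(y,\sigma)\}$ is contained in $\{\top \gets \qryS^F(y,\sigma')\}$, and the probability inequality follows trivially. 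There is no real obstacle here—the only subtlety is being careful that permanent disabling is vacuous (rather than false) for Bloom filters, since $\upS$ never rejects an insertion.
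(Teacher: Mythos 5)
Your proof is correct and takes essentially the same route as the paper's, which is a one-line inspection of Fig.~\ref{fig:bloom-filter-api} ("$F$ is only used on the inputs to $\upS$ and $\qryS$, \dots the result follows"); you simply spell out the monotonicity argument for each consistency rule, including the correct observation that permanent disabling holds vacuously since $\upS$ always returns $\top$. The only formal point the paper adds that you omit is identifying $\range = [m]^k$ with a subset of $\domain$ so that formally $\range \subset \domain$, as required by Def.~\ref{def:f-decomposability} — a minor bookkeeping step that does not affect the argument.
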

\begin{proof} 
	Observe that $F$ is only used on the inputs to the $\upS$ and $\qryS$ algorithms in Fig.~\ref{fig:bloom-filter-api}.
	By identifying $\range = [m]^k$ with a subset of $\domain$, so that formally $\range \subset \domain$, and 
	since $\identity_\range(F(x)) = F(x)$ for any $x \in \domain$, the result follows.
	\ACM{}{\qed}
\end{proof}

\noindent We then apply Theorem~\ref{thm:general-correctness} to Bloom filters instantiated using PRFs.

\begin{corollary}
Let $n, q_u, q_t, q_v$ be non-negative integers, and let $t_a, t_d > 0$. Let $F: \domain \to \range$.
Let $\Pi$  be a Bloom filter with public parameters $pp$ and oracle access to $F$. 
If $R_K$ for $K \getsr \spaceK$ is an $(n+q_u+q_t, t_a + t_d, \varepsilon)$-secure pseudorandom function and $F = R_K$, then $\Pi$ is $(q_u, q_t, q_v, t_a, t_s, t_d, \varepsilon')$-adversarially correct, where $\varepsilon' = \varepsilon + 2 q_t \cdot \pfp{n+q_u}$ and $t_s \approx t_a$.
\end{corollary}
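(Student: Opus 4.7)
The plan is to derive this corollary as a direct specialization of Theorem~\ref{thm:general-correctness}. Since the general theorem has already done all of the heavy lifting, the task reduces to verifying its hypotheses for Bloom filters and reading off the right values of $\alpha$ and $\beta$.

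First I would invoke Lemma~\ref{l:f-dec:BF}, which already packages together exactly the three structural properties that Theorem~\ref{thm:general-correctness} requires of $\Pi$: the consistency rules from Def.~\ref{def:bf-consistency-rules} (element permanence, permanent disabling, non-decreasing membership probability), $F$-decomposability in the sense of Def.~\ref{def:f-decomposability}, and reinsertion invariance in the sense of Def.~\ref{def:reinsertion-invariance}. With those in place, the simulator of Fig.~\ref{fig:correctness-simulator} is well-defined for $\Pi$ and Theorem~\ref{thm:general-correctness} applies.

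Next I would pin down the parameters $\alpha$ and $\beta$. Looking at Fig.~\ref{fig:bloom-filter-api}, the $\upS^F$ algorithm evaluates $F$ exactly once, on its input element, then performs a bitwise OR with $B_{m,k}(F(x))$; likewise $\qryS^F$ evaluates $F$ exactly once and then performs a bitwise comparison. This is precisely the content of the ``bundling'' of the $k$ underlying Bloom hash functions into a single oracle $F\colon \domain \to [m]^k$ (Def.~\ref{def:bitmap}, Def.~\ref{def:bloom-filter}). Hence $\alpha = \beta = 1$, and the PRF query budget required by Theorem~\ref{thm:general-correctness} becomes $\alpha(n + q_u) + \beta q_t = n + q_u + q_t$, matching exactly the PRF assumption in the corollary's hypothesis.

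Plugging these values into the bound provided by Theorem~\ref{thm:general-correctness} yields $\varepsilon' = \varepsilon + 2 q_t \cdot \pfp{n+q_u}$ and $t_s \approx t_a$, as claimed. There is no genuine obstacle: the only thing one has to be careful about is that Lemma~\ref{l:f-dec:BF} really does give $F$-decomposability with $\range = [m]^k$ viewed as a subset of $\domain$ (which is the formal setting used in Def.~\ref{def:f-decomposability}), so that Theorem~\ref{thm:general-correctness} can be invoked without any reinterpretation. One might optionally remark that the NAI false positive probability $\pfp{n+q_u}$ appearing in the bound can in turn be upper-bounded via Lemma~\ref{lem:bf-false-positive-probability} by $\opfp{n+q_u}=(1-e^{-(n+q_u+0.5)k/(m-1)})^k$, giving a fully concrete formula usable for parameter selection in \S~\ref{sec:secure-instantiations}.
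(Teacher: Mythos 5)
Your proposal is correct and follows exactly the paper's own argument: verify via Lemma~\ref{l:f-dec:BF} that Bloom filters satisfy the consistency rules, $F$-decomposability, and reinsertion invariance, observe from Fig.~\ref{fig:bloom-filter-api} that each $\upS$ and $\qryS$ call makes a single call to $F$ so that $\alpha = \beta = 1$, and then instantiate Theorem~\ref{thm:general-correctness}. The extra remark about bounding $\pfp{n+q_u}$ via Lemma~\ref{lem:bf-false-positive-probability} is a harmless addition beyond what the paper's proof states.
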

\begin{proof}
From the instantiation of Bloom filters given in Fig.~\ref{fig:bloom-filter-api}, we observe that each $\upS$ and $\qryS$ call contains one call to the function $F$. Then, using Lemma~\ref{l:f-dec:BF}, Theorem~\ref{thm:general-correctness} holds with $\alpha = \beta = 1$. \ACM{}{\qed}
\end{proof}

\paragraph{Insertion-only Cuckoo filters.}\label{sec:cf-correctness-variants}
\begin{figure}[t]
	\centering
	\begin{subfigure}{12em}
		\centering
		\begin{pchstack}[boxed]
			\procedure[linenumbering,headlinecmd={\vspace{.1em}\hrule\vspace{.3em}}]{$\upS^{H_T,H_I}(x, \sigma)$}{%
				tag \gets H_T(x) \\
				i_1 \gets H_I(x) \\
				i_2 \gets i_1 \xor H_I(tag) \\
				\dots
			}
		\end{pchstack}
		\caption{Original variant.~\cite{_CoNEXT:FAKM14}}\label{fig:up-cuckoo-not-f-decomposable}
	\end{subfigure}
	\begin{subfigure}{12em}
		\centering
		\begin{pchstack}[boxed]
			\procedure[linenumbering,headlinecmd={\vspace{.1em}\hrule\vspace{.3em}}]{$\upS^{F,H_T,H_I}(x, \sigma)$}{%
				y \gets F(x) \\
				tag \gets H_T(y) \\
				i_1 \gets H_I(y) \\
				i_2 \gets i_1 \xor H_I(tag) \\
				\dots
			}
		\end{pchstack}
		\caption{\emph{PRF-wrapped} variant.}\label{fig:prf-wrapped-cuckoo-filters}
	\end{subfigure}
	\caption{Beginning of the $\upS$ algorithm for insertion-only Cuckoo filter variants.}
\end{figure}

Unfortunately, 
Cuckoo filters are not function-decomposable.
From the first few instructions of $\up^{H_I, H_T}$ (see~Fig.~\ref{fig:up-cuckoo-not-f-decomposable}), we see that both hash functions $H_I$ and $H_T$ (which could be replaced with PRFs as for Bloom filters) need to be evaluated on $x$. If one were to attempt $H_T$-decomposition of $\upS$, that is, to instantiate $\upS^{\identity_\range, H_I}(H_T(x), \sigma)$, it would not be possible to evaluate $i_1 \gets H_I(x)$ from $H_T(x)$ alone. An attempt to $H_I$-decompose $\upS$ would pose the reverse problem, having to evaluate $tag \gets H_T(x)$ from only $H_I(x)$. It would also introduce a new problem, having to evaluate $i_2 \gets i_1 \xor H_I(tag)$ without access to $H_I$.\footnote{A proof of NAI  would also struggle with $H_I$ being evaluated both on $x$ and on $tag$ for every call to $\upS$: it would mean that on a single call to $\upS(x)$, $\advA$ would also learn $H_I(tag)$, where $tag \ne x$ with high probability.} To overcome this barrier, we propose the following minor variant of insertion-only Cuckoo filters that achieves function-decomposability and satisfies all our consistency rules, allowing it to satisfy the requirements of Theorem~\ref{thm:general-correctness}.
\paragraph{PRF-wrapped insertion-only Cuckoo filters.}
To address function-decomposability issues in insertion-only Cuckoo filters, we propose a generic technique: preprocessing the inputs $x \in \domain$ to $\upS, \qryS$ with a random function $F \colon \domain \rightarrow \range$, for some $\range \subset \domain$ (including potentially $\range = \domain$). This results in the $\upS^{F, H_T, H_I}(x, \sigma)$ algorithm shown in Fig.~\ref{fig:prf-wrapped-cuckoo-filters}, and in a similarly ``PRF-wrapped'' $\qryS$ algorithm. 
The resulting \amq is easy to implement since it only requires adding a PRF call on inputs, before passing them to the existing $\upS^{H_T, H_I}$ and $\qryS^{H_T,H_I}$ implementations.

\begin{restatable}{lemma}{cfareconsistent}\label{lem:cf-prf-factorable}
	PRF-wrapped insertion-only Cuckoo filters with oracle access to a random function $F$ are $F$-decomposable, reinsertion invariant, and satisfy the insertion-only \amq consistency rules from Def.~\ref{def:bf-consistency-rules}.
\end{restatable}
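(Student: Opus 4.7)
My plan is to verify each of the four properties separately by inspecting the PRF-wrapped insertion-only Cuckoo filter algorithms (the full routines live in the appendix referenced as Fig.~\ref{fig:cuckoo-filter-api}).

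$F$-decomposability follows almost immediately from the construction in Fig.~\ref{fig:prf-wrapped-cuckoo-filters}. The random function $F$ is invoked only on the raw input $x$ at the first line of both $\upS$ and $\qryS$, and the rest of each algorithm operates exclusively on $y \gets F(x)$ together with the oracles $H_T$ and $H_I$. Identifying $\range$ (the codomain of $F$) with a subset of $\domain$, we can therefore rewrite
\begin{align*}
\upS^{F,H_T,H_I}(x,\sigma;r) &= \upS^{\identity_\range,H_T,H_I}(F(x),\sigma;r), \\
\qryS^{F,H_T,H_I}(x,\sigma) &= \qryS^{\identity_\range,H_T,H_I}(F(x),\sigma),
\end{align*}
so the lemma is built into the PRF-wrapping pattern. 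Note this matches the generalised multi-oracle form of $F$-decomposability described after Def.~\ref{def:f-decomposability}, with $F_1 = F$ and $F_2, F_3 = H_T, H_I$.

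For reinsertion invariance, I will argue that the appendix $\upS$ algorithm begins with a duplicate check: it computes $tag = H_T(y)$ and $i_1, i_2$, and before attempting to place $tag$ it scans the two candidate buckets and the eviction stash. When $\qryS(x,\sigma) = \top$, exactly this scan finds $tag$ already present, so the routine exits immediately with $b = \top$ and $\sigma' = \sigma$, as required by Def.~\ref{def:reinsertion-invariance}.

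The three consistency rules I will then handle in turn. For \emph{element permanence}, the key observation is that insertion never deletes tags: new tags can only be added, and any relocation caused by a cuckoo eviction swaps a tag between its own two candidate buckets (since $i_1 \oplus H_I(tag) = i_2$ and $i_2 \oplus H_I(tag) = i_1$). Hence if at some earlier state a query on $x$ found a matching $tag$ at $i_1(x)$, $i_2(x)$, or the stash, a matching tag still resides in that collection after any further sequence of insertions, so the query still returns $\top$. For \emph{permanent disabling}, I will inspect the failure branch of $\upS$: a failure arises only when the displacement chain exhausts the bound $\num$ without finding a free slot and the stash is already occupied; in this branch the algorithm undoes any tentative swaps and latches a ``full'' flag in $\sigma$ so that any later call to $\upS$ returns $(\bot,\sigma)$ without modification. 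For \emph{non-decreasing membership probability}, I combine the element-permanence argument with the observation that after one additional insertion the multiset of tags sitting at the candidate buckets and stash of any fixed $y$ can only grow (up to relocations that preserve membership), so the deterministic $\qryS$ returns $\top$ on at least as many states as before.

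The main obstacle I anticipate is ensuring element permanence and non-decreasing membership across eviction cascades: a naive argument breaks if a tag is temporarily held ``in hand'' during the cuckoo relocation loop and we view this as an intermediate state. I will handle this by treating $\upS$ atomically — the state $\sigma$ is only updated when the routine commits, either by a successful placement or by the failure branch described above — and then invoking the identity $i_1 \oplus H_I(tag) = i_2$ to confirm that every committed relocation keeps each tag at one of its two legal indices. A secondary subtlety is the stash: I must check that on a failed insertion the stash contents are restored to their pre-call value, which I expect to read off directly from the appendix pseudocode.
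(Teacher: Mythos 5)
Your overall route is the same as the paper's: the paper proves this lemma in one line, ``by inspection of the proposed modifications of the $\upS$ and $\qryS$ algorithms'', and your proposal is essentially that inspection carried out explicitly. Your treatment of $F$-decomposability (the $F$ call sits in front and everything downstream sees only $y = F(x)$, matching the multi-oracle form of Def.~\ref{def:f-decomposability}) and of element permanence / non-decreasing membership (tags are never removed, and an eviction moves a tag only between its two legal buckets because $i_1 \oplus H_I(tag) = i_2$) is consistent with what the appendix pseudocode actually does.

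However, your anticipated argument for permanent disabling rests on behaviour the algorithm does not have. In the paper's $\upS$ (Fig.~\ref{fig:cuckoo-filter-api}), the call whose eviction chain exhausts the bound $\num$ does \emph{not} fail, does \emph{not} undo any swaps, and does \emph{not} set a separate ``full'' flag: it simply stores the last in-hand tag in the stash $\sigma_{evic}$ and returns $(\top,\sigma)$, keeping all swaps. The only calls that return $b=\bot$ are later ones, which hit the initial check ``$\pcif \sigma_{evic} \ne \bot$: return $(\bot,\sigma)$'' and leave the state untouched. So permanent disabling holds for a simpler reason than you plan to verify: a $\bot$ return can only occur from a state with $\sigma_{evic}\ne\bot$, and from such a state every insertion immediately returns $(\bot,\sigma)$ unchanged; there is no ``restore the stash on failure'' step to check, because the failing call performs no work at all. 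A related small inaccuracy: the duplicate check at the start of $\upS$ scans only the two candidate buckets, not the stash; the stash case is covered by the disabled check (if $\qryS$ returned $\top$ because $tag=\sigma_{evic}$, then $\upS$ returns $(\bot,\sigma)$, which still satisfies Def.~\ref{def:reinsertion-invariance} since only $\sigma'=\sigma$ is required, not a particular bit). With these corrections your inspection goes through and coincides with the paper's argument.
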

\begin{proof} 
	This follows by inspection of the proposed modifications of the $\upS$ and $\qryS$ algorithms in Fig.~\ref{fig:prf-wrapped-cuckoo-filters}. \ACM{}{\qed}
\end{proof}

As a consequence of adding a PRF computation on inputs, the NAI false positive probability is slightly increased by the probability of finding a collision of the PRF. 

\begin{restatable}{lemma}{prfcfpfp}\label{clm:cf-variants-fp-bound}
	Let $n$ be a non-negative integer, let $\Pi$ be a PRF-wrapped insertion-only Cuckoo filter with public parameters $pp = (s$, $\lambda_I$, $\lambda_T$, $num)$, wrapped using a PRF $R_K\colon \domain \rightarrow \range$, and let $\Pi'$ be an original insertion-only Cuckoo filter with the same public parameters $pp$ as $\Pi$'s. Let $\opfp{n} \coloneqq \overline{P_{\Pi',pp}}(FP \mid {n}) + \frac{(2s+2)^2}{2\,|\range|}$.
	Then $\pfp{n} \le \opfp{n}$, where $\frac{(2s+2)^2}{2\,|\range|}$ can be made cryptographically small.
\end{restatable}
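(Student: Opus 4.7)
My plan is to reduce the NAI false positive probability of $\Pi$ to the NAI upper bound for the un-wrapped Cuckoo filter $\Pi'$ via a coupling argument, with the gap accounted for by a birthday-bound collision term on at most $2s+2$ elements.

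First I would use $F$-decomposability of $\Pi$ (\cref{lem:cf-prf-factorable}) to rewrite every call to $\upS^{F,H_T,H_I}(x,\cdot)$ and $\qryS^{F,H_T,H_I}(x,\cdot)$ occurring in the NAI experiment of \cref{def:nai-fpp} as $\upS^{\identity_\range,H_T,H_I}(F(x),\cdot)$ and $\qryS^{\identity_\range,H_T,H_I}(F(x),\cdot)$ respectively. These are exactly the operations of $\Pi'$, now applied to the $F$-images of the uniformly distinct inputs $x_1,\ldots,x_n$ followed by a query on $F(x)$.

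Next I would set up the coupling. A single Cuckoo filter query inspects at most $2s+1$ slots (two buckets of $s$ slots plus the single stash slot). Let $W$ denote the set containing $x$ together with the inserted $x_i$'s whose $F$-values occupy those slots at query time, so that $|W|\le 2s+2$. Define the good event $G$ as ``$F$ is injective on $W$''. Conditional on $G$, the $F$-values $\{F(w) : w\in W\}$ are pairwise distinct, so the tag values $\{H_T(F(w)) : w\in W\}$ are i.i.d.\ uniform in $\{0,1\}^{\lambda_T}$, matching exactly the joint distribution produced by $\Pi'$ at query time under its own NAI. It follows that $\Pr[\text{FP}\mid G]\le \overline{P_{\Pi',pp}}(FP\mid n)$, using \cref{lem:cf-false-positive-probability-upperbound} together with the observation that this upper bound does not depend on the number of effectively distinct insertions (any $F$-collisions among $x_i$'s outside $W$ only reduce the effective insertion count, which by non-decreasing membership probability does not increase the FP probability).

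To bound $\Pr[\neg G]$, I would apply a birthday argument: since $F$ is a uniformly random function into $\range$ and $|W|\le 2s+2$, any collision among $F(w)$, $w\in W$, occurs with probability at most $\binom{2s+2}{2}/|\range|\le (2s+2)^2/(2\,|\range|)$. Combining, $\pfp{n}\le \Pr[\text{FP}\mid G] + \Pr[\neg G]\le \overline{P_{\Pi',pp}}(FP\mid n) + (2s+2)^2/(2\,|\range|) = \opfp{n}$, and choosing $|\range|$ to be exponential in a security parameter $\lambda$ makes the added term cryptographically small. The main subtlety I anticipate is that $W$ is itself a random set whose composition depends on $F$, $H_T$, $H_I$, and the Cuckoo filter's eviction dynamics, so the birthday step is not a literal application of the standard inequality. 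The cleanest way to handle this is to exploit the fact that the cardinality bound $|W|\le 2s+2$ holds \emph{deterministically} from the Cuckoo filter's slot structure, and then argue that $\neg G$ implies the existence of a collision among some tuple of at most $2s+2$ distinct inputs --- such a collision event has probability at most $(2s+2)^2/(2\,|\range|)$ under the randomness of $F$ alone, independent of the trajectory that produced $W$.
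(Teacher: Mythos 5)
Your proposal is essentially the paper's own proof: both arguments split the false-positive event according to whether $F$ collides on the at most $2s+2$ elements relevant to the query (the queried element plus the at most $2s+1$ occupants of its two buckets and the stash), bound the collision-free part by the original insertion-only Cuckoo filter bound $\overline{P_{\Pi',pp}}(FP \mid n)$ from Lemma~\ref{lem:cf-false-positive-probability-upperbound}, and charge the collision event to the birthday bound $\frac{(2s+2)^2}{2\,|\range|}$. The adaptivity of the occupant set that you flag is, if anything, handled more carefully in your write-up than in the paper, which simply reasons about the $2s+1$ stored tags directly; otherwise the two proofs coincide.
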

\begin{proof}[sketch]
	The result follows by repeating the analysis for original Cuckoo filters~\cite{_CoNEXT:FAKM14} (Lemma~\ref{lem:cf-false-positive-probability-upperbound}), but accounting for the chance of a collision between $2s + 2$ uniformly random elements in $\range$. See \fullversion{\cref{app:proof-cf-variants-fp-bound}}{the full version} for the proof details.
\end{proof}

Finally, we can apply Theorem~\ref{thm:general-correctness}.
\begin{corollary}
Let $n, q_u, q_t, q_v$ be non-negative integers, and let $t_a, t_d > 0$. Let $F: \domain \to \range$.
Let $\Pi$ be a PRF-wrapped insertion-only Cuckoo filter with public parameters $pp$ and oracle access to $F$. 
If $R_K$ for $K \getsr \spaceK$ is an $(n+q_u+q_t, t_a + t_d, \varepsilon)$-secure pseudorandom function and $F = R_K$, then $\Pi$ is $(q_u, q_t, q_v, t_a, t_s, t_d, \varepsilon')$-adversarially correct, where $\varepsilon' = \varepsilon + 2 q_t \cdot \pfp{n+q_u}$ and $t_s \approx t_a$.
\end{corollary}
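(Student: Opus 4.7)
The plan is to invoke Theorem~\ref{thm:general-correctness} for the PRF-wrapped insertion-only Cuckoo filter, checking each hypothesis and counting how many times $F$ is evaluated inside $\upS$ and $\qryS$. By Lemma~\ref{lem:cf-prf-factorable}, the construction is $F$-decomposable, reinsertion invariant, and satisfies the insertion-only \amq consistency rules from Def.~\ref{def:bf-consistency-rules}, so all the structural prerequisites of Theorem~\ref{thm:general-correctness} are met once $F$ is modelled as a truly random function. Replacing $R_K$ by a random function via a standard PRF reduction is what contributes the additive $\varepsilon$ to the final bound.

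Next I would count $\alpha$ and $\beta$. Inspection of Fig.~\ref{fig:prf-wrapped-cuckoo-filters} shows that the wrapped $\upS^{F,H_T,H_I}$ starts with $y \gets F(x)$ and then performs all further computations through $H_T$ and $H_I$ evaluated on $y$ or on $tag$; in particular, $F$ itself is invoked exactly once. The analogous wrapped $\qryS^{F,H_T,H_I}$ behaves the same way. Hence $\alpha = \beta = 1$, which means the PRF query budget demanded by Theorem~\ref{thm:general-correctness} is $\alpha(n+q_u) + \beta q_t = n + q_u + q_t$, matching the assumed $(n+q_u+q_t,\,t_a+t_d,\,\varepsilon)$-security of $R_K$.

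Plugging these parameters into the bound from Theorem~\ref{thm:general-correctness} yields $\varepsilon' = \varepsilon + 2 q_t \cdot \pfp{n+q_u}$, where the NAI false positive probability $\pfp{n+q_u}$ refers here to the PRF-wrapped variant and may, if a concrete number is wanted, be upper bounded via Lemma~\ref{clm:cf-variants-fp-bound}. The simulator runtime $t_s \approx t_a$ is immediate from the simulator of Fig.~\ref{fig:correctness-simulator}, which mirrors $\Pi$'s algorithms up to bookkeeping.

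The only point that needs care is that Theorem~\ref{thm:general-correctness} is phrased with a single oracle function, whereas the PRF-wrapped insertion-only Cuckoo filter uses three oracle functions $F, H_T, H_I$. This is handled by the generalisation explicitly noted just after Theorem~\ref{thm:general-correctness}: since only $F$-decomposability (not $H_T$- or $H_I$-decomposability) is used in the proof, the PRF-switching cost is incurred for $F$ alone, while $H_T$ and $H_I$ are absorbed into the distributions defining $\pfp{n+q_u}$. I expect this to be the only step requiring a brief justification in the full write-up; everything else is a direct specialisation of the general theorem, exactly mirroring the Bloom filter corollary.
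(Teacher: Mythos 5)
Your proposal is correct and follows essentially the same route as the paper: invoke Lemma~\ref{lem:cf-prf-factorable} for the structural properties, note that each $\upS$ and $\qryS$ call makes exactly one call to $F$ so that $\alpha=\beta=1$, and specialise Theorem~\ref{thm:general-correctness}. Your additional remark about the multi-oracle setting ($F,H_T,H_I$) is exactly the generalisation the paper records in the discussion following Theorem~\ref{thm:general-correctness}, so nothing further is needed.
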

\begin{proof}
From the instantiation of PRF-wrapped Cuckoo filters (Fig.~\ref{fig:prf-wrapped-cuckoo-filters}), observe that each $\upS$ and $\qryS$ call contains one call to the function $F$. Applying Lemma~\ref{lem:cf-prf-factorable}, Theorem~\ref{thm:general-correctness} holds with $\alpha = \beta = 1$. \ACM{}{\qed}
\end{proof}

\subsection{Discussion on correctness}\label{sec:discussion}\label{sec:correctness-discussion}

We start by comparing our results with those of Clayton et al. \cite{_CCS:ClaPatShr19}. They analysed the adversarial correctness of PDS under four deployment settings, characterised by having \emph{private} or \emph{public} \emph{representations}, and by being either \emph{mutable} or \emph{immutable}. Our correctness result (Theorem~\ref{thm:general-correctness}) holds in all four settings by allowing the adversary up to $q_v$ $\revO$ queries and $q_u$ $\upO$ queries. The immutable setting then corresponds to $q_u = 0$ and the private setting to $q_v = 0$.

In \cite{_CCS:ClaPatShr19}, a game-based approach is used to derive bounds on the correctness of Bloom filters in the above adversarial settings. There, the adversary's win condition is to cause the event $S_r \coloneqq [\advA$ makes $\qryO(\cdot)$ queries resulting in at least $r$ false positives$]$. Using our simulation-based approach, we can derive new bounds for the setting of~\cite{_CCS:ClaPatShr19}. To see how, notice that extending Remark~\ref{rem:whyAdvCorrect}, Theorem~\ref{thm:general-correctness} also gives a bound on the difference in probabilities of event $S_r$ in both the real and ideal worlds. Our bound then involves a term of the form $2 q_t \cdot \pfp{n+q_u}$, while the bound in \cite{_CCS:ClaPatShr19} depends strongly on $r$. This should not be surprising given that our approach is general while that of \cite{_CCS:ClaPatShr19} involves a specific winning condition posed in terms of $r$. While this implies we are less flexible with the value of $r$ we implicitly tolerate, our approach covers any adversary (with no assumptions on its behaviour), illustrating the power of simulation-based notions. This includes adversaries who specify their objective not explicitly in terms of false positives caused, but perhaps in terms of a target Hamming weight of the Bloom filter's state, or in terms of subsets of the filter's index to be set to $1$ (as in target-set coverage attacks~\cite{_CCS:ClaPatShr19}). Further, by not requiring a choice of $r$, our results do not require satisfying particular constraints on $r$, as in~\cite[Theorem~3]{_CCS:ClaPatShr19}. 
In \fullversion{Appendix~\ref{sec:comparison}}{the full version}, we provide an in-depth comparison of our results with those of~\cite{_CCS:ClaPatShr19}, and outline how our approach covers other adversarial objectives such as target-set coverage attacks. 

We note that in the immutable setting, a slight modification of our Theorem~\ref{thm:general-correctness} proof gives a tighter bound with $\varepsilon' = \varepsilon$ (\fullversion{Appendix~\ref{sec:immutable}}{see the full version}).

Although our proof methodology is similar in spirit to that of HLL~\cite{PatRay22}, the analysis of adversarial correctness for \amq is much more involved. The ability to make membership queries on an element gives the adversary valuable information on how useful it would be to insert that element; cardinality estimates, on the other hand, do not reveal such information. 

Finally, we comment on the implications of our bounds. While the distinguishing advantage in $\gameRoI$ may not be negligible as one might expect in cryptographic proofs, by relating the adversarial setting to the well-studied honest setting from the PDS literature, we can place concrete bounds on the success of any \amq adversary. We return to this point in \S~\ref{sec:secure-instantiations}.

\section{Privacy}\label{sec:privacy}

We now shift focus to privacy guarantees for \amq, addressing the following question: to what extent does the functionality of an \amq compromise the privacy of the elements that it stores? We explore simulation-based privacy notions for \amq. We propose two such notions for quantifying privacy, each associated with a different leakage profile in the ideal world, and investigate the relationship between the two. We identify a specific property, \emph{permutation invariance (PI)}, as being of central importance in establishing privacy, and show that it is implied by function-decomposability of an \amq. Finally, we apply our results to Bloom and PRF-wrapped insertion-only Cuckoo filters.

\paragraph{Settings.}\label{sec:privacy-settings} 

We model interactions by an adversary $\advA = (\advA_1, \advA_2)$ with an $\amq$ $\Pi$ in two stages.
In the first stage, a randomised algorithm $\advA_1$ populates $\Pi$ with a set of elements $V \subset \domain$ via the $\repO$ oracle.
In the second stage, an adversary $\advA_2$ attempts to learn something about $V$.
We stress that no state is shared between $\advA_1$ and $\advA_2$ (in contrast, in \S~\ref{sec:deployment-settings}, the implicit two-part adversary is allowed to share state).
 
We consider two adversarial settings. 
In the \emph{snapshot} setting, at the end of the first stage, the adversary $\advA_2$ is given $\Pi$'s state $\sigma$ via access to the $\revO$ oracle, but has no other oracle access.
In the \emph{adaptive} setting, $\advA_2$ is instead given query access to $\Pi$ via the $\qryO$, $\upO$ and $\revO$ oracles. 
These settings capture various real world scenarios. For example, a system intrusion might lead to the leakage of $\sigma$ but go unnoticed for some time, and hence public access to $\Pi$ is not disabled. Analysing privacy in the adaptive setting allows us to quantify what harm an adversary could do in such a scenario. 

\paragraph{Leakage profiles for \amq.}\label{sec:privacy-leakage-profile}

To define a notion of privacy for an \amq $\Pi$, we first need to characterise its leakage profile. This describes the information leaked as a result of $\Pi$'s functionality, which also depends on whether we are in the snapshot or adaptive setting. We model the leakage as a set of functions that a simulator is allowed to use as oracles, and justify their inclusion below.

In Fig.~\ref{fig:leaks:PDS:ElemRepPriv} we define two leakage functions for an \amq representing a set $\setV$ of elements.
The first is $\repleakO$, which leaks $|V|$.  
This captures the fact that for commonly used $\amq$, one can estimate $|V|$ from observing $\sigma$. 
For example, if $\Pi$ is a Bloom filter, this could be indicated by the number of bits set to 1 in $\sigma$, or if $\Pi$ is a Cuckoo filter, it could be estimated from the number of tags stored in $\sigma$ (up to the probability of collisions in $H_T$).

In settings that allow membership queries, we require a second leakage function $\elemleakO$, capturing the fact that $\advA_2$ can always issue $\qryO(x)$ queries and learn their output. While this may seem to result in a weak privacy notion, such leakage is unavoidable in the real world if access to the \amq' API is provided. 

In this section, we will show that this leakage profile serves as an upper bound of the real leakage of function-decomposable \amqs, by constructing simulators that use it to provide consistent views of \amq to adversaries.

\begin{figure}[t]
	\centering
	\begin{pchstack}[boxed,space=1em]
		\procedure[linenumbering,space=1em,headlinecmd={\vspace{.1em}\hrule\vspace{.3em}}]{oracle $\repleakO()$}{%
			\pcreturn \abs{\setV}
		}
		\procedure[linenumbering,space=1em,headlinecmd={\vspace{.1em}\hrule\vspace{.3em}}]{oracle $\elemleakO(x)$}{%
			\pcreturn [x \in \setV]
		}
	\end{pchstack}
	\caption{Leakage profile for \amq.}
	\label{fig:leaks:PDS:ElemRepPriv}
\end{figure}

\subsection{Notions of Privacy: Elem-Rep privacy}\label{sec:privacy-definition}

We start with a high-level explanation of our first privacy definition, \textit{Elem-Rep privacy}. We again employ a simulation-based approach. Let the \amq be populated with elements from a set $\setV$ by adversary $\advA_1$. Then, adversary $\advA_2$ interacts with the \amq through the setting-specific oracles. 
Adversary $\advA_2$ plays in either a real or ideal world. In the real world, it interacts with a keyed \amq initialised with the elements in $\setV$. In the ideal world, it interacts with a simulator $\simS$. The simulator does not know $V$, but has access to $\repleakO$ in the snapshot setting, and additionally $\elemleakO$ in the adaptive setting. The output of $\advA_2$ is then given to a distinguisher $\dD$ along with $\setV$. By showing that the distinguisher's outputs in the real and ideal world are close, we prove that the adversary cannot learn much more about the elements in $\setV$ through interacting with the \amq in the real world than in the ideal world (where it can only learn the specified leakage from $\simS$). 

We formalise the above in Fig.~\ref{game:priv:ElemRep:r-o-i}, in the $\gameRoIERP$ game. We will use $\Real$ and $\Ideal$ to denote the real ($d {=} 0$) and ideal ($d{=}1$) versions of the $\gameRoIERP$ game. 
As in \S~\ref{sec:correctness}, if an oracle $\textbf{O}$ is not directly specified, we will assume it is defined as in Fig.~\ref{game:r-o-i}.

\begin{definition}[Elem-Rep privacy]\label{def:elem-rep-privacy} 
Let $\Pi$ be an insertion-only \amq with public parameters $pp$, and let $R_K$ be a keyed function family.
Let $\advA=(\advA_1,\advA_2)$ be a tuple of algorithms. 
We say $\Pi$ is $(q_u,q_t, q_v, t_a, t_s, t_d, \varepsilon)$-Elem-Rep private if 
there exists a simulator $\simS$ (that runs $\advA_2$, calls oracle $\repleakO$ at most once, calls oracle $\elemleakO$ only when $\advA_2$ calls its $\qryO$, $\upO$ oracles and on the same argument,
and runs in time at most $t_s$) 
such that, 
for all $\advA_1$,
for all $\advA_2$ running in time at most $t_a$ and making $q_u$, $q_t$, $q_v$ queries to oracles $\upO$, $\qryO$, $\revO$ respectively, 
and for all distinguishers $\dD$ running in time at most $t_d$, 
we have:
\begin{align*}\label{eq::priv:ElemRep:r-o-i}
 \text{Adv}&_{\Pi, \advA, \simS}^{RoIElemRepP} (\dD) \\
 & {\coloneqq}\,\big| \Pr[\Real(\advA, \dD)\,{=}\,1] {-} \Pr[\Ideal(\advA, \dD, \simS)\,{=}\,1] \big|  \,{\leq}\, \varepsilon,
\end{align*}
in $\gameRoIERP$ (Fig.~\ref{game:priv:ElemRep:r-o-i}).
\end{definition}

\begin{figure}[t]
	\Wider[10em]{
		\centering
		\begin{pchstack}[boxed,space=.0em]
			\procedure[linenumbering,headlinecmd={\vspace{.1em}\hrule\vspace{.3em}}]{$\gameRoIERP(\advA, \simS, \dD, pp)$}{%
						d \getsr \{0, 1\} \\ 
						\tinit \gets \bot\\
						V \getsr \advA_1 \label{line:elemrepgame-advA1} \\
						\pcif d = 0 \t \pcmycomment{\Real} \\
						\t K \getsr \mathcal{K}; F \gets R_K \\
						\t \sigma \gets \setupS(pp) \\
						\t \repO(V) \\
						\t out \getsr \advA_2^{\scriptsize \pcbox{\upO, \qryO,}\ \textbf{Reveal}}\\
						\pcelse \t \pcmycomment{\Ideal} \\
						\t out \getsr \simS^{\scriptsize \pcbox{\elemleakO,}\  \repleakO}(\advA_2, pp) \\
						d' \getsr \dD(out, \setV) \\ 
						\pcreturn d' 
			}
		\procedure[linenumbering,headlinecmd={\vspace{.1em}\hrule\vspace{.3em}}]{$\gameRoIRP(\advA, \simS, \dD, pp)$}{%
			d \getsr \{0, 1\} \\ 
			\tinit \gets \bot\\
			V \getsr \advA_1\\
			\pcif d = 0 \t \pcmycomment{$\overline{\Real}$} \\
			\t K \getsr \mathcal{K};\ F \gets R_K \\
			\t \sigma \gets \setupS(pp) \\
			\t \pcfor x \in V\\
			\t \t (b,\sigma) \getsr \upS^F(x,\sigma)\\
			\t \tinit \gets \top\\
			\t out \getsr \advA_2^{\scriptsize \pcbox{\upO, \qryO,}\ \revO}\\
			\pcelse \t \pcmycomment{$\overline{\Ideal}$} \\
			\t out \getsr \simS^{\repleakO}(\advA_2, pp) \\
			\pcreturn d' \getsr \dD(out, \setV)
		}
		\end{pchstack}
	}
	\caption{Elem-Rep (resp.~Rep) privacy game for \amq $\Pi$, with respect to the leakage profile obtained by combining $\repleakO$ and $\elemleakO$ (resp.~the leakage profile consisting only of $\repleakO$), in the snapshot and \pcbox{\text{adaptive}}settings.}
	\label{game:priv:ElemRep:r-o-i}\label{game:priv:Rep:r-o-i}
\end{figure}

	Informally, Def.~\ref{def:elem-rep-privacy} implies that the API of an Elem-Rep private \amq $\Pi$ does not leak more than the number of elements in $\Pi$ and the true query responses for elements queried via $\upO$ and $\qryO$.

	In the following sections, we show how to compute bounds on the Elem-Rep privacy of an \amq. We start by introducing a property that we call  \emph{permutation invariance} (PI), show that it is implied by function-decomposability, and that in turn it implies a bound on Elem-Rep privacy of an \amq.
	
\subsection{Permutation invariance (PI) 
}\label{sec:pi-f-decomp}
Consider the $\gamePI$ game in Fig.~\ref{g:priv:PI:nn}, where an adversary $\advA$, who has access to the $\repO, \upO, \qryO, \revO$ oracles, must distinguish between an \amq where the inputs to all queries are either randomly permuted or not.

\begin{definition}\label{def:PI:nn} 
	Let $\Pi$ be an insertion-only \amq, with public parameters $pp$.
	We say $\Pi$ is $(q_u,q_t, q_v, t_a, \varepsilon)$-\textup{permutation invariant} ($\varepsilon$-PI for short, $0$-PI when $\varepsilon = 0$)
	if, for all adversaries $\advB$ running in time at most $t_a$ and making first a single query to $\repO$ and then $q_u$, $q_t$, $q_v$ queries to oracles $\upO$, $\qryO$, $\revO$ respectively, we have:		
	\begin{align*}
		\text{Adv}_{\Pi}^{PI} (\advB)
		{:=} \abs{\Pr[\gamePI(\advB)\,{=}\,1\,|\,c\,{=}\,0] {-} \Pr[\gamePI(\advB)\,{=}\,1\,|\,c\,{=}\,1]}{\leq}\varepsilon,
	\end{align*}
	in $\gamePI$ (Fig.~\ref{g:priv:PI:nn}). We say $\advB$ is a $(q_u,q_t,q_v,t_a)$-PI adversary.
\end{definition}	
	
\begin{figure}[t]
	\Wider[10em]{
			\centering
		\begin{pchstack}[boxed,space=0pt]
			\begin{pcvstack}
			\procedure[linenumbering,space=1em,headlinecmd={\vspace{.1em}\hrule\vspace{.3em}}]{$\gamePI(\advB)$}{%
				F \getsr \Funcs[\domain, \range] \\ 
				\tinit \gets \bot \\
				\sigma \gets \setupS(pp) \\
				c \getsr \{0, 1\} \\ 
				\pcif c = 1\\
				\t \pi \getsr \Perms[\domain]\\
				\pcelse:\,\pi \gets \identity_\domain \\
				c' {\getsr} \advB^{\tiny \repO, \pcbox{\upO, \qryO,} \revO} \\
				\pcreturn c'
			}
			\end{pcvstack}
			\begin{pcvstack}
				\procedure[linenumbering,headlinecmd={\vspace{.1em}\hrule\vspace{.3em}}]{Oracle $\repO(\setV)$}{%
					\pcif \tinit = \top: \pcreturn \bot \\
					\tinit \gets \top \\
					\pcfor x \in V\\
					\ \, (b,\sigma){\getsr}\upS^{F}(\pi(x), \sigma)\\
					\pcreturn \top
				}
			\procedure[linenumbering,headlinecmd={\vspace{.1em}\hrule\vspace{.3em}}]{Oracle $\upO(x)$}{%
				\pcif \tinit = \bot: \pcreturn \bot \\
				(b,\sigma){\getsr}\upS^{F}(\pi(x), \sigma)\\
				\pcreturn b
			}
		\end{pcvstack}
		\begin{pcvstack}
			\procedure[linenumbering,headlinecmd={\vspace{.1em}\hrule\vspace{.3em}}]{Oracle $\qryO(x)$}{%
				\pcif \tinit = \bot \\
				\t \pcreturn \bot \\
				a{\gets}\qryS^{F}(\pi(x), \sigma) \\
				\pcreturn a
			}
			\procedure[linenumbering,headlinecmd={\vspace{.1em}\hrule\vspace{.3em}}]{Oracle $\revO$()}{%
				\pcreturn \sigma
			}
		\end{pcvstack}
		\end{pchstack}
	}
		\caption{PI game for AMQ-PDS $\Pi$ in the snapshot and \pcbox{\text{adaptive}} settings.}
		\label{g:priv:PI:nn}
	\end{figure}

Our next result relates function-decomposability of an \amq (as per Def.~\ref{def:f-decomposability}) to its permutation invariance.
	\begin{restatable}{lemma}{fdecomparezeropi}
		Let $F \getsr \Funcs[\domain,\range]$ be a random function, and let $\Pi$ be an $F$-decomposable \amq with public parameters $pp$ and oracle access to $F$. Then $\Pi$ is $0$-PI.
		\label{l:f-dec:PI}
	\end{restatable}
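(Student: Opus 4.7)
The plan is to reduce the $\gamePI$ game to a statement about the distribution of random functions under precomposition by a permutation, exploiting $F$-decomposability to eliminate any other way in which the permutation $\pi$ could affect the adversary's view. First, I would apply Def.~\ref{def:f-decomposability} to rewrite every invocation of $\upS^F$ and $\qryS^F$ appearing in the oracles of Fig.~\ref{g:priv:PI:nn} in their decomposed form. Concretely, every call $\upS^F(\pi(x),\sigma;r)$ becomes $\upS^{\identity_\range}(F(\pi(x)),\sigma;r)$, and analogously for $\qryS^F$. Since $\upS^{\identity_\range}$ and $\qryS^{\identity_\range}$ have no oracle access to $F$, the \emph{only} place $F$ and $\pi$ enter the computation is through the composite map $F \circ \pi \colon \domain \to \range$; the remainder of the game depends only on this composite, the public parameters $pp$, and independently sampled coins.

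Next, I would argue that the distribution of $F \circ \pi$ is exactly the uniform distribution on $\Funcs[\domain,\range]$, regardless of the value of the bit $c$. When $c = 0$ we have $\pi = \identity_\domain$, so $F\circ\pi = F$, which is uniform on $\Funcs[\domain,\range]$ by the sampling in line~1 of $\gamePI$. When $c = 1$, $\pi$ is a uniformly random permutation of $\domain$, independent of $F$, and precomposition by any fixed bijection is itself a bijection on $\Funcs[\domain,\range]$; averaging over $\pi$ preserves the uniform distribution of $F$, so $F \circ \pi$ is again uniform on $\Funcs[\domain,\range]$. Therefore the random variable driving every oracle response (together with the independent coins) has identical distribution in the two cases $c=0$ and $c=1$.

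Finally, I would conclude by observing that the adversary $\advB$'s entire view in $\gamePI$ is a deterministic function of $(pp, F\circ\pi, \text{coins of } \upS)$, where the coins are sampled afresh for each call and independently of $(F,\pi)$. Since the joint distribution of these ingredients is identical under $c=0$ and $c=1$, the two conditional distributions of $\advB$'s output coincide exactly, giving $\text{Adv}_{\Pi}^{PI}(\advB) = 0$ for every adversary $\advB$, which is precisely the $0$-PI claim.

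I do not anticipate a significant obstacle: the only subtle point is being precise that $F$-decomposability forbids $\upS^{\identity_\range}$ and $\qryS^{\identity_\range}$ from evaluating $F$ internally, so that $\pi$ cannot sneak back into the computation through a hidden call to $F$; this is exactly the guarantee built into Def.~\ref{def:f-decomposability} by stipulating that these ``stripped'' algorithms do not have oracle access to $F$.
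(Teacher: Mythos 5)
Your proposal is correct and follows essentially the same route as the paper's proof: use $F$-decomposability to push $\pi$ inside $F$, then observe that $F\circ\pi$ is a uniformly random function in $\Funcs[\domain,\range]$ for both $c=0$ and $c=1$, so the adversary's views are identically distributed and the advantage is exactly $0$. Your explicit argument that precomposition with a fixed bijection preserves the uniform distribution on $\Funcs[\domain,\range]$ simply spells out the step the paper states in one line.
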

\begin{proof}[sketch]
By $F$-decomposability, we rewrite $\upS^F(\pi(x),\, \sigma)$ calls in $\gamePI$ as $\upS^{\identity_\range}(F(\pi(x)),\, \sigma)$, and similarly for $\qryS$. Since $F(x)$ and $F(\pi(x))$ are random functions, the $c\,{=}\,0,\,1$ versions of $\gamePI$ are indistinguishable. A full proof is given in \fullversion{Appendix~\ref{app:proof-f-dec:PI}}{the full version}.
\end{proof}

Similarly to correctness (\S~\ref{sec:correctness-notions}), Lemma \ref{l:f-dec:PI} only refers to a single oracle function $F$, but it also holds for \amqs using $t$ oracle functions $F_1, \dots F_t$ and being $F_1$-decomposable.

	\subsection{Elem-Rep privacy from PI}\label{sec:privacy-proofs}
	
	We now show how to compute a bound on the Elem-Rep privacy of an \amq using permutation invariance.

	\begin{restatable}{theorem}{newelemrepprivacy}\label{th:general-privacy}
		Let $q_u,\, q_t, \, q_v$ be non-negative integers, and let $t_a,\, t_d > 0$.
		Let $F: \domain \to \range$.
		Let $\Pi$ be an insertion-only \amq with public parameters $pp$ and oracle access to $F$.
		Let $n$ be the maximum number of elements returned by $\advA_1$ on line~\ref{line:elemrepgame-advA1} of $\gameRoIERP$ in Fig.~\ref{game:priv:ElemRep:r-o-i}.
		Let $\alpha$ (resp. $\beta$) be the number of calls to $F$ required to insert (resp. query) an element in $\Pi$ using its $\upS$ (resp.~$\qryS$) algorithm.
		
		If $F \equiv R_K \colon \domain \rightarrow \range$ is an $(\alpha (n + q_u) + \beta q_t, t_a + t_d, \varepsilon_{\text{PRF}})$-secure pseudorandom function with key $K \getsr \mathcal{K}$,
		and $\Pi$ is $(q_u, q_t, q_v, t_a, \varepsilon_{\text{PI}})$-permutation invariant,
		then $\Pi$ is $(q_u,$ $q_t$, $q_v$, $t_a$, $t_s$, $t_d$, $\varepsilon_{\text{PRF}} + \varepsilon_\text{PI})$-Elem-Rep private, where $t_s \approx t_a$.
	\end{restatable}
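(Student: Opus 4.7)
The plan is a three-step hybrid argument going from $\Real$ to $\Ideal$ in $\gameRoIERP$. First I would replace $F \equiv R_K$ with a uniformly random $F \getsr \Funcs[\domain,\range]$; second I would apply a fresh uniform permutation $\pi \getsr \Perms[\domain]$ to every input before it is passed to $\upS$ or $\qryS$; third I would exhibit an ideal-world simulator $\simS$ whose output distribution coincides with the second hybrid exactly. Combining via the triangle inequality then delivers $\varepsilon_{\text{PRF}} + \varepsilon_{\text{PI}}$.

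For the PRF hop, I would build a PRF distinguisher that plays $\gameRoIERP$ end-to-end: it runs $\advA_1$ to obtain $V$, simulates $\repO(V)$ by forwarding each required $F$-evaluation to its $\RRO$ oracle, runs $\advA_2$ the same way, then outputs $\dD(out, V)$. The $F$-query count is $\alpha n + \alpha q_u + \beta q_t = \alpha(n+q_u) + \beta q_t$, matching the theorem's hypothesis. For the permutation hop, I would embed $\advA_1, \advA_2, \dD$ directly into the $\gamePI$ game of Fig.~\ref{g:priv:PI:nn}: the PI adversary hands $V$ to $\repO$, forwards $\advA_2$'s $\upO, \qryO, \revO$ queries verbatim, and outputs $\dD$'s bit. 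The $c=0$ branch reproduces the post-PRF-switch hybrid exactly, while $c=1$ reproduces the post-permutation hybrid, so $\varepsilon_{\text{PI}}$-permutation invariance of $\Pi$ bounds the hop by $\varepsilon_{\text{PI}}$.

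For the final step, I would construct $\simS$ by lazily sampling $\pi$ on the fly with the help of the leakage oracles. At setup $\simS$ samples its own random $F' \getsr \Funcs[\domain,\range]$, runs $\sigma \gets \setupS(pp)$, queries $\repleakO$ once to learn $n = |V|$, samples $n$ distinct elements $z_1, \dots, z_n$ uniformly from $\domain$ (these model $\pi(V)$), and runs $\upS^{F'}(z_i, \sigma)$ in turn. When $\advA_2$ issues $\upO(x)$ or $\qryO(x)$, $\simS$ calls $\elemleakO(x)$ (respecting the ``only on the same argument'' restriction of Def.~\ref{def:elem-rep-privacy}) and consults a table $T$: if $T[x]$ is undefined and $\elemleakO(x) = \top$, draw $T[x]$ uniformly from the unused $z_i$'s; if $T[x]$ is undefined and $\elemleakO(x) = \bot$, draw $T[x]$ uniformly from $\domain \setminus (\{z_1,\dots,z_n\} \cup \Values{T})$; either way then evaluate $\upS^{F'}$ or $\qryS^{F'}$ on $T[x]$ and return the answer. $\revO()$ returns $\sigma$. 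By inspection $\simS$ calls $\repleakO$ at most once and $\elemleakO$ only as permitted, and runs in time $t_s \approx t_a$.

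The main obstacle will be verifying that this lazy sampling produces a joint distribution over $(\sigma, \text{oracle transcript}, V)$ identical to that of the post-permutation hybrid, in which a single $\pi$ is drawn uniformly up front. This is the standard ``lazy versus eager'' argument, but must be carried out in two phases: the bulk sample $\{z_1,\dots,z_n\}$ has the same marginal distribution as the image $\pi(V)$ of a fresh $\pi$ (so the $\repO$ phase agrees), and each subsequent lazy draw for $T[x]$ is exactly the conditional distribution of $\pi(x)$ given the partial information already revealed to $\advA_2$. Once verified, the second hybrid and $\Ideal$ are identically distributed, and chaining the three bounds yields $\text{Adv}_{\Pi, \advA, \simS}^{RoIElemRepP}(\dD) \le \varepsilon_{\text{PRF}} + \varepsilon_{\text{PI}}$, as claimed.
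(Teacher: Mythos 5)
Your proposal is correct and matches the paper's proof essentially step for step: a PRF switch, a hop bounded by $\varepsilon_{\text{PI}}$ via a wrapping PI adversary that runs $\advA_1$, $\advA_2$ and $\dD$, and a simulator that uses $\repleakO$ to insert $|V|$ fresh distinct elements and $\elemleakO$ to lazily build an injective map sending members of $V$ into that inserted set and non-members outside it --- exactly the paper's $\simS_\text{Elem-Rep}$ with its $\PermuteO$ procedure. The ``lazy versus eager'' consistency check you flag is the same point the paper settles (somewhat informally) when arguing that $\pi'$ and $\pi$ are indistinguishable from $\advA$'s view, so no substantive difference remains.
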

	\begin{proof}[sketch]
	We construct a simulator $\simS_\text{Elem-Rep}$ for $\Ideal$ in Fig.~\ref{f:SElemRep}, and an adversary $\advB$ in $\gamePI$ that runs the $\gameRoIERP$ adversary $\advA$ internally. Then, we rewrite $\advA$'s advantage in terms of the permutation invariance of $\Pi$, the distance between $\Real$ and the $c = 0$ version of $\gamePI$, and that of $\Ideal$ and the $c = 1$ version of $\gamePI$. The first (resp.~second) term is bounded by $\varepsilon_\text{PI}$ (resp.~$\varepsilon_{\text{PRF}}$). Observing that $\Ideal$ is indistinguishable from the $c\,{=}\,1$ version of $\gamePI$ gives the result.
	The full proof is given in \fullversion{Appendix~\ref{app:proof-general-privacy}}{the full version}.
\begin{figure}
	\Wider[2em]{
		\small
		\centering
		\begin{pcvstack}[boxed]
			\begin{pchstack}[space=.1em]
				\procedure[linenumbering,headlinecmd={\vspace{.1em}\hrule\vspace{.3em}}]{Simulator $\simS_\text{Elem-Rep}(\advA_2, pp)$}{%
					F \getsr \Funcs[\domain,\range] \\
					\sigma \gets \setupS(pp) \\
					P \gets \{\}\ \pcmycomment{key-value store} \\
					Y \gets \{\}\ \pcmycomment{set} \\ 
					\repsimO() \\
					out \getsr \advA_2^{\upsimO,\qrysimO,\revealsimO} \\
					\pcreturn out
				}
				\procedure[linenumbering,headlinecmd={\vspace{.1em}\hrule\vspace{.3em}}]{Procedure $\repsimO()$}{%
					n' \gets \repleakO() \\ 
					\pcfor i \in [n'] \\
					\ \ y \getsr \domain \setminus Y \\
					\ \ Y \gets Y \cup \{ y \} \\ 
					\ \ (b, \sigma) \getsr \upS^F(y, \sigma)\\
					\pcreturn \top 
				}
			\end{pchstack}
			\begin{pchstack}[space=.1em]
				\procedure[linenumbering,headlinecmd={\vspace{.1em}\hrule\vspace{.3em}}]{Procedure $\PermuteO(x)$}{%
					\pcif P[x] = \bot \\
					\ \ \pcif \elemleakO(x) = \top \\
					\ \ \t P[x] \getsr Y \setminus \Values{P} \\
					\ \ \pcelse \\
					\ \ \t P[x] \getsr \domain \setminus (\Values{P}\,{\cup}\,Y) \\
					\pcreturn P[x] 
				}
				\begin{pcvstack}[space=.1em]
					\procedure[linenumbering,space=1em,headlinecmd={\vspace{.1em}\hrule\vspace{.3em}}]{Oracle $\upsimO(x)$}{%
						(b, \sigma) \getsr \upS^F(\PermuteO(x), \sigma) \\
						\pcreturn b
					}
					\procedure[linenumbering,headlinecmd={\vspace{.1em}\hrule\vspace{.3em}}]{Oracle $\qrysimO(x)$}{%
						\pcreturn \qryS^F(\PermuteO(x), \sigma)
					}
					\procedure[linenumbering,headlinecmd={\vspace{.1em}\hrule\vspace{.3em}}]{Oracle $\revealsimO()$}{%
						\pcreturn \sigma
					}
				\end{pcvstack}
			\end{pchstack}
		\end{pcvstack}
	}
	\caption{Simulator used to prove Theorem~\ref{th:general-privacy}.}
	\label{f:SElemRep}
\end{figure}
	\end{proof}
	
	This result essentially tells us that, up to the $\varepsilon_\text{PRF} + \varepsilon_\text{PI}$ bound, the state of a function-decomposable \amq representing some set $V$ does not leak more than the number of elements in the data structure, and that querying the \amq does not reveal more than the true answers to set membership queries of queried elements. However, such a guarantee may not be useful for computing concrete bounds on privacy in practice, as it does not explicitly quantify the impact of what $\elemleakO$ reveals or how this is related to the distribution of elements in $V$. This motivates our alternative privacy notion, introduced in the next section.
	
	\subsection{Notions of Privacy: Rep privacy} 
	
	In this section, we define a second privacy notion, \textit{Rep privacy}. Here, the simulator no longer has access to the $\elemleakO$ oracle, but instead only to the $\repleakO$ oracle, allowing it to learn $|V|$ where $V \getsr \advA_1$. We formalise this privacy definition by the $\gameRoIRP$ game in Fig.~\ref{game:priv:Rep:r-o-i}. We will use $\overline{Real}, \overline{Ideal}$ to denote the real ($d = 0$), ideal ($d = 1$) versions of the game, respectively.
	
	\begin{definition}[Rep privacy]\label{def:rep-privacy}
		Let $\Pi$ be an insertion-only \amq with public parameters $pp$, and let $R_K$ be a keyed function family.
		Let $\advA=(\advA_1,\advA_2)$ be a tuple of algorithms. 
		We say $\Pi$ is $(q_u,q_t, q_v, t_a, t_s, t_d, \varepsilon)$-Rep private if
		there exists a simulator $\simS$ (that runs $\advA_2$, calls $\repleakO$ at most once, and runs in time at most $t_s$) 
		such that, 
	    for all $\advA_1$,
		for all $\advA_2$ running in time at most $t_a$ and making $q_u,q_t, q_v$ queries to oracles $\upO,\qryO,\revO$ respectively, and
		for all distinguishers $\dD$ running in time at most $t_d$,
		we have:
		\begin{align*}
		\text{Adv}_{\Pi, \advA, \simS}^{RoIRepP} (\dD) {:=} 
		\big| \Pr[\overline{Real}(\advA, \dD) {=} 1] {-} \Pr[\overline{Ideal}(\advA, \dD, \simS) {=} 1] \big| 
	{\leq} \varepsilon,
		\end{align*}
		in $\gameRoIRP$ (Fig.~\ref{game:priv:Rep:r-o-i}).
	\end{definition}

		While the notion of Rep privacy is exactly the same as Elem-Rep privacy in the snapshot setting, their relationship is more subtle in the adaptive setting. By removing the simulator's access to $\elemleakO$, the bound obtained through Rep privacy is directly related to the probability of guessing elements in $V$. In fact, we show this formally in the following theorem.

	\begin{restatable}{theorem}{oldrepprivacythm}\label{th:er-to-r}
		Let $q_u, q_t, q_v$ be non-negative integers, and let $t_a, t_d > 0$.
		Let $\Pi$ be an insertion-only \amq, with public parameters $pp$.	
		Suppose $\Pi$ is $(q_u$, $q_t$, $q_v$, $t_a$, $t_s$, $t_d$, $\varepsilon)$-Elem-Rep private with simulator $\simS$. 
		Then there exists a simulator $\simS'$ (that is constructed from $\simS$) such that 
		$\Pi$ is $(q_u, q_t, q_v, t_a, t_{s'}, t_d, \varepsilon')$-Rep private with simulator $\simS'$ and $t_s \approx t_{s'}$.
		Here $\varepsilon' = \varepsilon + \Pr[{W \cap \setV \ne \emptyset}]$, with 
		$W$ denoting the set of elements queried by $\advA_2$ to its $\upO, \qryO$ oracles in $\overline{\Ideal}$ within $\simS'$. 
	\end{restatable}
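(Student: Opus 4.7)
The plan is to construct $\simS'$ directly from $\simS$ by running $\simS$ verbatim but intercepting every call that $\simS$ would make to $\elemleakO$ and answering it with the stub value $\bot$ (``element not in $V$''). By construction, $\simS'$ still makes only its single $\repleakO$ call and runs $\advA_2$ identically to $\simS$, so it respects the syntactic restrictions of a Rep-private simulator and satisfies $t_{s'} \approx t_s$. The remainder of the argument is a triangle inequality combined with an identical-until-bad step.

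First, I would note that the two real worlds coincide: by inspection of Fig.~\ref{game:priv:ElemRep:r-o-i}, $\Real$ (from $\gameRoIERP$) and $\overline{\Real}$ (from $\gameRoIRP$) describe the same procedure --- sample $V \getsr \advA_1$, sample $K$, run $\sigma \gets \setupS(pp)$, insert each element of $V$ via $\upS^F$, and then run $\advA_2$ with access to $\upO, \qryO, \revO$. Their output distributions are therefore identical, and the triangle inequality reduces the goal to bounding
\[
\bigl|\,\Pr[\Ideal(\advA, \dD, \simS) = 1] - \Pr[\overline{\Ideal}(\advA, \dD, \simS') = 1]\,\bigr|
\]
by $\Pr[W \cap V \ne \emptyset]$; the claimed bound $\varepsilon + \Pr[W \cap V \ne \emptyset]$ then follows from Elem-Rep privacy applied to the first term of the triangle.

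For this second term, I would couple the executions of $\Ideal$ (with $\simS$) and $\overline{\Ideal}$ (with $\simS'$) on shared randomness (for $F$, $\advA_2$, and the internal coins of $\simS$). By the restriction on $\simS$ imposed by Def.~\ref{def:elem-rep-privacy}, $\simS$ queries $\elemleakO(x)$ only when $\advA_2$ has just submitted $x$ to $\upO$ or $\qryO$; the true oracle returns $[x \in V]$, while $\simS'$'s stub always returns $\bot$, so the two agree precisely when $x \notin V$. Hence the two coupled executions remain point-wise identical until the first such query on some $x \in V$, and by the fundamental lemma of game playing the output distributions differ by at most the probability of this bad event, which is exactly $W \cap V \ne \emptyset$. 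The main subtlety I anticipate is that $W$ in the statement is measured in $\overline{\Ideal}$ whereas the coupling references whichever world reaches the bad event first; since the two coupled executions agree up to the bad event, the sequence of $\upO, \qryO$-queries (and hence $W$) is the same in both worlds until bad occurs, so $\Pr[W \cap V \ne \emptyset]$ is well-defined regardless of which world one measures it in.
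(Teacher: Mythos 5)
Your proposal is correct and follows essentially the same route as the paper's proof: define $\simS'$ by replacing every $\elemleakO$ response with $\bot$, observe that $\Real$ and $\overline{\Real}$ coincide, bound the gap between $\Ideal$ (with $\simS$) and $\overline{\Ideal}$ (with $\simS'$) by the probability that $\advA_2$ queries $\upO$ or $\qryO$ on an element of $V$, and conclude via the triangle inequality and Elem-Rep privacy. Your explicit coupling/identical-until-bad formulation and the remark about $W$ being measurable in either world merely spell out what the paper states more tersely.
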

\begin{proof}[sketch]
Our goal is to relate $\overline{\Real}, \overline{\Ideal}$ from $\gameRoIRP$ to $\Real, \Ideal$ from $\gameRoIERP$. We first construct a simulator $\simS'$ that is the same as $\simS$, but where every $\elemleakO$ call is replaced with $\bot$. Then, $\Ideal$ and $\overline{\Ideal}$ (the worlds simulated by $\simS$ and $\simS'$, respectively) are identical up until $\advA_2$ makes an $\upO$ or $\qryO$ query on something in $V$, which occurs with probability $\Pr[{W \cap \setV \not= \emptyset}]$. Since $\Real$ and $\overline{\Real}$ are identical, we use the $\varepsilon$-Elem-Rep privacy of $\Pi$ to obtain the result. See\fullversion{~Appendix~\ref{app:proof-er-to-r}}{ the full version} for the full proof.
\end{proof}

While Theorems \ref{th:general-privacy} and \ref{th:er-to-r} only refer to a single oracle function $F$, the same result holds also for \amqs using $t$ oracle functions $F_1, \dots F_t$ and being $F_1$-decomposable.
	
\subsection{Guarantees for Bloom and Cuckoo filters}

Using our analysis, we derive results on the privacy of Bloom filters.

\begin{restatable}{corollary}{bfelemrepprf}\label{c:BF:F:ERP}
		Let $n, q_u, q_t, q_v$ be non-negative integers, and let $t_a, t_d > 0$. Let $F: \domain \to \range$.
		Let $\Pi$  be a Bloom filter with public parameters $pp$ and oracle access to $F$. Let $\delta$ denote $ \pr{W \cap V \neq \emptyset}$.
		If $R_K$ for $K \getsr \spaceK$ is an $(n+q_u+q_t, t_a + t_d, \varepsilon)$-secure PRF and $F = R_K$, then $\Pi$ is $(q_u, q_t, q_v, t_a, t_s, t_d, \varepsilon + \delta)$-Rep private, where $t_s \approx t_a$.
\end{restatable}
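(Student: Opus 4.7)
The plan is to chain together three results established earlier in the paper: the function-decomposability of Bloom filters, the PI-to-Elem-Rep-privacy theorem, and the Elem-Rep-to-Rep-privacy conversion. Since $R_K$ is a secure PRF, replacing $F$ by $R_K$ introduces exactly the PRF-switching cost $\varepsilon$; all remaining security properties will be argued in the random-function setting.

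First, I would invoke Lemma~\ref{l:f-dec:BF}, which states that Bloom filters with oracle access to a random function $F$ are $F$-decomposable. Feeding this into Lemma~\ref{l:f-dec:PI} shows that any $F$-decomposable \amq with oracle access to $F$ is $0$-PI, so in particular $\Pi$ is $(q_u, q_t, q_v, t_a, 0)$-permutation invariant. This makes the $\varepsilon_{\text{PI}}$ contribution in the upcoming application of Theorem~\ref{th:general-privacy} vanish.

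Next, I would apply Theorem~\ref{th:general-privacy} with $\varepsilon_{\text{PI}} = 0$. By inspection of the Bloom filter algorithms in Fig.~\ref{fig:bloom-filter-api}, both $\upS^F$ and $\qryS^F$ make exactly one oracle call to $F$, so $\alpha = \beta = 1$; the PRF query budget $\alpha(n+q_u) + \beta q_t$ therefore matches the hypothesised $(n+q_u+q_t, t_a + t_d, \varepsilon)$-PRF security. Theorem~\ref{th:general-privacy} then yields that $\Pi$ is $(q_u, q_t, q_v, t_a, t_s, t_d, \varepsilon + 0) = (q_u, q_t, q_v, t_a, t_s, t_d, \varepsilon)$-Elem-Rep private with $t_s \approx t_a$.

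Finally, I would invoke Theorem~\ref{th:er-to-r} to convert Elem-Rep privacy into Rep privacy: starting from the Elem-Rep simulator for $\Pi$, the theorem constructs a Rep-privacy simulator paying an additive cost of $\Pr[W \cap V \ne \emptyset] = \delta$, with comparable running time. This gives exactly the bound $\varepsilon + \delta$ and the simulator runtime $t_s \approx t_a$ in the conclusion. The proof is essentially a bookkeeping exercise; no step poses a genuine obstacle. The only care required is to confirm that the hypotheses of Theorem~\ref{th:general-privacy} (notably the PRF query count $n+q_u+q_t$ implied by $\alpha = \beta = 1$) match those of the corollary, and that the $\varepsilon_{\text{PI}} = 0$ contribution produces a clean additive cost of $\varepsilon$ rather than $\varepsilon + \varepsilon_{\text{PI}}$.
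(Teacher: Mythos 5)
Your proposal is correct and follows exactly the paper's own argument: Lemma~\ref{l:f-dec:BF} gives $F$-decomposability, Lemma~\ref{l:f-dec:PI} gives $0$-PI, Theorem~\ref{th:general-privacy} with $\alpha=\beta=1$ and $\varepsilon_{\text{PI}}=0$ yields $\varepsilon$-Elem-Rep privacy, and Theorem~\ref{th:er-to-r} converts this to Rep privacy at the additive cost $\delta$. No gaps or deviations from the paper's proof.
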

\begin{proof}
Since $\Pi$ is $F$-decomposable (Lemma \ref{l:f-dec:BF}), it satisfies $0$-PI (Lemma \ref{l:f-dec:PI}). We then set $\varepsilon_\text{PI} = 0$, $\varepsilon_\text{PRF} = \varepsilon$ in Theorem \ref{th:general-privacy} with $\alpha = \beta = 1$ to obtain an Elem-Rep privacy bound, and finally apply Theorem \ref{th:er-to-r} to convert this to a Rep privacy bound. \ACM{}{\qed} 
\end{proof}

We can prove similar results for the Cuckoo filter.
\begin{restatable}{corollary}{cfelemrepprf}\label{c:CF:F:ERP}
		Let $n, q_u, q_t, q_v$ be non-negative integers, and let $t_a, t_d > 0$. Let $F: \domain \to \range$.
		Let $\Pi$  be a PRF-wrapped insertion-only Cuckoo filter with public parameters $pp$ and oracle access to $F$. Let $\delta$ denote $\pr{W \cap V \neq \emptyset}$.
		If $R_K$ for $K \getsr \spaceK$ is an $(n+q_u+q_t, t_a + t_d, \varepsilon)$-secure pseudorandom function and $F = R_K$, then $\Pi$ is $(q_u, q_t, q_v, t_a, t_s, t_d, \varepsilon + \delta)$-Rep private, where $t_s \approx t_a$.
\end{restatable}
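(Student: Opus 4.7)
The plan is to follow exactly the template used for Corollary~\ref{c:BF:F:ERP} (the Bloom filter analog), but with Lemma~\ref{lem:cf-prf-factorable} replacing Lemma~\ref{l:f-dec:BF}. First I would invoke Lemma~\ref{lem:cf-prf-factorable} to conclude that the PRF-wrapped insertion-only Cuckoo filter $\Pi$ with oracle access to $F$ is $F$-decomposable; because the construction also queries the internal hash functions $H_T$ and $H_I$, I would appeal to the multi-function reading of Definition~\ref{def:f-decomposability}, treating $F$ as $F_1$ and $H_T,H_I$ as auxiliary oracle functions on which no decomposability is required.

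Next, I would apply the multi-function extension of Lemma~\ref{l:f-dec:PI} (explicitly noted immediately after its statement) to deduce that $\Pi$ is $0$-PI, so $\varepsilon_\text{PI}=0$. I would then plug this into the multi-function extension of Theorem~\ref{th:general-privacy}. Inspecting the $\upS$ and $\qryS$ algorithms in Fig.~\ref{fig:prf-wrapped-cuckoo-filters}, the wrapping function $F$ is evaluated exactly once per insertion and once per query, so $\alpha=\beta=1$. Because only $F$-decomposability is invoked, the PRF switch contributes $\varepsilon_\text{PRF}=\varepsilon$ at the advertised $(n+q_u+q_t,\,t_a+t_d,\,\varepsilon)$-PRF parameters; this yields $(q_u,q_t,q_v,t_a,t_s,t_d,\varepsilon)$-Elem-Rep privacy of $\Pi$. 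Finally, I would apply Theorem~\ref{th:er-to-r} to convert Elem-Rep privacy into Rep privacy, paying the additive term $\delta = \Pr[W\cap V\neq\emptyset]$ and arriving at the claimed $\varepsilon+\delta$ bound.

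The main obstacle, such as it is, is bookkeeping rather than substance: one must check that the multi-function extensions of Lemma~\ref{l:f-dec:PI} and Theorem~\ref{th:general-privacy} really do leave $H_T$ and $H_I$ as idealized oracle functions throughout the reduction chain, so that only $F$ contributes PRF-switching cost, and that the permutation in the $\gamePI$ game is applied before the outer $F$ (so that $F(\pi(x))$ remains uniformly distributed in $\range$). Both points are handled by the remarks following those results, so no genuinely new technical ingredient is needed beyond the Bloom filter argument.
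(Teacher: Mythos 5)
Your proposal matches the paper's own proof: it invokes Lemma~\ref{lem:cf-prf-factorable} for $F$-decomposability, Lemma~\ref{l:f-dec:PI} (in its multi-function reading) for $0$-PI, Theorem~\ref{th:general-privacy} with $\alpha=\beta=1$, $\varepsilon_{\text{PI}}=0$, $\varepsilon_{\text{PRF}}=\varepsilon$ for Elem-Rep privacy, and Theorem~\ref{th:er-to-r} to pay the additive $\delta=\Pr[W\cap V\neq\emptyset]$ and obtain Rep privacy. The bookkeeping remarks about $H_T,H_I$ as auxiliary idealized oracles are exactly the points the paper handles via the multi-function extensions, so the argument is correct and essentially identical to the paper's.
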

\begin{proof}[sketch]
The proof proceeds similarly to Corollary \ref{c:BF:F:ERP}, using Lemmas \ref{lem:cf-prf-factorable} and \ref{l:f-dec:PI} along with Theorems \ref{th:general-privacy} and \ref{th:er-to-r}.
\end{proof}

We note that our PRF-wrapped variant of the insertion-only Cuckoo filter simplifies not only our correctness analysis but also privacy, by attaining function-decomposability.
Furthermore, the original insertion-only Cuckoo filter (Fig.~\ref{fig:up-cuckoo-not-f-decomposable}) does not satisfy $0$-PI due to a trivial distinguishing attack (see \fullversion{Appendix~\ref{app:cf-not-pi}}{full version}).

\subsection{Discussion on privacy}\label{sec:privacy-other-pds}

We explored two ways of defining a simulation-based privacy notion for \amq, each with respect to a specific leakage profile. Our first privacy definition, Elem-Rep privacy, has a leakage profile capturing the information intrinsically leaked by the \amq about the elements it stores. However, the bound obtained by quantifying the Elem-Rep privacy is not trivial to interpret; one must also carefully analyse the leakage to determine the amount of privacy obtained in practice, as is common in simulation-based notions. Our alternative definition, Rep privacy, has a smaller and simpler leakage profile, but the bound explicitly depends on how easy it is to guess elements stored in the \amq. This approach may be more useful in practice; for example, it allows to directly relate privacy to the min-entropy of the distribution of stored elements.

Our results confirm the intuition that one cannot hope to achieve privacy if the elements stored in the \amq are easy to predict, a ``low min-entropy'' scenario. In this setting, $\elemleakO$ would reveal a substantial amount of information as the adversary is likely to query elements in $V$, leading to the Elem-Rep privacy bound being a weak guarantee in practice. Similarly, we would not obtain a good bound via Rep privacy either, as the term $\pr{W \cap V \neq \emptyset}$ would be high. On the other hand, a ``high min-entropy'' scenario results in strong guarantees from both privacy notions. In fact, the number of $\elemleakO$ queries that give the adversary useful information is directly linked to its likelihood of 
guessing elements in $V$. 

We note that our privacy theorems can be used to analyse various real-world scenarios. For example, by setting $q_u = 0$ in Theorems~\ref{th:general-privacy} and \ref{th:er-to-r}, we cover the "static data" scenario, where an application first adds a set of elements to a PDS, and the adversary's goal is to learn these elements through only set membership queries.

Throughout this section, we have assumed that leaking $|\setV|$ is acceptable, as this is unavoidable for Bloom and Cuckoo filters in the \emph{public} setting (i.e.~when the $\revO$ oracle is available). However, there may be settings where $|\setV|$ is sensitive, in which case one may want to investigate alternative \amq (see\fullversion{~\cref{app:leaking-nothing}}{ full version}).

\section{Secure instances}\label{sec:secure-instantiations}

We sketch how to use our results to instantiate \amq instances achieving provable guarantees. An expanded discussion is provided in \fullversion{Appendix~\ref{app:secure-instantiations}}{the full version}, covering both correctness and privacy. Here we focus on the former aspect.
We aim to use our bounds to set \amq parameters. Recall the guarantee from Theorem~\ref{thm:general-correctness}:
\begin{align*}
	&\text{Adv}_{\Pi, \mathcal{A, S}}^{{\text{RoI}}} (\dD) {=}  \abs{\Pr[\Real(\advA, \dD) {=} 1] {-} \Pr[\Ideal(\advA, \dD,\simS) {=} 1]} \\
	& {=} \abs{\Pr[\dD(\advA) {=} 1] {-} \Pr[\dD(\simS(\advA, pp)) {=} 1]} {\leq} \varepsilon +  2 q_t {\cdot} \pfp{n + q_u},
\end{align*}
where $\varepsilon$ is a PRF distinguishing advantage, $n$ is the number of elements initially inserted into the \amq, and $q \coloneqq q_u + q_t$ is the total number of queries made by $\advA$ that influence its success probability. Crucially, usually $\pfp{n + q_u}$ can be estimated using well-established upper bounds, cf.\ Lemmas~\ref{lem:bf-false-positive-probability} and~\ref{clm:cf-variants-fp-bound}.

This result allows us to establish an upper bound on the probability $\Pr[\dD(\advA) {=} 1]$ of an adversary $\advA$ with the given query budget $q$ finding a sequence of queries to an \amq $\Pi$ that allows them to satisfy some desired predicate $P$ in the $\Real$ world, by relating this probability to that of $\advA$ satisfying $P$ in the $\Ideal$ world. 
As a practical example, we investigate the choice of adversarially correct parameters for Bloom and PRF-wrapped Cuckoo filters for one possible query budget; two more budgets are explored in \fullversion{Appendix~\ref{app:secure-instantiations}}{the full version}. Concretely, we bound the probability that after $n+q_u$ adversarial insertions, querying a random non-inserted element returns a false positive result, cf.~Remark~\ref{rem:whyAdvCorrect}.  In Fig.~\ref{fig:main-body-correct-instances} we plot an upper bound of the false positive probability against the size of the data structure in both adversarial and non-adversarial settings for various public parameters. Our results show that achieving protection against adversarial inputs requires roughly doubling (Bloom) or trebling (insertion-only Cuckoo) the storage used, as compared to the honest setting.

\begin{remark}
	We stress that to obtain correctness and privacy, the key to any PRFs used needs to be stored securely, say in hardware, so as to resist being exposed by a state reveal. \fullversion{Furthermore, our theorems technically provide guarantees only as long as the number of maximum PRF queries assumed by the theorem is respected (e.g., $\alpha (n + q_u) + \beta q_t$ in Theorem~\ref{thm:general-correctness}). Since PRF re-keying would require additional bookkeeping to answer further queries correctly, the \amq shelf-life should be carefully considered when using our results.}{}
\end{remark}

\begin{figure}[t]
	\centering
	\hspace{-1em}
	\begin{subfigure}{.45\textwidth}
		\centering
			\adjustbox{trim={0\width} {8pt} {0} {0}, clip}{
				\includegraphics[width=.9\textwidth]{\detokenize{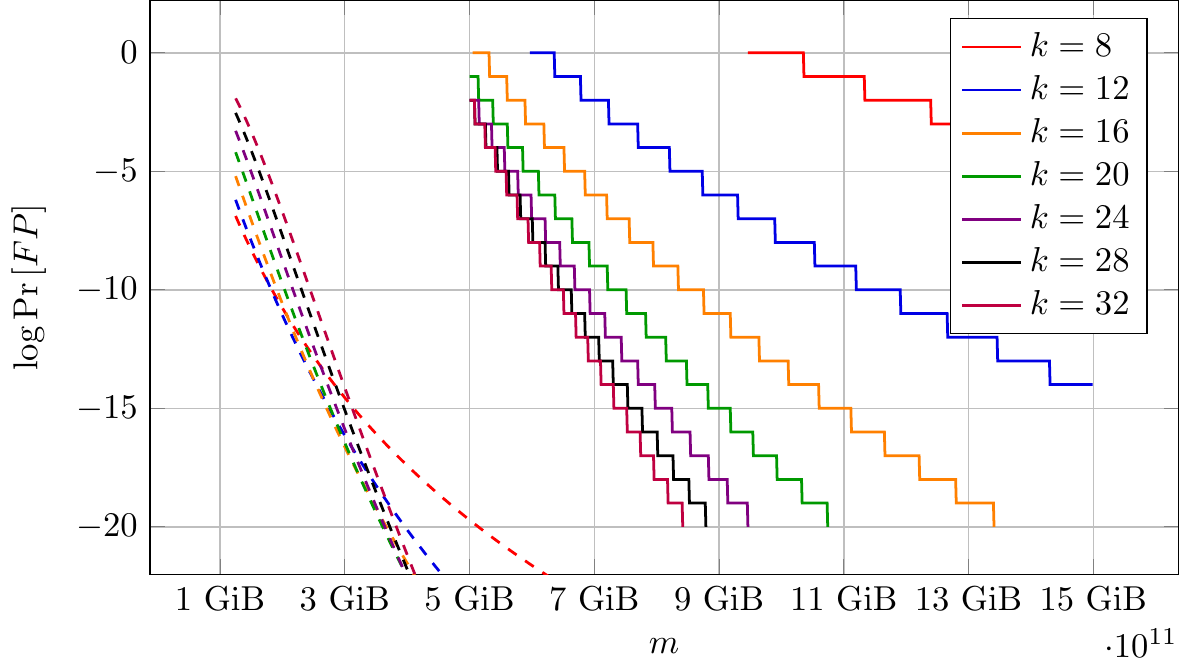}}}
		\caption{Bloom, $\log{n} = 7$, $\log{(q_u + q_t)} = 30$}\label{fig:mb-bf-correct-instances-small-n-big-q}
	\end{subfigure}
	\\\vspace{.5em}
	\hspace{-1em}
	\begin{subfigure}{.47\textwidth}
	\centering
		\adjustbox{trim={0\width} {9pt} {0} {0}, clip}{
			\includegraphics[width=.9\textwidth]{\detokenize{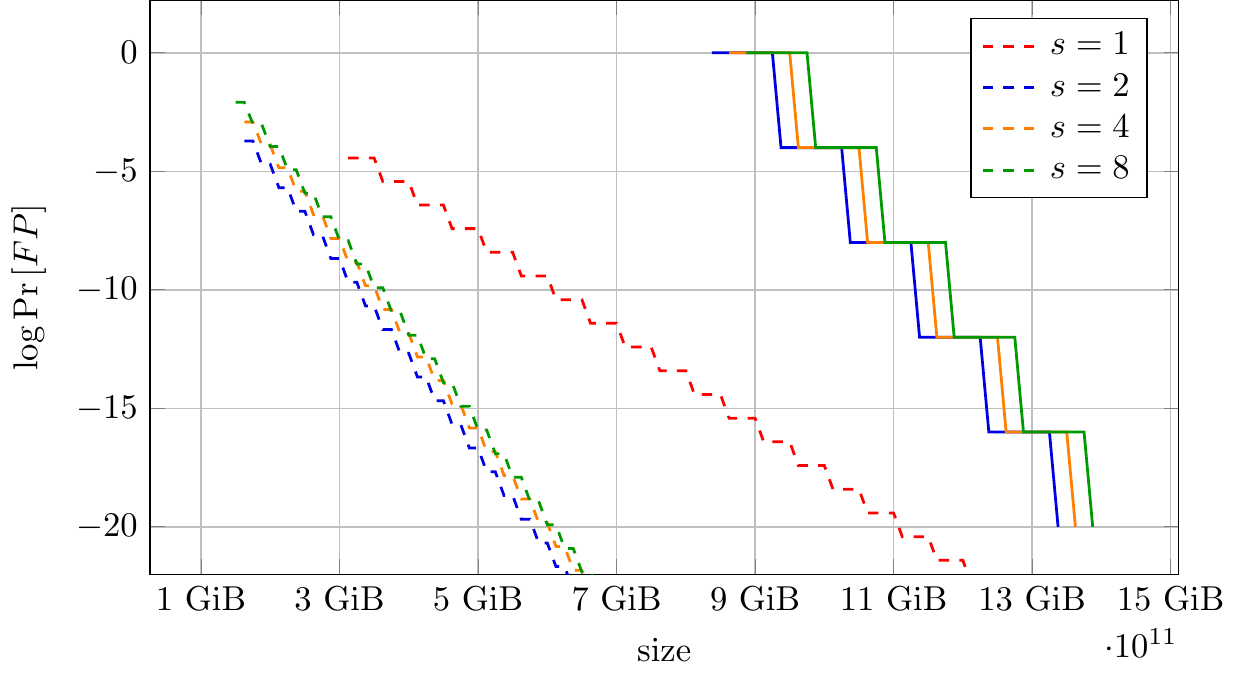}}}
	\caption{Cuckoo, $\log{n} = 7$, $\log{(q_u + q_t)} = 30$}\label{fig:mb-cf-correct-instances-small-n-big-q}
	\end{subfigure}
\caption{Correctness guarantees vs.~storage trade-offs for Bloom and PRF-wrapped insertion-only Cuckoo filters. Solid lines represent adversarial guarantees ($\log\Pr[FP] \ge \log\Pr[\dD(\advA) = 1]$). Dashed lines represent the values obtained assuming NAI ($\log\Pr[FP] = \log\opfp{n+q_u}$).}\label{fig:main-body-correct-instances}
\end{figure}

\section{Conclusions}

We have introduced a framework for analysing the correctness (under adversarial input) and privacy of AMQ-PDS. We employed a simulation-based approach, with correctness and privacy emerging through imposing different constraints on the simulators. We have applied our approach to study Bloom and insertion-only Cuckoo filters, showing how they may be securely instantiated and highlighting the cost of adding security over unprotected instances. 

Our work lays a foundation for further study and analysis. Some important topics for future investigation include:
\begin{itemize*}
\item How tight are the bounds we provide? Are there matching attacks or can tighter bounds be proven? This is important since using our bounds to set concrete AMQ-PDS parameters incurs overhead in storage.
\item Can our Cuckoo filter analysis be extended to allow deletion of elements? This would require extension of our syntax and models, as well as careful modification of the consistency rules. 
\item 
Can our simulation-based approach be extended to other classes of PDS? Does ``wrapping'' the inputs of a PDS using a PRF always work as a protection? Investigating this would require developing a general syntax for PDS beyond that in our work and in~\cite{_CCS:ClaPatShr19}.
\item We focused on a strong adversarial model, where the adversary can access the \amq's state and has full adaptivity in its queries, necessitating the use of PRFs to achieve security. Do weaker primitives, e.g.\ UOWHFs, suffice in weaker settings, such as if the state is not accessible to the adversary? For example, the results of~\cite{_CCS:ClaPatShr19} suggest that salted hashes may indeed suffice.
\item 
While we considered an adversary attacking an honest service provider via an API, one may also ask: how can a user of the API obtain guarantees about the accuracy of the service being provided? 
\item How are our simulation-based correctness notions and the game-based ones in~\cite{_CCS:ClaPatShr19} related? Can we show a form of equivalence akin to that between semantic and IND-CPA security?
\end{itemize*}

We conclude by remarking that cryptographic tools and thinking seem to be broadly applicable to the problem of understanding the behaviour of PDS in adversarial settings. This topic is currently relatively under-researched, but is of growing importance in view of how rapidly PDS are being adopted in practice.

\begin{acks}
	The work of Fili\'{c} was supported by the Microsoft Research Swiss Joint Research Center.
	The work of Paterson was supported in part by a gift from VMware. 
	The work of Virdia was supported by the Zurich Information Security and Privacy Center. 
\end{acks}

\bibliographystyle{ACM-Reference-Format}
\ACM{
\balance
}{}
\bibliography{local}

\fullversion{
\ACM{
	\appendix
}
{
\IEEEtrans{
\appendices
}{
\clearpage
\thispagestyle{empty}
~
\vfill
\begin{center}
	\LARGE{\bfseries Supplementary Material}
\end{center}
\vfill
~
\clearpage
\normalsize
\appendix
}
}

\section{Statistical distance}\label{app:stat-distance}

\begin{definition}[Statistical distance]\label{def:statistical-distance}\label{def:statistically-close}
	Let $X$ and $Y$ be random variables with finite support $D = \supp{X} = \supp{Y}$. We define the statistical distance $\statdist$ between $X$ and $Y$ as $\statdist(X,Y) \coloneqq \frac{1}{2}\sum_{z \in D}\abs{\Pr[X=z]-\Pr[Y=z]}.$
	If $SD(X,Y) \le \varepsilon$, we say $X$ and $Y$ are $\varepsilon$-statistically close.
\end{definition}

\begin{lemma}[Data Processing Inequality]\label{lem:data-processing}
	Let $X$ and $Y$ be random variables with finite support $D = \supp{X} = \supp{Y}$.
	\begin{enumerate}[leftmargin=1.5em]
		\item Let $f \colon D \rightarrow R$ be a function such that $f(X)$ and $f(Y)$ are random variables with finite support $R$. Then $\statdist(f(X), f(Y)) \le \statdist(X,Y)$.
		\item Let $Z$ be a random variable with finite support $S$ pairwise independent from $X$ and $Y$, and let $g \colon D \times S \rightarrow R$ be a function such that $g(X,Z)$ and $g(Y,Z)$ are random variables with finite support $R$. Then $\statdist(g(X,Z), g(Y,Z)) \le \statdist(X,Y)$.
	\end{enumerate}
\end{lemma}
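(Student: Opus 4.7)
The plan is to prove both parts directly from Def.~\ref{def:statistical-distance}, with Part 2 reducing to Part 1 via conditioning.

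For Part 1, the key observation is that the preimage family $\{f^{-1}(r)\}_{r \in R}$ forms a partition of $D$. I would write
\[
\pr{f(X) = r} - \pr{f(Y) = r} = \sum_{x \in f^{-1}(r)}\bigl(\pr{X = x} - \pr{Y = x}\bigr),
\]
apply the triangle inequality to pass the absolute value inside the sum over $x$, and then sum the resulting inequality over $r \in R$. Because the preimages partition $D$, the resulting double sum collapses to $\sum_{x \in D}\bigl|\pr{X = x} - \pr{Y = x}\bigr|$, and halving gives the desired bound.

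For Part 2, the plan is to condition on $Z$. For each $z \in S$, define the slice $g_z \colon D \to R$ by $g_z(x) = g(x, z)$. The pairwise independence hypothesis gives
\[
\pr{g(X, Z) = r} = \sum_{z \in S} \pr{Z = z}\,\pr{g_z(X) = r},
\]
and analogously for $Y$. Taking the difference, applying the triangle inequality, swapping the order of summation, and pulling out the common factor $\pr{Z = z}$ yields
\[
\statdist(g(X, Z), g(Y, Z)) \le \sum_{z \in S}\pr{Z = z}\,\statdist(g_z(X), g_z(Y)).
\]
Part~1 applied to each $g_z$ bounds every $\statdist(g_z(X), g_z(Y))$ by $\statdist(X, Y)$; since $\sum_z \pr{Z = z} = 1$, the overall bound follows.

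There is no real obstacle here — this is a textbook fact about total variation distance. The only subtlety worth flagging is the role of the independence hypothesis in Part~2: one only needs $Z$ to be independent of $X$ (and separately of $Y$), since $X$ and $Y$ never appear inside the same probability after conditioning on $Z$. Full mutual independence of $X$, $Y$, $Z$ would be stronger than necessary, which is consistent with the lemma's statement asking only for pairwise independence.
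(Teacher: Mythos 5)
Your proof is correct, and Part~1 is essentially identical to the paper's argument (partition $D$ by preimages, triangle inequality, collapse the double sum). Where you diverge is Part~2: you condition on $Z$, apply Part~1 to each slice $g_z(\cdot)=g(\cdot,z)$, and average, obtaining the intermediate convexity bound $\statdist(g(X,Z),g(Y,Z)) \le \sum_{z}\pr{Z=z}\,\statdist(g_z(X),g_z(Y))$; the paper instead applies Part~1 once to the joint variables $(X,Z)$ and $(Y,Z)$ with $g$ viewed as a function on $D\times S$, and then shows by direct factoring that $\statdist((X,Z),(Y,Z)) = \statdist(X,Y)$. The two reductions are equally elementary and use exactly the same hypotheses --- both need only that $Z$ is independent of $X$ and, separately, of $Y$, so your closing remark about pairwise independence sufficing matches the paper's usage. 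The paper's route yields the slightly stronger intermediate fact that appending an independent $Z$ leaves the statistical distance exactly unchanged (an equality, not just a bound), which is a reusable observation in its own right; your route yields the per-slice averaging inequality, which would be the more useful form if one wanted bounds that depend on $z$. Either intermediate statement immediately gives the lemma.
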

\begin{proof}
	\begin{enumerate}[leftmargin=1.5em]
		\item Let $f^{-1}(w) \coloneqq \{z \in D \mid f(z) = w)\}$.
		By direct computation,
		\begin{align*}
			\statdist&(f(X), f(Y)) = \frac{1}{2}\sum_{w \in R} \Big|\Pr[f(X) = w] - \Pr[f(Y)=w]\Big|\\
			&= \frac{1}{2}\sum_{w \in R} \Big|\Pr[X \in f^{-1}(w)] - \Pr[Y \in f^{-1}(w)]\Big|\\
			&= \frac{1}{2}\sum_{w \in R} \left|\sum_{z \in f^{-1}(w)}\Pr[X = z] - \Pr[Y = z]\right|\\
			&\le \frac{1}{2}\sum_{w \in R} \sum_{z \in f^{-1}(w)} \Big|\Pr[X = z] - \Pr[Y = z]\Big|\\
			&= \frac{1}{2}\sum_{z \in D} \Big|\Pr[X = z] - \Pr[Y = z]\Big| = \statdist(X,Y). 
		\end{align*}
		\item We start by using the proof above on the random variables $(X,Z)$ and $(Y,Z)$ defined over $D \times S$, such that $$\statdist(g(X,Z), g(Y,Z)) \le \statdist((X,Z), (Y,Z)).$$
		By assumption, $Z$ is pairwise independent from $X$ and $Y$, such that 
		\begin{align*}
			\Pr[(X,Z) = (x,z)] &= \Pr[X=x \wedge Z=z] = \Pr[X=x] \cdot \Pr[Z=z],
		\end{align*}
		and similarly $\Pr[(Y,Z) = (y,z)] = \Pr[Y=y] \cdot \Pr[Z=z].$
		Then, by direct computation,
		\begin{align*}
			\statdist&((X,Z), (Y,Z)) = \frac{1}{2}\sum_{s \in S} \sum_{d \in D} \left|
			\begin{matrix*}
				\Pr[(X,Z) = (d,s)] \\
				\hspace{1em} - \Pr[(Y,Z) = (d,s)]
			\end{matrix*}
			\right|\\
			&= \frac{1}{2}\sum_{s \in S} \sum_{d \in D} \Big|\Pr[X=d]\cdot\Pr[Z=s] - \Pr[Y=d]\cdot\Pr[Z=s]\Big|\\
			&= \frac{1}{2}\sum_{s \in S} \Pr[Z=s] \sum_{d \in D} \Big|\Pr[X=d] - \Pr[Y=d]\Big|\\
			&= \frac{1}{2} \sum_{d \in D} \Big|\Pr[X=d] - \Pr[Y=d]\Big| = \statdist(X,Y). \ACM{}{\queedee}
		\end{align*}
	\end{enumerate}
\end{proof}

\begin{lemma}\label{lem:data-processing-close}
	Let $X$ and $Y$ be $\varepsilon$-statistically close random variables with finite support $D = \supp{X} = \supp{Y}$, let $Z$ be a random variable with finite support $S$ pairwise independent from $X$ and $Y$. Let $f \colon D \rightarrow R$ and let $g \colon D \times S \rightarrow R$ be functions. Then $f(X)$ and $f(Y)$ are $\varepsilon$-statistically close random variables, and so are $g(X,Z)$ and $g(Y,Z)$.
\end{lemma}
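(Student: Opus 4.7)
The plan is to observe that this lemma is an immediate consequence of Lemma~\ref{lem:data-processing} combined with the definition of $\varepsilon$-statistical closeness (Def.~\ref{def:statistically-close}). No new calculation is needed; both claims are obtained by chaining inequalities.

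First, I would handle the $f(X)$ vs.\ $f(Y)$ case. By hypothesis, $\statdist(X,Y) \le \varepsilon$. Since $f \colon D \rightarrow R$ is a function such that $f(X)$ and $f(Y)$ are random variables with support contained in $R$, Lemma~\ref{lem:data-processing}(1) gives $\statdist(f(X), f(Y)) \le \statdist(X,Y)$. Composing the two inequalities yields $\statdist(f(X), f(Y)) \le \varepsilon$, which by Def.~\ref{def:statistically-close} is precisely the statement that $f(X)$ and $f(Y)$ are $\varepsilon$-statistically close.

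For the second claim, the argument is structurally identical but invokes Lemma~\ref{lem:data-processing}(2) instead. Since $Z$ has finite support $S$ and is pairwise independent from both $X$ and $Y$, and since $g \colon D \times S \rightarrow R$ is a function whose outputs $g(X,Z)$ and $g(Y,Z)$ are random variables with support in $R$, the second part of Lemma~\ref{lem:data-processing} gives $\statdist(g(X,Z), g(Y,Z)) \le \statdist(X,Y) \le \varepsilon$, completing the proof.

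There is no genuine obstacle here; the only thing to be careful about is to verify that the hypotheses of Lemma~\ref{lem:data-processing} are matched verbatim (finiteness of supports, and the pairwise independence of $Z$ from each of $X$ and $Y$ in the second part), which is guaranteed by the hypotheses of the present lemma. The proof should therefore be presented as a one-line application of each part of Lemma~\ref{lem:data-processing}.
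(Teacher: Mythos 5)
Your proposal is correct and matches the paper's own proof, which likewise dispatches the lemma as an immediate consequence of Definition~\ref{def:statistically-close} and Lemma~\ref{lem:data-processing}. Chaining $\statdist(f(X),f(Y)) \le \statdist(X,Y) \le \varepsilon$ (and the analogous inequality for $g$ via the second part of that lemma) is exactly the intended argument.
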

\begin{proof}
	This follows directly from Def.~\ref{def:statistically-close} and Lemma~\ref{lem:data-processing}. \ACM{}{\qed}
\end{proof}

\section{Cuckoo filters}\label{app:cuckoo-filters}
\subsection{Algorithms}\label{app:cf-algorithms}
We first describe in more detail the functioning of insertion-only Cuckoo filters, as proposed in~\cite{_CoNEXT:FAKM14}. We refer the reader to Fig.~\ref{fig:cuckoo-filter-api} for the listings of the $\setupS$, $\upS$ and $\qryS$ algorithms.

The $\upS$ algorithm evaluated on an element $x$ uses $H_T$ to compute its \emph{tag} or \emph{fingerprint} $H_T(x)$, and $H_I$ and its tag to compute a pair of bucket indices $i_1=H_I(x),\ i_2 = i_1 \oplus H_I(H_T(x))$. If one of the two buckets $\sigma_{i_1}$, $\sigma_{i_2}$ is not full, the tag is added to it and insertion is complete. If both buckets are full, one of the two indices is picked at random, $i \getsr \{i_1, i_2\}$, together with a random slot $z \in [s]$. The tag $\tau$ currently stored in the slot $\sigma_i[z]$ is \emph{evicted}, and $H_T(x)$ is inserted in its place. Tag $\tau$ is then inserted into $\sigma_{i \oplus H_I(\tau)}$, which happens to be 
its other valid bucket (due to the relation between $i_1$ and $i_2$). This may require a further eviction and insertion of the evicted element, if $\sigma_{i \oplus H_I(\tau)}$ is also full. Up to $num$ such evictions are performed. If after $num$ evictions there is still another tag $\tau'$ to be relocated, 
the exact behaviour is not specified in~\cite{_CoNEXT:FAKM14}. The pseudocode for the insertion procedure presented in~\cite{_CoNEXT:FAKM14} suggests that $\tau'$ would be dropped from the filter, potentially turning it into a false negative element. However, the reference implementation published by the authors of~\cite{_CoNEXT:FAKM14}\footnote{\url{https://github.com/efficient/cuckoofilter/blob/917583d6abef692dfa8e14453bd77d6e0b61eef3/src/cuckoofilter.h\#L139}} stores $\tau'$ in a special stash $\sigma_{evic}$, and disables further insertions to the filter, 
preventing a false negative. 
We adopt the latter option to simplify our analysis. 
We note that while such an event would cause further insertions to fail (while leaving the state unchanged), 
Cuckoo filter parameters should be chosen as to make this be a low probability event,\footnote{In~\cite{_CoNEXT:FAKM14} the authors lower-bound this probability and then investigate it experimentally, however their argument does not apply to insertion-only Cuckoo filters where no duplicate tags are inserted in a bucket. The analysis for computing this probability should relate to the analysis for estimating the probability of bad insertions in Cuckoo Hashing, which has received rigorous analysis~\cite{DBLP:journals/jal/PaghR04,DBLP:journals/ipl/DevroyeM03,DBLP:journals/talg/DrmotaK12}.} meaning that in practice many calls to $\upS$ can be performed before insertions fail. Such a limit in the number of usable $\upS$ calls is also implied in practice for Bloom filters, since too many insertions raise the false positive probability beyond acceptable levels.
As suggested in~\cite{_CoNEXT:FAKM14} when describing insertion-only Cuckoo filters, we slightly modify the insertion algorithm so as to not insert duplicates of the same tag into the buckets $i_1,\ i_2$.

The $\qryS$ algorithm computes $i_1,\ i_2$ and checks if the element's tag is stored in either bucket, or if the element has been previously stashed.

\begin{figure*}[t]
	\Wider[10em]{
		\centering
		\begin{pchstack}[boxed,space=0.5em]
			\begin{pcvstack}[space=.2em]
					\procedure[linenumbering,headlinecmd={\vspace{.1em}\hrule\vspace{.3em}}]{$\setupS(pp)$}{%
							s, \lambda_I, \lambda_T, num \gets pp\\
							\pcmycomment{Initialise $2^{\lambda_I}$ buckets, $s$ $\lambda_T$-bit slots}\\
							\pcfor i \in 2^{\lambda_I}: \ \sigma_i \gets \bot^s\\
							\sigma_{evic} \gets \bot \\
							\pcreturn \sigma \gets (\sigma_i)_i, \sigma_{evic}
						}
					\procedure[linenumbering,headlinecmd={\vspace{.1em}\hrule\vspace{.3em}}]{$\qryS^{H_T, H_I}(x, \sigma)$}{%
							tag \gets H_T(x) \\
							i_1 \gets H_I(x) \\
							i_2 \gets i_1 \xor H_I(tag) \\
							a \gets [tag \in \sigma_{i_1}\ \text{or}\ tag \in \sigma_{i_2}\ \text{or}\ tag = \sigma_{evic}]\\
							\pcreturn a 
						}
				\end{pcvstack}
				\procedure[headlinecmd={\vspace{.1em}\hrule\vspace{.3em}}]{$\upS^{H_T,H_I}(x, \sigma)$}{
						\begin{subprocedure}
								\procedure[linenumbering]{}{
										tag \gets H_T(x) \\
										i_1 \gets H_I(x) \\
										i_2 \gets i_1 \xor H_I(tag) \\
										\pcmycomment{check if $\upS$ was disabled, first} \\
										\pcif \sigma_{evic} \ne \bot : \pcreturn \bot,\, \sigma\\
										\pcmycomment{if tag is already in either bucket} \\ 
										\pcif tag \in \sigma_{i_1}\ \text{or}\ tag \in \sigma_{i_2} : \pcreturn \top,\, \sigma \label{cf:alg:checkifduplicate}\\
										\pcmycomment{check if any bucket has empty slots} \\
										\pcfor i \in \{i_1, i_2\}\ \pcmycomment{in that order}\\
										\t \pcif \Load{\sigma_{i}} < s\\
										\t \t \sigma_{i} \gets \sigma_{i}\ \diamond\ tag\\
										\t \t \pcreturn \top,\, \sigma\\
										\pcmycomment{if no empty slots, displace something} 
									}
							\end{subprocedure}
						\>
						\begin{subprocedure}
								\procedure[lnstart=13,linenumbering]{}{
										i \getsr \{i_1, i_2\} \\
										\pcfor g \in [num] \\ 
										\t slot \getsr [s] \\ 
										\t elem \gets \sigma_{i,slot}\ \pcmycomment{element to be evicted} \\
										\t \pcmycomment{swap elem and tag} \\ 
										\t \sigma_{i}[slot] \gets tag;\ tag \gets elem \\ 
										\t i \gets i \xor H_I(tag) \label{line:later-call-to-H_T-in-cf}\\
										\t \pcif \Load{\sigma_i} < s \\
										\t \t \sigma_i \gets \sigma_i\ \diamond\ tag\\
										\t \t \pcreturn \top,\, \sigma \\
										\pcmycomment{could not store $x$ without an eviction}\\
										\sigma_{evic} \gets tag\ \pcmycomment{last value of $tag$ after loop}\\
										\pcreturn \top,\, \sigma
									}
							\end{subprocedure}
					}
			\end{pchstack}
		}
	\caption{\amq syntax instantiation for the Cuckoo filter.}
	\label{fig:cuckoo-filter-api}
\end{figure*}

\subsection{Proof of Lemma \ref{clm:cf-variants-fp-bound}}\label{app:proof-cf-variants-fp-bound}
\prfcfpfp*
\begin{proof}
By Def.~\ref{def:nai-fpp}, $\pfp{n}$ is computed assuming that the elements $x_i \in \domain$ that are inserted via $\upS$ and the elements queried via $\qryS$ are distinct.
PRF-wrapped Cuckoo filters can be seen as Cuckoo filters where the elements being inserted are the output of a PRF evaluated on such distinct elements.
In particular, in the $\Ideal$ world where $\pfp{n}$ is to be computed, we replace our PRF with a random function $F \getsr \Funcs[\domain,\range]$ (i.e.~we set $D_F = U\left(\Funcs[\domain,\range]\right)$ following the notation in Def.~\ref{def:nai-fpp}).

We proceed to compute the desired bound. Let $S = \{\tau_1, \dots, \tau_{2s+1}\}$ be a set of $2s+1$ strings in $\{0,1\}^{\lambda_T}$ such that $\tau_i = H_T(F(x_i))$ for $i \in [2s+1]$, and let $y$ be a string in $\{0,1\}^{\lambda_T}$ such that $y = H_T(F(x))$. Following the analysis of~\cite{_CoNEXT:FAKM14}, we consider $H_T$ to be a random function, such that the $\tau_i$ and $y$ are uniformly distributed in $\{0,1\}^{\lambda_T}$.
We then have
\begin{align*}
	&\pfp{n} \le \opfp{n} = \Pr[y \in S] \enspace \text{by the analysis done in~\cite{_CoNEXT:FAKM14}} \\
	&\ = P[y \in S \wedge (F(x),\,F(x_1),\dots,F(x_{2s+1})\ \text{are distinct})]\\
	&\ \quad + P[y \in S \wedge (F(x),\,F(x_1),\dots,F(x_{2s+1})\ \text{are not distinct})]\\
	&\ \le \overline{P_{\Pi',pp}}(FP \mid {n}) + P[F(x),\,F(x_1),\dots,F(x_{2s+1})\ \text{are not distinct}]
\end{align*}
Since $F$ is a random function and by Def.~\ref{def:nai-fpp} the elements $x$, $x_1$, \dots, $x_{2s+1}$ are distinct, $P[F(x),\,F(x_1),\dots,F(x_{2s+1})\ \text{are not distinct}]$ is the probability that there is a collision in a collection of $2s+2$ uniformly random strings sampled from $\range$.
By the birthday bound, this probability is at most $\frac{(2s+2)^2}{2\,|\range|}$, hence giving the result. \ACM{}{\qed}
\end{proof}

\subsection{Permutation invariance of original Cuckoo filters}
\label{app:cf-not-pi}

We show that original Cuckoo filters (Fig.~\ref{fig:up-cuckoo-not-f-decomposable}) do not satisfy 0-PI, by describing a distinguishing attack.

Consider $\gamePI$ in Fig.~\ref{g:priv:PI:nn}, where $\Pi$ is an insertion-only original Cuckoo filter. In order to insert or query an element, its corresponding tag and buckets are computed as follows when $c = 0$:
\begin{align*}
tag  = H_T (x); \quad
i_1  = H_I (x); \quad
i_2  = i_1 \xor H_I (tag) = i_1 \xor H_I (H_T (x) ).
\end{align*}
On the other hand, when $c = 1$, we have:
\begin{align*}
tag  = H_T (\pi (x)); \enspace
i_1  = H_I ( \pi (x)); \enspace
i_2  = i_1 \xor H_I (tag) = i_1 \xor H_I (H_T (\pi (x) )).
\end{align*}
It is then possible to distinguish between the $c = 0, 1$ versions of $\gamePI$ in the following way. 

Suppose $c = 0$. By inserting $x$ and then making a $\revO$ query, we learn its tag $H_T(x)$. We then make a sequence of pairs of $\upO$ and $\revO$ queries on random inputs. Eventually, $x$ will be moved between buckets as a result of eviction (and we can tell when this happens from the output of the $\revO$ queries). By noting which buckets $i_1, i_2$, it moves between, we can compute $i_1 \xor i_2 = H_I (H_T(x))$. This tells us what the first bucket would be if we inserted the element $H_T(x)$, i.e. if we made an $\upO(H_T(x))$ query. 

Now suppose $c = 1$. By inserting $x$ we learn its tag $H_T(\pi(x))$, and by noting which buckets $i_1, i_2$ it moves between (in the same way as before), we can compute $i_1 \xor i_2 = H_I (H_T ( \pi(x)))$. However, with overwhelming probability this no longer corresponds to the first bucket if we inserted the element $H_T(\pi(x))$. 

Thus, by inserting an element and then later inserting its tag, one can distinguish between the real and ideal worlds.

\section{Correctness}\label{app:correctness} 

\subsection{Proof of Theorem \ref{thm:general-correctness}}
\label{app:correctness-proofs}

We first prove a utility lemma, which we will use to prove that the view of $\Pi$ simulated by $\simS$ in $\Ideal$ is NAI for insertion-only \amq that satisfy function-decomposability, reinsertion invariance and permanent disabling. 

\begin{lemma}
\label{lem:bf-utility-lemma}\label{lem:utility-lemma}
	Let $F \getsr \Funcs[\domain,\range]$ be a random function, and let $\Pi$ be an $F$-decomposable \amq with public parameters $pp$ and an oracle access to function $F$.
	Let $\varepsilon \ge 0$, let $i$ be a positive integer and let $\sigma^{(i-1)}$ be a random variable representing the output of an algorithm returning $(i{-}1, \varepsilon)$-NAI state for $\Pi$.
	Define random variables $r$, $Y$ and $\sigma$ such that $r \sim U(\coinspace)$,  $Y \sim U(\range)$, $r$, $Y$ and $\sigma^{(i-1)}$ are pairwise independent, and $(b, \sigma) \coloneqq \upS^{\identity_\range}(Y, \sigma^{(i-1)}; r)$.
	Then $\sigma$ is $(i, \varepsilon)$-NAI.
\end{lemma}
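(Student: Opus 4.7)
My plan is to show that passing an $(i{-}1,\varepsilon)$-NAI state through one more application of $\upS^{\identity_\range}$ on a fresh uniform range element yields a state $\varepsilon$-close to the output of $i$-$\naigen^F(pp)$. The argument has two parts: first establish distributional equality in the ``exactly NAI'' case, and then lift to the $\varepsilon$-close case by the data processing inequality.

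First I would set up a coupling. Let $\tilde{\sigma}^{(i-1)}$ denote a random variable distributed exactly as the output of $(i{-}1)$-$\naigen^F(pp)$, and let $\tilde{Y}\sim U(\range)$, $\tilde{r}\sim U(\coinspace)$ be jointly independent of $\tilde{\sigma}^{(i-1)}$. Define $(\tilde{b},\tilde{\sigma}) \coloneqq \upS^{\identity_\range}(\tilde{Y},\tilde{\sigma}^{(i-1)};\tilde{r})$. Since $\sigma^{(i-1)}$ is $(i{-}1,\varepsilon)$-NAI, its distribution is $\varepsilon$-statistically close to that of $\tilde{\sigma}^{(i-1)}$. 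The pairwise independence of $Y,r,\sigma^{(i-1)}$ (and likewise of $\tilde{Y},\tilde{r},\tilde{\sigma}^{(i-1)}$) lets me invoke Lemma~\ref{lem:data-processing-close} with the deterministic map $g\colon(\sigma',y,r)\mapsto \upS^{\identity_\range}(y,\sigma';r)$ (the state component), yielding $\statdist(\sigma,\tilde{\sigma})\le\varepsilon$.

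Next I would show that $\tilde{\sigma}$ is distributed identically to $\sigma^{(i)}$, the output of $i$-$\naigen^F(pp)$. Unrolling the definition of $\naigen$ and using $F$-decomposability, every call $\upS^F(x_j,\sigma^{(j-1)})$ equals $\upS^{\identity_\range}(F(x_j),\sigma^{(j-1)})$, so the output of $i$-$\naigen^F(pp)$ is obtained by sampling distinct $x_1,\dots,x_i$ from $\domain$, drawing fresh coins $r_1,\dots,r_i$, and iterating $(b_j,\sigma^{(j)})\gets \upS^{\identity_\range}(F(x_j),\sigma^{(j-1)};r_j)$. Because $F\getsr\Funcs[\domain,\range]$ is a uniformly random function and $x_1,\dots,x_i$ are pairwise distinct, the random variables $F(x_1),\dots,F(x_i)$ are i.i.d.~uniform over $\range$; in particular, $F(x_i)$ is uniform in $\range$ and independent of $F(x_1),\dots,F(x_{i-1})$, hence independent of $\sigma^{(i-1)}$ which is a (randomised) function of the latter. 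Thus the final step of $i$-$\naigen^F(pp)$ has the same joint distribution as $\upS^{\identity_\range}(\tilde{Y},\tilde{\sigma}^{(i-1)};\tilde{r})$ with $\tilde{Y},\tilde{r}$ fresh uniform and independent of $\tilde{\sigma}^{(i-1)}$.

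Combining the two observations, $\sigma$ is $\varepsilon$-close to $\tilde{\sigma}$, and $\tilde{\sigma}$ has exactly the $i$-$\naigen^F(pp)$ distribution, so $\sigma$ is $\varepsilon$-close to the NAI distribution at level $i$; that is the definition of $(i,\varepsilon)$-NAI. The main (minor) obstacle is carefully justifying that $F(x_i)$ is independent of $\sigma^{(i-1)}$ when $F$ is sampled once globally: this uses exactly the distinctness of the $x_j$ and the uniform sampling of $F$ over all functions $\domain\to\range$. Nothing else in the argument is delicate, since the data processing inequality does the heavy lifting for the $\varepsilon$-closeness propagation.
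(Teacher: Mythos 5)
Your proposal is correct and follows essentially the same route as the paper's proof: introduce an exactly-NAI reference state, apply one more $\upS^{\identity_\range}$ step on a fresh uniform range element and fresh coins, use Lemma~\ref{lem:data-processing-close} to propagate the $\varepsilon$-closeness, and argue via $F$-decomposability plus distinctness of the sampled $x_j$ that $F(x_i)$ is uniform and independent of the prior state, so the result matches the $i$-$\naigen^F(pp)$ distribution. No gaps to flag.
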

\begin{proof}
	Let $\overline{\sigma}^{(i-1)}$ be a random variable representing the output of ($i{-}1$)-$\naigen^F(pp)$, executed independently from $Y$, $r$ and $\sigma^{(i-1)}$.
	By assumption $\sigma^{(i-1)}$ is $(i{-}1, \varepsilon)$-NAI, hence $$\statdist(\sigma^{(i-1)}, \overline{\sigma}^{(i-1)}) \le \varepsilon.$$
	
	Define a random variable $\overline{\sigma}$ such that $(\overline{b}, \overline{\sigma}) \coloneqq \upS^{\identity_\range}(Y,\overline{\sigma}^{(i-1)}; r)$.
	Since $Y$ and $r$ are pairwise independent from $\sigma^{(i-1)}$ and $\overline{\sigma}^{(i-1)}$, we can invoke Lemma~\ref{lem:data-processing-close} with $g(x, (Y, r)) = \upS^{\identity_\range}(Y,\, x; r)$, evaluated on $x = {\sigma^{(i-1)}}$ and $\overline{\sigma}^{(i-1)}$. This implies that $\sigma$ and $\overline{\sigma}$ are $\varepsilon$-statistically close.
	
	To prove that $\sigma$ is $(i, \varepsilon)$-NAI, all that is now required is arguing that $\overline{\sigma}$ has the same distribution as the output of $i$-$\naigen^F(pp)$.
	Following the notation in Fig.~\ref{fig:NAI-gen}, we start by noticing that given a specific output of $(i{-}1)$-$\naigen^F(pp)$ such as $\overline{\sigma}^{(i-1)}$ above, one can obtain a state $\overline{\sigma}^{(i)}$ following the distribution of the output of $i$-$\naigen^F(pp)$ by sampling a distinct $x_i \in \domain$ not in the set $\{x_1,\dots,x_{i-1}\}$ of elements sampled inside $(i{-}1)$-$\naigen^F(pp)$ (line~\ref{line:sample-in-NAI-gen} of 
	Fig.~\ref{fig:NAI-gen}), and evaluating $(b', \overline{\sigma}^{(i)}) \getsr \upS^{F}(x_i, \overline{\sigma}^{(i-1)})$ as in line~\ref{line:up-in-NAI-gen} of 
	Fig.~\ref{fig:NAI-gen} (for $j \gets i$ and where $\upS$ uses only one oracle $F$). In particular, since by assumption $\Pi$ is 
	$F$-decomposable, we can rewrite $(b', \overline{\sigma}^{(i)}) \gets \upS^{\identity_\range}(F(x_i), \overline{\sigma}^{(i-1)}; r')$ for  freshly sampled coins $r' \sim U(\coinspace)$.
	
	\begin{figure}[h]
		\centering
		\begin{tikzpicture}[node distance=1.35cm, auto]
			\node (P) {$\sigma^{(i-1)}$};
			\node (B) [right of=P, xshift=2cm] {$\overline{\sigma}^{(i-1)}$};
			\node (A) [below of=P] {$\sigma$};
			\node (C) [below of=B] {$\overline{\sigma}$};
			\node (BB) [right of=B, xshift=.7cm] {$\sim (i{-}1){\text{-}}\naigen(pp)$};
			\node (CC) [right of=C] {$\sim i{\text{-}}\naigen(pp)$};
			\draw[<->] (P) to node {$\statdist \le \varepsilon$} (B);
			\draw[->] (P) to node [swap] {$\upS^{\identity_\range}(Y,\ \cdot\ ; r)$} (A);
			\draw[<->] (A) to node {$\statdist \le \varepsilon$} (C);
			\draw[->] (B) to node {$\upS^{\identity_\range}(Y,\ \cdot\ ; r)$} (C);
		\end{tikzpicture}
		\caption{Diagram of relations between random variables in Lemma~\ref{lem:utility-lemma}.}
	\end{figure}
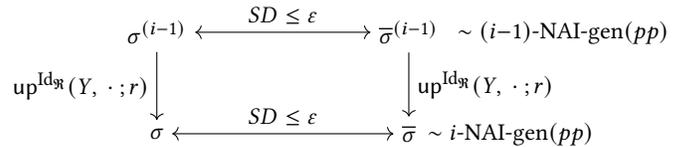
	
	Since the $\{x_1,\dots,x_i\}$ above are distinct and $F$ is a random function, $F(x_j)$ is uniformly distributed over $\range$ and pairwise independent of $\sigma^{(i-1)}$, $\overline{\sigma}^{(i-1)}$ and $r'$. Since $r' \sim r$ and $F(x_j) \sim Y$, we see that $\overline{\sigma}^{(i)}$ and $\overline{\sigma}$ have the same distribution. By $\varepsilon$-statistical closeness of $\sigma$ and $\overline{\sigma}$, $\sigma$ is then $(i, \varepsilon)$-NAI.
	\ACM{}{\qed}
\end{proof}

We now proceed to prove Theorem \ref{thm:general-correctness}.

\correctnessthm*

\begin{proof}
	We start by defining an intermediate game $G$ in Fig.~\ref{game:r-o-g}. Let $\Real$ denote the $d = 0$ version of Real-or-$G$, let $G$ denote the $d = 1$ version of Real-or-$G$ (or equivalently the $d = 0$ version of $G$-or-Ideal), and let $\Ideal$ denote the $d = 1$ version of $G$-or-Ideal.
		
	Our proof then proceeds in the following way. We first bound the closeness of $\Real, G$ in Lemma \ref{l:r-o-g} in terms of the PRF advantage, and that of $G, \Ideal$ in Lemma \ref{l:g-o-i} in terms of the probability of some ``bad" event $\E$. Then, in Lemma \ref{lem:ideal-is-NAI} we show that our simulator constructs an NAI view of $\Pi$ in $\Ideal$. Finally, we bound the probability of the event $\E$ in Lemma \ref{lem:E-upper-bound} to obtain our result.
	
		\begin{figure}[h]
		\Wider[30em]{
			\centering
			\begin{pchstack}[boxed,space=.2em]
				\procedure[linenumbering,headlinecmd={\vspace{.1em}\hrule\vspace{.3em}}]{Real-or-$G(\advA, \dD, pp)$}{%
							d \getsr \{0, 1\} \\ 
							\pcif d = 0 \t \pcmycomment{\Real} \\
							\t K \getsr \mathcal{K}; F \gets R_K \\ 
							\pcelse \t \pcmycomment{$G$} \\
							\t F \getsr \Funcs[\domain,\range] \\
							\tinit \gets \bot \\
							\sigma \gets \setupS(pp) \\
							out \getsr \advA^{\repO, \upO, \qryO,\revO} \\ 
							d' \getsr \dD(out) \\ 
							\pcreturn d'
				}
				\procedure[linenumbering,headlinecmd={\vspace{.1em}\hrule\vspace{.3em}}]{$G$-or-Ideal$(\advA, \simS, \dD, pp)$}{%
							d \getsr \{0, 1\} \\ 
							\pcif d = 0 \t \pcmycomment{$G$} \\ 
							\t F \getsr \Funcs[\domain,\range] \\
							\t \tinit \gets \bot \\
							\t \sigma \gets \setupS(pp) \\
							\t out \getsr \advA^{\repO, \upO, \qryO,\revO} \\ 
							\pcelse \t \pcmycomment{\Ideal} \\
							\t out \getsr \simS(\advA, pp) \\ 
							d' \getsr \dD(out) \\ 
							\pcreturn d'
				}
			\end{pchstack}
		}
		\caption{Intermediate game $G$ for the proof of Theorem~\ref{thm:general-correctness}.
		}
		\label{game:r-o-g}
		\label{game:g-o-i}
	\end{figure}
	
	\begin{lemma}\label{l:r-o-g}
		The difference in probability of an arbitrary $t_d${-} distinguisher $\dD$ outputting $1$ in experiments of game Real-or-$G$ in Fig.~\ref{game:r-o-g} with a $(q_u, q_t, q_v, t_a)$-\amq adversary $\advA$ is bounded by the maximal PRF advantage $\varepsilon$ of an $(\alpha(n + q_u) + \beta q_t, t_a + t_d, \varepsilon)$-PRF adversary attacking $R_K$: 
		\begin{align*}
			\text{Adv}^{\text{Real-or-}G}_{\Pi,\advA,\simS}(\dD) \coloneqq \left|\pr{\Real(\advA, \dD){=}1}{-}\pr{G(\advA, \dD){=}1}\right| \leq \varepsilon.
		\end{align*}
	\end{lemma}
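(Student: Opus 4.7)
The plan is to carry out a standard PRF-to-random-function switching argument via a reduction. The key observation is that $\Real$ and $G$ differ in exactly one place: $\Real$ samples $F \gets R_K$ for a uniform key $K \getsr \Kspace$, while $G$ samples $F \getsr \Funcs[\domain,\range]$ uniformly. In every other respect (state initialisation, oracle behaviour, $\advA$'s execution, invocation of $\dD$), the two experiments are identical.

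First I would construct an explicit PRF adversary $\advB$ that behaves as follows: on input oracle access to some function $\mathcal{O} \colon \domain \to \range$ (which is either $R_K$ for $K \getsr \Kspace$ or a truly random $F \getsr \Funcs[\domain,\range]$), $\advB$ internally runs the common body of Real-or-$G$, initialising $\tinit \gets \bot$ and $\sigma \gets \setupS(pp)$, then simulating $\advA$'s oracles $\repO, \upO, \qryO, \revO$ exactly as in Figure~\ref{game:r-o-i}, with every invocation of $F$ inside $\upS$ or $\qryS$ answered by forwarding the argument to $\mathcal{O}$. When $\advA$ halts with output $out$, $\advB$ runs $\dD(out)$ and outputs whatever $\dD$ outputs.

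By construction, when $\mathcal{O} = R_K$, $\advB$ perfectly simulates $\Real$, so $\Pr[\advB^{R_K}=1] = \Pr[\Real(\advA,\dD)=1]$; when $\mathcal{O} = F$ for $F \getsr \Funcs[\domain,\range]$, $\advB$ perfectly simulates $G$, so $\Pr[\advB^{F}=1] = \Pr[G(\advA,\dD)=1]$. Consequently, $\text{Adv}^{\text{Real-or-}G}_{\Pi,\advA,\simS}(\dD) = \text{Adv}^{\text{PRF}}_R(\advB)$.

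It then remains to account for $\advB$'s resources. The $\repO$ call inserts at most $n$ elements, each requiring at most $\alpha$ calls to $\mathcal{O}$; each of the $q_u$ calls to $\upO$ adds at most $\alpha$ further calls, and each of the $q_t$ calls to $\qryO$ adds at most $\beta$ calls. The $\revO$ oracle and setup require no calls to $\mathcal{O}$. Thus $\advB$ makes at most $\alpha(n+q_u) + \beta q_t$ queries. Its running time is dominated by that of $\advA$ plus that of $\dD$, so it is bounded by $t_a + t_d$ (plus negligible bookkeeping overhead we absorb into the approximation). Since $R_K$ is assumed $(\alpha(n+q_u)+\beta q_t,\, t_a+t_d,\, \varepsilon)$-secure, we get $\text{Adv}^{\text{PRF}}_R(\advB) \le \varepsilon$, which yields the stated bound. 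There is no real obstacle here beyond careful bookkeeping of query counts; the argument is purely syntactic because $\advB$ never needs to look inside $F$ and always treats it as an oracle, matching exactly the abstraction used in the definitions of $\Real$ and $G$.
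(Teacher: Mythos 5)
Your proposal is correct and follows essentially the same route as the paper: constructing a PRF adversary that instantiates the \amq via its real-or-random oracle, runs $\advA$ and then $\dD$, and outputs $\dD$'s bit, so that the Real-or-$G$ advantage equals the PRF advantage, with the same query count $\alpha(n+q_u)+\beta q_t$ and runtime $t_a+t_d$. Nothing is missing.
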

	\begin{proof}
		Consider the PRF adversary $\advB$ in Fig.~\ref{fig:prf-adv}, who instantiates the \amq that $\advA$ queries using its $\RRO$ oracle, in relation to the Real-or-$G$ game from Fig.~\ref{game:r-o-g}. 
		\begin{figure}
			\centering
			\begin{pcvstack}[boxed,space=1em]
				\procedure[linenumbering,space=1em,headlinecmd={\vspace{.1em}\hrule\vspace{.3em}}]{PRF adversary $\advB^\RRO$}{%
					F \gets \RRO; \, \tinit \gets \bot\\
					\sigma \gets \setupS(pp) \\
					\pcreturn d' \getsr \dD(\advA^{\repO, \upO, \qryO,\revO})
				}
			\end{pcvstack}
			\caption{PRF adversary $\advB$ for Lemma~\ref{l:r-o-g}.}
			\label{fig:prf-adv}
		\end{figure}
		
		When $b = 0$, $\advB$ is running $\Real$ for $\advA$, where the PRF $R_K$ is used to handle $\advA$'s oracle queries to $\Pi$. When $b = 1$, $\advB$ is instead running $G$ for $\advA$, where the truly random function $F$ is used to handle $\advA$'s oracle queries to $\Pi$. By inspection, the advantage of $\advB$ is
		\begin{align*}
			\text{Adv}_R^{PRF}(\advB) = \text{Adv}^{\text{Real-or-}G}_{\Pi,\advA,\simS}(\dD).
		\end{align*}
		By assumption in Theorem~\ref{thm:general-correctness}, $R_K$ is an ($\alpha(n + q_u) + \beta q_t, t_a + t_d, \varepsilon$)-secure PRF, hence no adversary $\advB$ making at most $\alpha(n+q_u) + \beta q_t$ queries and running in time at most $t_a + t_d$ can have advantage greater than $\varepsilon$. Therefore, 
		\begin{align*}
			\text{Adv}^{\text{Real-or-}G}_{\Pi,\advA,\simS}(\dD) 
			\leq \varepsilon. \tag*{} 
		\end{align*}
	\end{proof}
	
	\begin{lemma}\label{l:g-o-i}
	For $i \in [q_t]$, let $a_i^{\Ideal}$ be the response to $\advA$'s $i^{th}$ $\qryO$ query in the Ideal game, let $a_i^{G^*}$ be the response in the $G^*$ game (see~Line~\ref{line:corr-E*} of Fig.~\ref{fig:correctness-simulator}), and let $\E$ be the event that these differ for some~$i$:
	\begin{align*}
	\E := \big[a_i^{\Ideal} \neq a_i^{G^*}\ \text{for some}\ i \in [q_t]\big].
	\end{align*}
		The difference in probability of an arbitrary distinguisher $\dD$ outputting $1$ in experiments of game $G$-or-Ideal with a $(q_u ,q_t, q_v, t_a)$-\amq adversary $\advA$ is bounded by $\Pr[\E]$. In other words,
		\begin{align*}
			\text{Adv}^{G\text{-or-\Ideal}}_{\Pi,\advA,\simS}(\dD){\coloneqq}\big|\Pr[G(\advA, \dD){=}1]{-}\Pr[\Ideal(\advA, \dD,\simS){=}1]\big|{\leq}\Pr[\E].
		\end{align*}
	\end{lemma}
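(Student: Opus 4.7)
The plan is to bridge $G$ and $\Ideal$ via the game $G^*$ already prefigured on line~\ref{line:corr-E*} of Fig.~\ref{fig:correctness-simulator}, proceeding in two steps: first show that $G$ and $G^*$ produce identically distributed transcripts, and then show that $G^*$ and $\Ideal$ are identical-until-bad with bad event exactly $\E$.

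For the first step, both $G$ and $G^*$ sample $F \getsr \Funcs[\domain,\range]$ and start from $\sigma \gets \setupS(pp)$; $G^*$ simply adds bookkeeping ($\tinserted$, $\tupenabled$, $\tFPlist$, $\tCALQ$, $ctr$) that short-circuits some calls to $\upS^F$ and $\qryS^F$. I would verify each short-circuit against the rules of Def.~\ref{def:bf-consistency-rules} and Def.~\ref{def:reinsertion-invariance}. For $\upsimO$, reinsertion invariance handles the case $\tinserted[x]=\top$ with insertions still enabled, and permanent disabling handles the case $\tupenabled=\bot$. For $\qrysimO$, the shortcuts via $\tinserted$ and $\tFPlist$ are justified by element permanence (Def.~\ref{def:bf-consistency-rules}), while the shortcut $\tCALQ[x]=ctr$ is justified by the fact that $\qryS^F$ is deterministic and can only change value on $x$ when $\sigma$ itself changes, which by construction occurs precisely when a fresh successful insertion is registered through $\upsimO$ and increments $ctr$. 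In every other branch $G^*$ executes the same operations as $G$. Hence the two transcripts are identically distributed and $\Pr[G(\advA,\dD)=1]=\Pr[G^*(\advA,\dD)=1]$.

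For the second step, $G^*$ and $\Ideal$ share the same code except on line~\ref{line:corr-E*}: in $G^*$ the returned answer is $a_i^{G^*}=\qryS^F(x,\sigma)$, while in $\Ideal$ it is $a_i^{\Ideal}=\qryS^{\identity_\range}(Y,\sigma)$ for a fresh $Y\getsr\range$ sampled on line~\ref{line:bernoulli}. Under a coupling that fixes $F$, all $Y_i$'s, and $\advA$'s coins, the two games agree step by step up to the first query index $i$ where $a_i^{G^*}\ne a_i^{\Ideal}$; any later divergence is triggered by such an earlier discrepancy, which is exactly the event $\E$. By the standard identical-until-bad argument, for any distinguisher $\dD$,
\[
\bigl|\Pr[G^*(\advA,\dD)=1]-\Pr[\Ideal(\advA,\dD,\simS)=1]\bigr|\le \Pr[\E],
\]
which combined with the first step yields the claim.

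The main obstacle is the first step: its correctness rests on a careful bookkeeping argument ensuring that $ctr$ and the caches $\tinserted$, $\tFPlist$, $\tCALQ$, $\tupenabled$ capture every event that could alter the behaviour of $\upS^F$ or $\qryS^F$ on any input the adversary may later query. Each consistency rule is needed for a specific case, and the cases must be enumerated carefully to rule out combinations of interleaved $\upO$ and $\qryO$ calls that would otherwise break the equivalence.
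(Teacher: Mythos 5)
Your proposal is correct and follows essentially the same route as the paper's proof: it bridges $G$ and $\Ideal$ through $G^*$, justifies the bookkeeping shortcuts in $G^*$ exactly by reinsertion invariance, permanent disabling, element permanence and determinism of $\qryS^F$, and then applies the identical-until-bad argument with bad event $\E$ to bound the gap between $G^*$ and $\Ideal$. The only cosmetic difference is that you phrase the final step as an explicit coupling, which is just the standard formulation of the same argument the paper uses.
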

	\begin{proof}	

		First, let us denote by $G^*$ a modified version of $\Ideal$ that runs $\simS$, but with the following modification: in the oracle \qrysimO, it always replies $a \gets \qryS^F(x_i, \sigma)$, instead of sampling a random $a \leftarrow \qry^{\identity_\range} (Y \getsr \range, \sigma)$ for every query on a non-inserted element.
		The main difference between $G$ and $G^*$ is that in the latter, $\advA$ interacts with a simulator that intercepts some of the queries to the simulated \amq, and does some input-output bookkeeping absent in $G$.
		However, by inspection of Fig.~\ref{fig:correctness-simulator}, the extra operations run by $\simS$ in $\repsimO$ and $\upsimO$ in $G^*$, compared to $\repO$ and $\upO$ in $G$, do not affect the return values (in particular, reinsertion invariance and permanent disabling of $\Pi$ means that line~\ref{line:reinsertion-invariant-return} of $\upsimO$ agrees with $\upO$).
		Hence, the only possible difference from the point of view of $(\advA, \dD)$ could come from $\qrysimO$. 
		Inspecting $\qrysimO$, we can see that lines~\ref{line:elem-permanence-1}-\ref{line:elem-permanence-2} will not cause a discrepancy due to the assumption that $\Pi$ satisfies Element permanence (see Def.~\ref{def:bf-consistency-rules}). Similarly, lines~\ref{line:qry-determinism-1}-\ref{line:qry-determinism-2} will not cause discrepancies since they will only be executed if no insertions were made since the last call to $\qrysimO$ on the current element, and the previous call return $\bot$, in which case $\qrysimO$ will return $\bot$ again. Since $\qryS^F$ is a deterministic algorithm, also $\qryO$ in $G$ would return $\bot$ in this scenario. Since no other operations in $\qrysimO$ affect its return value, the game $G^*$ is identical to $G$ (Fig.~\ref{game:g-o-i}) from the point of view of $(\advA, \dD)$.
		
		We now look at the differences between $G^*$ and $\Ideal$. The answers to $\repsimO$ and $\upsimO$ queries on the same inputs are generated in the same manner in both $\Ideal$ and $G^*$, and therefore do not lead to inconsistencies between the two games. The same reasoning applies for $\qrysimO$ queries on elements that are positive. However, for $\qrysimO$ queries on elements that are not yet positive, the answers will be the same across the games if and only if for each element $x_i$ queried by $\advA$'s $i$-th $\qrysimO$ query, we have 
		\begin{align}\label{eq:e*}
			a_i^{\Ideal} = a_i^{G^*} = \qryS^F(x_i, \sigma).
		\end{align}
		Therefore, the games $\Ideal$ and $G^*$ (and hence $G$) are equal from the perspective of $(\advA,\dD)$ at least up until the event that \eqref{eq:e*} does not hold for some $i \in [q_t]$. Recalling that this event is denoted by $\E$, we have
		\begin{align*}
			|\Pr[\Ideal(\advA, \dD,\simS) {=} 1] {-} \Pr[G(\advA, \dD) {=} 1]| \leq \Pr[\E]. 
		\end{align*}
	\end{proof}
\begin{lemma}
	\label{lem:bf-in-ideal-are-nai}\label{lem:amq-in-ideal-are-nai}\label{lem:ideal-is-NAI}
	Consider the $\Ideal$ game with simulator $\simS$ from Fig.~\ref{fig:correctness-simulator}. Suppose $i$ distinct elements $I = \{x_1, \dots, x_i\}$ are queried to $\upsimO$ (including those queried as part of the $\repsimO$ query) with $x_j$ being the $j$-th distinct element being queried.
	After the first call to $\upsimO(x_i)$, call $\revealsimO()$ to obtain $\sigma$, the state of $\Pi$. Then $\sigma$ is $(i, \varepsilon)$-NAI, where $\varepsilon = 0$.
\end{lemma}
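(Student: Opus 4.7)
The plan is to prove the lemma by induction on $i$. The base case $i=0$ is immediate: after $\setupS(pp)$, the state of $\Pi$ has exactly the distribution of the output of $0$-$\naigen(pp)$ (which also only runs $\setupS(pp)$), so it is trivially $(0,0)$-NAI. For the inductive step, I will assume that the state $\sigma^{(i-1)}$ of $\Pi$ right before the first call to $\upsimO(x_i)$ is $(i-1,0)$-NAI, and show that the state $\sigma^{(i)}$ resulting from that first call is $(i,0)$-NAI.

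First I would argue that, in the $\Ideal$ game, $\sigma$ is not modified between the $(i-1)$-st distinct insertion and the first call to $\upsimO(x_i)$. By inspection of Fig.~\ref{fig:correctness-simulator}, $\revealsimO$ only reads $\sigma$; $\qrysimO$ in $\Ideal$ returns $a_i^{\Ideal}$ sampled via $\qry^{\identity_\range}(Y\getsr\range,\sigma)$, which, being a membership query, does not modify $\sigma$ (and crucially does not evaluate $F$ on the queried element); and $\upsimO(x)$ with $\tinserted[x]=\top$ returns $\tupenabled$ without touching $\sigma$. If prior $\upsimO$ calls on new elements have ever returned $\bot$, permanent disabling (Def.~\ref{def:bf-consistency-rules}) ensures $\sigma$ remains fixed from that point on, and the lemma conclusion for such a state is inherited from the hypothesis, so I may focus on the case where the first call to $\upsimO(x_i)$ indeed invokes $\upS^F(x_i,\sigma^{(i-1)};r)$ with freshly sampled coins $r\sim U(\coinspace)$.

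Next I would apply $F$-decomposability (Def.~\ref{def:f-decomposability}) to rewrite this as $\upS^{\identity_\range}(F(x_i),\sigma^{(i-1)};r)$. The heart of the argument is to establish that $F(x_i)$ is uniformly distributed over $\range$ and pairwise independent of $\sigma^{(i-1)}$ and $r$. By hypothesis $x_i$ is distinct from $x_1,\dots,x_{i-1}$, so since $F$ is truly random, $F(x_i)$ is independent of $F(x_1),\dots,F(x_{i-1})$; these are the only values of $F$ on which $\sigma^{(i-1)}$ and the adversary's view up to this point can depend (recall that $\qrysimO$ in $\Ideal$ does not evaluate $F$ on its input). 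Even though the adversary adaptively picks $x_i$ based on its view, the preceding fact holds after conditioning on any such view, so $F(x_i)$ remains uniform on $\range$ and independent of $\sigma^{(i-1)}$ and of the fresh coins $r$. Lemma~\ref{lem:utility-lemma} then applies with $Y=F(x_i)$ and $\varepsilon=0$, yielding that $\sigma^{(i)}$ is $(i,0)$-NAI.

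The main obstacle I expect is carefully formalising the independence of $F(x_i)$ from $\sigma^{(i-1)}$ in the presence of an adaptive adversary that may interleave $\qrysimO$, $\revealsimO$, and re-insertion $\upsimO$ calls before issuing $\upsimO(x_i)$. The cleanest route is to condition on the adversary's entire transcript up to the choice of $x_i$ and observe that, since $F$ is truly random and $x_i$ is required to be distinct from the previously inserted elements, $F(x_i)$ is uniform on $\range$ independent of that transcript, which encompasses $\sigma^{(i-1)}$ and the coins $r$ (which are freshly sampled inside $\upS$). Permanent disabling and reinsertion invariance are only needed to justify that no state changes occur between distinct insertions, keeping the induction hypothesis applicable at the moment we invoke Lemma~\ref{lem:utility-lemma}.
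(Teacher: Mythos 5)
Your proposal is correct and follows essentially the same route as the paper's proof: induction on the number of distinct insertions, observing that in $\Ideal$ the $\qrysimO$ oracle never evaluates (or leaks) $F$ on its inputs so that $F(x_i)$ is uniform on $\range$ and independent of $\sigma^{(i-1)}$ and the fresh coins, and then invoking Lemma~\ref{lem:utility-lemma} via $F$-decomposability, with reinsertion invariance and permanent disabling covering the non-state-changing calls in between.
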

\begin{proof}
We now prove that the view constructed by $\simS$ in $\Ideal$ is NAI for $\Pi$.
	
	The proof works by induction.
	Suppose that $j{-}1$ distinct elements have been queried to $\upsimO$ by $\advA$ at some point of $\simS$'s simulation (including elements queried as part of $\repsimO$). Let $\Sigma$ be the set of possible states of $\Pi$.
	Suppose $\advA$ makes a new query $\upsimO(x)$. If $x \in \{x_1,\dots,x_{j-1}\}$ and previously $\upsimO(x)$ returned $\top$, then by construction of $\upsimO$ (which simulates reinsertion invariance), $\Pi$'s state will remain unchanged after the current $\upsimO(x)$ query. Otherwise, if $x \in \{x_1,\dots,x_{j-1}\}$ and previously $\upsimO(x)$ returned $\bot$, by permanent-disabling of $\Pi$, a further $\upsimO(x)$ query will leave $\Pi$'s state similarly unchanged. Hence, only $\upsimO(x)$ queries where $x \notin \{x_1,\dots,x_{j-1}\}$ should be considered during the induction.
	
	Let $Q_j$ be the set $\{y_1, \dots, y_{q_j}\} \subset \domain$ of distinct elements queried to $\qrysimO$ by $\advA$ during $\Ideal$ up until the first $\upsimO(x_j)$ query for a given $x_j$ with $j \le i$ is made.
	Let $\sigma^{(j-1)}$ (resp.~$\sigma^{(j)}$) be a random variable describing the output of $\revealsimO()$ just before (resp.~after) the first $\upsimO(x_j)$ query.

	Before any $\upsimO$ or $\repsimO$ queries are made in $\Ideal$, the output of $\revealsimO()$ is $(0, 0)$-NAI, and hence $\sigma^{(j)}$ is $(j, \varepsilon)$-NAI for $j = 0$.
	
	Suppose $\sigma^{(j-1)}$ is $(j{-}1, \varepsilon)$-NAI, and let $x_j$ be the $j^{\text{th}}$ distinct element being queried to $\upsimO$.
	Let $(b^{(j)}, \sigma^{(j)}) \gets \upS^{F}(x_j, \sigma^{(j-1)}; r')$ be the $\upS$ call made inside $\upsimO(x_j)$, for some freshly sampled $r' \sim U(\coinspace)$.
	Since by construction the $\qrysimO$ oracle inside $\simS$ never uses or outputs the value of $F(x_j)$ in the $\Ideal$ game, and since $x_j$ has not been yet queried to $\upsimO$, we know that $\sigma^{(k)}$ for every $k < j$ and the output of $\qrysimO(y)$ for $y \in Q_k$ for every $k \le j$ are independent of the value of $F(x_j)$. Hence the value of $F(x_j)$ is uniformly distributed over $\range$ and independent from the state of $\Pi$, and therefore we can invoke Lemma~\ref{lem:utility-lemma} with $Y = F(x_j)$, $r = r'$ and $\sigma = \sigma^{(j)}$.
	By Lemma~\ref{lem:bf-utility-lemma}, then $\sigma^{(j)}$ is $(j, \varepsilon)$-NAI. Doing $i$ steps of the induction gives the result. \ACM{}{\qed}
\end{proof}

\begin{lemma}\label{lem:E-upper-bound}
Let the event $\E$ be defined as in Lemma \ref{l:g-o-i}.
Then, 
	\begin{align*}
	 \Pr[\E] 
	  \leq 2 q_t \cdot \pfp{n + q_u}.
	\end{align*}
\end{lemma}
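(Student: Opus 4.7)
My plan is to bound $\Pr[\E]$ by a union bound over $i \in [q_t]$ and then, for each $i$, by the basic inequality
\[
 \Pr[a_i^{\Ideal} \ne a_i^{G^*}] \;\le\; \Pr[a_i^{\Ideal}=\top] \,+\, \Pr[a_i^{G^*}=\top],
\]
valid since both random variables take values in $\{\top,\bot\}$. The short-circuit branches of $\qrysimO$ (lines~\ref{line:elem-permanence-1}--\ref{line:qry-determinism-2}) return identical responses in $\Ideal$ and $G^*$, so only the ``needs to be (re)computed'' branch can contribute, and there we always have $\tinserted[x_i]=\bot$.

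For $\Pr[a_i^{\Ideal}=\top]$, line~\ref{line:bernoulli} sets $a_i^{\Ideal}=\qryS^{\identity_\range}(Y,\sigma)$ with $Y\getsr\range$ sampled freshly and independently of everything else. By Lemma~\ref{lem:ideal-is-NAI}, at this point in the execution $\sigma$ is $(j,0)$-NAI for some $j\le n+q_u$ (the number of distinct $\upsimO$ queries made so far, including those arising from $\repsimO$). Since a uniformly random input to $\qryS^{\identity_\range}$ is identically distributed to $F$ evaluated at a fresh element outside the inserted set (by $F$-decomposability of $\Pi$), the definition of $\pfp{\cdot}$ (Def.~\ref{def:nai-fpp}) yields $\Pr[a_i^{\Ideal}=\top]=\pfp{j}$; the non-decreasing membership probability rule (Def.~\ref{def:bf-consistency-rules}) then upgrades this to $\pfp{j}\le \pfp{n+q_u}$.

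The main obstacle is bounding $\Pr[a_i^{G^*}=\top]$, since $a_i^{G^*}=\qryS^F(x_i,\sigma)=\qryS^{\identity_\range}(F(x_i),\sigma)$ depends on the real random function $F$ applied to the adversarial input $x_i$. The key claim is that, conditioned on the adversary's view in $\Ideal$, $F(x_i)$ is uniformly distributed on $\range$ and independent of $\sigma$. Reaching the recompute branch forces $\tinserted[x_i]=\bot$, so $x_i$ has not been successfully inserted, and hence $\sigma$ is determined by $F$ evaluated only at a set of distinct elements not containing $x_i$. Any earlier $\upsimO(x_i)$ that returned $\bot$ must, by the simulator's code and the permanent disabling consistency rule (Def.~\ref{def:bf-consistency-rules}), correspond to a call on an already-disabled filter, so it left $\sigma$ unchanged and revealed no information about $F(x_i)$ beyond what was already encoded in $\sigma$; similarly, $\revealsimO$ only reveals $\sigma$, and in $\Ideal$ past $\qrysimO$ calls return values $a^{\Ideal}$ derived from fresh $Y$'s that are independent of $F(x_i)$. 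Repeating the NAI argument of the previous paragraph then yields $\Pr[a_i^{G^*}=\top]\le\pfp{n+q_u}$.

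Combining the two bounds and applying the union bound over $i\in[q_t]$ gives $\Pr[\E]\le 2q_t\cdot\pfp{n+q_u}$, as required.
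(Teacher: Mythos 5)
Your proposal is correct and follows essentially the same route as the paper's proof: a union bound over $i\in[q_t]$, the bound $\Pr[\E_i]\le\Pr[a_i^{\Ideal}=\top]+\Pr[a_i^{G^*}=\top]$, the observation that $F(x_i)$ is uniform and independent of the simulated state (so the $G^*$ term matches the $\Ideal$ term), identification of $\Pr[a_i^{\Ideal}=\top]$ with $\pfp{\cdot}$ via Lemma~\ref{lem:ideal-is-NAI} and $F$-decomposability, and the non-decreasing membership probability rule to reach $\pfp{n+q_u}$. The only cosmetic difference is that the paper conditions explicitly on the counter $ctr^{(i)}=c$ of distinct insertions and sums over $c$, whereas you treat the insertion count $j$ slightly informally, but the argument is the same.
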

\begin{proof}
	Our final step is to compute $\Pr[\E]$.
While $\E$ can only occur for elements $x_i$ that have not been inserted into $\Pi$ at the time they are queried to $\qrysimO$ (and hence could return a \emph{false} positive result), we do not know how many of $\advA$'s $q_t$ queries will be used on such elements. Thus, we will consider all $i \in [q_t]$ in order to bound $\Pr[\E]$.
We will calculate $\Pr[\E]$ in the game $\Ideal$, with $a_i^\Ideal$ and $a_i^{G^*}$ defined as in lines \ref{line:bernoulli} and \ref{line:aG} of $\qrysimO$ in Fig.~\ref{fig:correctness-simulator}.
Let $\sigma^{(i)}$ (resp.~$ctr^{(i)}$) denote the state of $\Pi$ instantiated by $\simS$ (resp.~number of inserted elements into $\Pi$) in $\Ideal$ at the time of $\advA$'s $i$-th $\qryO$ query.
We have
\begin{align*}
	&a_i^{G^*} \gets \qryS^F(x_i,\sigma^{(i)}), \enspace  
	\E_i \coloneqq \left[ a_i^{\Ideal} \neq a_i^{G^*} \right],\\
	&\enspace\enspace \E = \bigvee_{i = 1}^{q_t} \E_i, \enspace \text{and}\enspace \Pr[\E] \leq \sum_{i=1}^{q_t} \Pr[\E_i].
\end{align*} 
We proceed to bound $\Pr[\E_i]$. 
We have
\begin{align*}
	\Pr[\E_i] &= \Pr[a_i^{\Ideal} \neq a_i^{G^*}]\\
	&= \Pr[(a_i^{\Ideal} = \top \wedge a_i^{G^*} = \bot) \vee (a_i^{\Ideal} = \bot \wedge a_i^{G^*} = \top)]\\
	& \le \Pr[(a_i^{\Ideal} = \top) \vee (a_i^{G^*} = \top)] \\
	& \le \Pr[a_i^{\Ideal} = \top]   + \Pr[a_i^{G^*} = \top].
\end{align*}

Furthermore, we note that 
\begin{align*}
	&\underset{\text{$\advA$'s coins}}{\underset{F \getsr \Funcs[\domain,\range]}{\Pr}}[a_i^{G^*} = \top] \\
	&\quad = \underset{F \getsr \Funcs[\domain,\range]}{\Pr}[\top \gets \qryS^F(x_i,\sigma^{(i)})] \quad \begin{matrix*}[l] \text{\small by $\advA$'s choices being} \\ \text{\small nullified by $F$'s sampling} \end{matrix*} \\
	&\quad = \underset{F \getsr \Funcs[\domain,\range]}{\Pr}[\top \gets \qryS^{\identity_\range}(F(x_i), \sigma^{(i)})] \quad \begin{matrix*}[l] \text{\small by $F$-decomposability} \\ \text{\small of $\Pi$} \end{matrix*} \\
	&\quad= \underset{r \getsr \coinspace}{\underset{F \getsr \Funcs[\domain,\range]}{\Pr}}[\top \gets \qryS^{\identity_\range}(Y \overset{r}{\getsr} \range, \sigma^{(i)})] \quad \begin{matrix*}[l] \text{\small by $\advA$ having} \\ \text{\small no information on} \\ \text{\small the value of $F(x_i)$} \end{matrix*}\\
	&\quad=\underset{\text{$\advA$'s coins}}{\underset{r \getsr \coinspace}{\underset{F \getsr \Funcs[\domain,\range]}{\Pr}}}[a^{Ideal}_i = \top],
\end{align*}
meaning that  $\Pr[\E_i]\,{\le}\,\Pr[a_i^{\Ideal} {=} \top] + \Pr[a_i^{G^*} {=} \top] = 2 \Pr[a_i^{\Ideal} {=} \top].$

We now want to argue that 
\begin{align*}
	\Pr[a_i^\Ideal = \top] &= \sum_{c = 0}^{n + q_u} \Pr[a_i^\Ideal = \top \mid ctr^{(i)}=c] P[ctr^{(i)} = c]\\
	&= \sum_{c = 0}^{n + q_u} \pfp{c} P[ctr^{(i)} = c]
\end{align*}
Indeed, let $D_F = U(\Funcs[\domain,\range])$, then from Def.~\ref{def:nai-fpp}, we have
\begin{align*}
	&\pfpsym(FP \mid c) \coloneqq \Pr\left[\begin{matrix}\hat{F} \getsr D_{F}\\ \hat{\sigma} \getsr \text{$c$-$\naigen(pp)$}\\ \hat{x} \getsr U(\domain\setminus V)\end{matrix}:\ \top \gets \qry^{\hat{F}}(\hat{x},\, \hat{\sigma})\right]\\
	&\ = \Pr\left[\begin{matrix}\hat{F} \getsr D_{F}\\ \hat{\sigma} \getsr \text{$c$-$\naigen(pp)$}\\ \hat{x} \getsr U(\domain\setminus V)\end{matrix}:\ \top \gets \qry^{\identity_\range}(\hat{F}(\hat{x}),\, \hat{\sigma})\right] \quad \begin{matrix*}[l] \text{\small by function} \\ \text{\small decomposability}\end{matrix*}\\
	&\ = \Pr\left[\begin{matrix}\hat{F} \getsr D_{F}\\ \hat{\sigma} \getsr \text{$c$-$\naigen(pp)$} \end{matrix}:\ \top \gets \qry^{\identity_\range}(\hat{Y} \getsr \range,\, \hat{\sigma})\right] \quad \begin{matrix*}[l] \text{\small by $\hat{x} \notin V$ and} \\ \text{\small $\hat{F}$ being truly} \\ \text{\small random}\end{matrix*}\\
	&\ = \underset{r \getsr \coinspace}{\underset{F \getsr \Funcs[\domain,\range]}{\Pr}}\hspace{-1em}[\top \gets \qryS^{\identity_\range}(Y \overset{r}{\getsr} \range, \sigma^{(i)})\,|\,ctr^{(i)} {=} c] \enspace \begin{matrix*}[l] \text{\small in $\Ideal$, by $\sigma^{(i)}$} \\ \text{\small being $(c, 0)$-NAI} \\ \text{\small (see Lemma~\ref{lem:ideal-is-NAI})} \end{matrix*} \\
	&\ = \Pr[a^{Ideal}_i = \top \mid ctr^{(i)} = c],
\end{align*}
so that
\begin{align*}
	\Pr[\E_i] &\le 2 \Pr[a_i^{\Ideal} = \top] = 2 \sum_{c = 0}^{n + q_u} \pfp{c} P[ctr^{(i)} = c]\\
	&\le 2 \max_{c \in [n+q_u]}\pfp{c} \sum_{c = 0}^{n + q_u}  P[ctr^{(i)} = c]\\
	&\le 2 \max_{c \in [n+q_u]} \pfp{c}.
\end{align*}
Finally, we compute
\begin{align*}
	\Pr&[\E] \leq \sum_{i=1}^{q_t} \pr{ \E_i } \leq 2 q_t \cdot \max_{c \in [n+q_u]} \pfp{c} \\
	&\leq 2 q_t \cdot \pfp{n + q_u}. \ACM{}{\queedee} \quad \begin{matrix*}[l] \text{\small by non-decreasing membership} \\ \text{\small probability (see Def.~\ref{def:bf-consistency-rules})} \end{matrix*}
\end{align*}
\end{proof}
We are now ready to prove Theorem \ref{thm:general-correctness}. Combining Lemmas \ref{l:r-o-g}, \ref{l:g-o-i}, \ref{lem:ideal-is-NAI} and \ref{lem:E-upper-bound}, 
\begin{align*}
	\text{Adv}_{\Pi, \mathcal{A, S}}^{{\text{RoI}}} (\dD) & =  | \pr{\Real(\advA, \dD) {=} 1} {-} \pr{\Ideal(\advA, \dD,\simS) {=} 1} |\\
	& \leq  \abs{\Pr[\Real(\advA, \dD) {=} 1] {-} \Pr[G(\advA, \dD) {=} 1]} \\
	&\quad + \abs{\Pr[G(\advA, \dD) {=} 1] {-} \Pr[\Ideal(\advA, \dD,\simS) {=} 1]}\\
	& = \text{Adv}_{\Pi, \mathcal{A, S}}^{{\text{Real-or-}G}}(\dD)+\text{Adv}_{\Pi, \mathcal{A, S}}^{{G\text{-or-Ideal}}}(\dD) \\
	& \leq \varepsilon + \Pr[\E]\\
	& \leq \varepsilon +  2 q_t \cdot \pfp{n + q_u}. \ACM{}{\queedee}
\end{align*}
\end{proof}

\begin{remark}\label{rmk:utility-cf-caveats}
	While the proof and lemmas in this section only
	explicitly refer to \amqs with a single oracle function $F$ for notational simplicity, the same results hold in the case where more oracles $F_1, \dots, F_t$ are used. In that case, assuming $F_1$-decomposability, 
	one simply needs to replace sampling of function $F$ with sampling of functions $F_1, \dots, F_t$ at the beginning of games $\gameRoI$ (Fig.~\ref{game:r-o-i}), Real-or-$G$ and $G$-or-Ideal (Fig.~\ref{game:r-o-g}), in the PRF adversary used in the proof of Lemma \ref{l:r-o-g} (Fig.~\ref{fig:prf-adv}), and either give the simulator $\simS$ (Fig.~\ref{fig:correctness-simulator}) oracle access to $F_2,\dots,F_t$, or sampling them again also in $\simS$. Moreover, every $\upS^{F}, \upS^{\identity_\range}, \qryS^{F}, \qryS^{\identity_\range}$ should be replaced with $\upS^{F_1, \dots, F_t}$, $\upS^{\identity_\range, F_2, \dots, F_t}$, $\qryS^{F_1, \dots, F_t}$, $\qryS^{\identity_\range, F_2, \dots, F_t}$ respectively, and $\naigen^F$ with $\naigen^{F_1, \dots, F_t}$.
\end{remark}

\subsection{Comparison with bounds of \cite{_CCS:ClaPatShr19}}
\label{sec:comparison}
We compare our bounds to those derived in~\cite{_CCS:ClaPatShr19} using a game-based notion of correctness, in the case of Bloom filters. 
The game defined in~\cite{_CCS:ClaPatShr19}, which we will denote $G'$, exposes similar oracles to those in Fig.~\ref{game:r-o-i}, the main difference being that $\revO$ is embedded into $\repO$ and $\upO$. However, any adversary $\advA'$ in $G'$ in the public scenario can be translated to an adversary $\advA$ for $\gameRoI$, by following every $\repO,\upO$ query with $\revO$, and having $\advA$ output the transcript. 
We refer to $\advA$ as the simulation-based equivalent of $\advA'$.

In $G'$, the adversary's win condition is causing the event $S_r \coloneqq [\advA$ makes $\qryO(\cdot)$ queries resulting in at least $r$ false positives$]$.
Letting $\Pr_{\advB,G}[X]$ be the probability of event $X$ happening in game $G$ with adversary $\advB$, Clayton~\emph{et al.}~\cite{_CCS:ClaPatShr19} derive upper bounds for  $\Pr_{\advA', G'}[S_r]$.
On the other hand, if $\dD_{X}$ is a distinguisher in $\gameRoI$ for an arbitrary property $X$, our simulation-based notion allows us to upper bound
\begin{equation}\label{corr:bound:SIM}
	\left|\Pr_{\advA,\, \Real}[X]-\Pr_{\advA,\, \Ideal}[X]\right| \leq \varepsilon + \Pr[\E],
\end{equation} with $\E$ and $\varepsilon$ defined as in Lemma~\ref{l:g-o-i} and Theorem~\ref{thm:general-correctness}. 
With $\advA'$ and $\advA$ as above, both making $q_u, q_t$ queries to $\upO, \qryO$ respectively, we have $\Pr_{\advA',\, G'}[S_r] = \Pr_{\advA,\, \Real}[S_r]$.

Let $(S_r, q)$ be the event ``$S_r$ happened with the \amq containing at most $q$ elements'', and let $\Ideal'$ be a game equivalent to the $\Ideal$ experiment in $\gameRoI$, but defining oracles as in $G'$. Then the bounds from~\cite{_CCS:ClaPatShr19} can be written as $\Pr_{\advA', G'}[S_r] \leq \varepsilon + \Pr_{\advA',\Ideal'}[(S_r,n+q_u+r)]$, 
where $\varepsilon$ is the PRF distinguishing advantage. 
As before, we can then write
\begin{align}\label{corr:bound:GAME}
	\Pr_{\advA', G'}[S_r] \leq
	\varepsilon +\hspace{-0.5em}\Pr_{\advA,\, \Ideal}[(S_r,n{+}q_u{+}r)].
\end{align}

Let $\dD_{S_r}$ be a distinguisher that outputs $1$ if the event $S_r$ happened given $\advA$'s transcript (which includes $V$), as in \gameRoI. Using the fact that $\Pr_{\advA', G'}[S_r] = \Pr_{\advA,\Real}[S_r] = \Pr[\Real(\advA,\dD_{S_r})]$, we can use our simulation-based proof to derive new bounds in the setting of~\cite{_CCS:ClaPatShr19}. Indeed,

\begin{align*}
&\abs{\Pr_{\advA',\, G'}[S_r]-\Pr_{\advA',\, \Ideal'}[S_r,n+q_u]} \\
&\quad = \abs{\Pr[\Real(\advA,\, \dD_{S_r})=1]- \Pr[\Ideal(\advA,\,\dD_{S_r},\,\simS)=1]}.
\end{align*}
Using \eqref{corr:bound:SIM}, we obtain
\begin{align}\label{corr:bound:GAME:SIM}
	\Pr_{\advA',\, G'}[S_r] \leq \varepsilon + \Pr_\Ideal[\E] + \Pr_{\advA,\, \Ideal}[(S_r, n+q_u)].
\end{align}
The resulting difference between \eqref{corr:bound:GAME:SIM} and \eqref{corr:bound:GAME} is the trade-off between $\Pr_{\advA,\, \Ideal}[(S_r,\, n+q_u+r)]$ and $\Pr_\Ideal[\E] + \Pr_{\advA,\, \Ideal}[(S_r, \,n+q_u)]$. This trade-off is natural if one considers that by not specifying $r$, we introduce $\Pr_\Ideal[\E]$, where $\neg\E$ implies $\advA$ does not find more false positives than expected, which is what $\advA'$ is trying to achieve in $G'$.

\paragraph{Analysis of target-set coverage attacks.}
To conclude this discussion, we note that Theorem \ref{thm:general-correctness} can be applied to a broader range of predicates than $S_r$ considered above. For example, we can define a predicate which can be used to study target-set coverage attacks: $S_{\text{TSC}, L} =$ [``$\advA$ makes target set $L$ a set of false positives''] where $(\advA,\dD)$ are defined as follows.

Let adversary $\advA$ be such that it hardcodes the set $L$, and let $\advA$ be such that it makes all its queries making sure not to use $\upO$ on elements from $L$.
At the end of the game $\advA$ checks if it succeeded in making the set $L$ a set of false positives by querying the whole set $L$ (if it did not already). $\advA$ outputs $\top$ in this case; otherwise $\advA$ outputs $\bot$. Now $\dD$ just outputs whatever it gets from $\advA$.

A reduction between the problems of finding false-positive elements in a keyed \amq and of covering a target set in a keyed \amq is not known, making it unclear whether~\cite{_CCS:ClaPatShr19}'s results say anything about the latter problem. However, our Theorem~\ref{thm:general-correctness} allows us to bound the probability of the latter directly. Indeed, for such an $(\advA,\dD)$ pair, Theorem \ref{thm:general-correctness} implies
\begin{align}\label{corr:bound:GAME:TSC}
	\Pr_{\advA,\, \Real}[ S_{\text{TSC}, L}] \leq \varepsilon + \Pr_\Ideal[\E] + \Pr_{\advA,\, \Ideal}[(S_{\text{TSC}, L}, n+q_u)],
\end{align}
where $\Pr_{\advA,\, \Real}[S_{\text{TSC}, L}]$  is the probability of an adversary succeeding in the target-set coverage attack on $L$ in the real world. 
This implies that, with the above selection of $(\advA,\dD)$ pair, Theorem \ref{thm:general-correctness} also cleanly provides an upper bound on the success probability of target-set coverage attacks.

\subsection{Immutable setting}
\label{sec:immutable}
	In the ``immutable'' setting from~\cite{_CCS:ClaPatShr19} where one forbids calls to the $\upO$ oracle by setting $q_u = 0$, one can use a different proof approach to bound the distance between the $G^*$ and $\Ideal$ games in Fig.~\ref{fig:correctness-simulator}.
	
	In particular, after the $\repsimO$ call (identical in 
	both games), one would look at the statistical distance of the output of $\qrysimO$ calls. Since no 
	changes are made to the state of the simulator by $\advA$ past the $\repsimO$ call, every first 
	$\qryO$ query on a non-inserted element will return a false positive result  with probability $\Pr[\top \gets \qry^{\identity_\range}(Y \getsr \range, \sigma)]$,
	the false positive probability for the \amq given the state,
	in both games. This means that $G^*$ and $\Ideal$'s outputs are identically distributed, and  $\text{Adv}_{\Pi, \mathcal{A, S}}^{RoI} (\dD) 
	\leq \varepsilon$. 
	
	This analysis highlights that even if the event $a_i^\Ideal \neq a_i^{G^*}$ occurs for some $i \in [q_t]$, it does not lead to distinguishing the two worlds in the immutable scenario.

\section{Privacy}\label{app:privacy-proofs}

\subsection{Proof of Lemma \ref{l:f-dec:PI}}\label{app:proof-f-dec:PI}
\fdecomparezeropi*

\begin{proof}
	Since $\Pi$ satisfies $F$-decomposability, we can rewrite
	every $\upS^F(\pi(x),\, \sigma)$ call (resp.~$\qryS^F(\pi(x),\, \sigma)$ call) as $\upS^{\identity_\range}(F(\pi(x)),\, \sigma)$ (resp.~$\qryS^{\identity_\range}(F(\pi(x)),\, \sigma)$).
	Let $\advB$ be a $(q_u,q_t,q_v,t_a)$-PI adversary.
	Since $F$ is a random function, $F(\pi(x))$ is a random function in both $c = 0,\,1$ cases of $\gamePI$.
	Therefore, the games played by $\advB$ in $\gamePI$ are identical up to using two different random functions to preprocess data before inserting or querying it. 
	$\advB$'s views are then statistically identical, and 
	\begin{align*}
		\text{Adv}_{\Pi}^{PI}(\advB) = \abs{\Pr[\gamePI(\advB) = 1|c\,{=}\,0] {-} \Pr[\gamePI(\advB) = 1|c\,{=}\,1]}\,{=}\,0. \ACM{}{\queedee}
	\end{align*}  
\end{proof}

\subsection{Proof of Theorem \ref{th:general-privacy}}\label{app:proof-general-privacy}
\newelemrepprivacy*
\begin{proof}

Let $\advA = (\advA_1, \advA_2)$ be an adversary playing the game $\gameRoIERP$ in Fig.~\ref{game:priv:ElemRep:r-o-i} with any $t_d$-distinguisher $\dD$. Recall that $\Real$ (resp. $\Ideal$) denotes the $d = 0$ (resp. $d = 1$) version of $\gameRoIERP$.
The proof works by constructing a simulator $\simS$ for $\Ideal$ and an adversary $\advB$ for the $\gamePI$ game in Fig.~\ref{g:priv:PI:nn} that internally runs $\advA$, and relating their success probabilities. 

By definition of Elem-Rep privacy and by $\Pi$ being $\varepsilon$-PI, $\advA$'s advantage in $\gameRoIERP$ is 
\begin{align*}
	\text{Adv}&_{\Pi, \advA, \simS}^{RoIElemRepP} (\dD){\coloneqq}\abs{\Pr[\Real(\advA, \dD)\,{=}\,1] {-} \Pr[\Ideal(\advA, \dD, \simS){=}1]} \\
	&\le \abs{\Pr[\Real(\advA, \dD) = 1] - \Pr[\gamePI(\advB) = 1 \mid c = 0]} \\
	&\quad + \abs{\Pr[\gamePI(\advB) = 1 \mid c = 0] - \Pr[\gamePI(\advB) = 1 \mid c = 1]} \\
	&\quad + \abs{\Pr[\gamePI(\advB) = 1 \mid c = 1]- \Pr[\Ideal(\advA, \dD, \simS) = 1]} \\	
	&\le \abs{\Pr[\Real(\advA, \dD) = 1] - \Pr[\gamePI(\advB) = 1 \mid c = 0]} \hspace{2.8em} (\star) \\
	&\quad + \varepsilon \\
	&\quad + \abs{\Pr[\gamePI(\advB) = 1 \mid c = 1]- \Pr[\Ideal(\advA, \dD, \simS) = 1]}. \hspace{.6em} (\dagger)
\end{align*}
Hence, to prove our result we proceed to bound $(\star)$ and $(\dagger)$.

We start with $(\star)$. Consider the $\varepsilon$-PI adversary $\advB$ defined in Fig.~\ref{fig:priv:ElemRep:PIadvB}.
\begin{figure}
	\centering
	\begin{pcvstack}[boxed]
		\procedure[linenumbering,headlinecmd={\vspace{.1em}\hrule\vspace{.3em}}]{PI adversary $\advB^{\scriptsize \repO, \pcbox{\upO, \qryO,} \revO}$}{%
			\setV \getsr \advA_1\\
			\repO(\setV) \\ 
			\pcreturn d' \getsr \dD(\advA_2^{\scriptsize \pcbox{\upO, \qryO,}\revO}, \setV)
		}
	\end{pcvstack}
	\caption{PI adversary $\advB$.}
	\label{fig:priv:ElemRep:PIadvB}
\end{figure}
By inspection, $\Real$ and the $\gamePI$ game when $c = 0$ are identical, except for $\Real$ sampling ``$K \getsr \Kspace;\ F \gets R_K$'' and $\gamePI$ sampling ``$F \getsr \Funcs[\domain,\range]$''. Then, by assumption of the PRF security of $R_K$, 
\begin{align*}
	(\star) = \abs{\Pr[\Real(\advA, \dD) = 1] - \Pr[\gamePI(\advB) = 1 \mid c = 0]} \le \varepsilon_{\text{PRF}}.
\end{align*}

Now we look at $(\dagger)$. We start by recalling the $\gamePI$ game when $c = 1$ in the context of this proof.
Here, $\advA = (\advA_1, \advA_2)$, simulated inside of $\advB$, interacts with $\Pi$, with every element queried to the $\repO$, $\upO$ and $\qryO$ oracles being permuted by a random permutation $\pi \colon \domain \rightarrow \domain$.
Letting $V$ be the set output by $\advA_1$, we can illustrate this as
\begin{center}
	\begin{tikzpicture}
		
		\node [draw,
		circle,
		minimum size =2.5cm,
		label={135:$\domain$}] (D) at (0,0){};
		
		\node [draw,
		circle,
		minimum size =.8cm] (V) at (-.5,.5){$V$};
		
		\node [draw,
		circle,
		minimum size =2.5cm,
		label={45:$\pi(\domain) = \domain$}] (piD) at (4,0){};
		
		\node [draw,
		circle,
		minimum size =.8cm] (piV) at (3.75,.5){$\pi(V)$};

		\draw[->] (D) -- (piD) node[midway,above] {$\pi$};
 		\draw[->,dashed] (V) -- (piV);
		
	\end{tikzpicture}
\end{center}

Now consider the simulator $\simS = \simS_\text{Elem-Rep}$ in Fig.~\ref{f:SElemRep}, to be used inside $\Ideal$.
In $\Ideal$, $\advA_1$ outputs $V$, but this is not passed to $\simS$. Instead, the simulator uses its $\repleakO$ oracle to sample a random set $Y \subset \domain$ with the same cardinality as $V$, and inserts it into an internally instantiated $\amq$ with the same public parameters as $\Pi$. 
$\simS$ then runs $\advA_2$. Let $W$ be the set of every element $x \in \domain$ queried to the $\qrysimO$ and $\upsimO$ oracles by $\advA_2$.
In order to provide $\advA_2$ with a similar view to the one inside the $\gamePI(\advB)$ game when $c = 1$, $\simS$ lazily samples an injective function $\pi' \colon W \rightarrow \domain$ using its $\PermuteO$ procedure, and stores it into a key-value store $P$. In particular, $\simS$ uses the $\elemleakO$ oracle to consistently map elements in $V$ into $Y$, and elements in $\domain \setminus V$ into $\domain \setminus Y$, in order to keep consistency with the sampling made inside of $\repsimO$.
We can illustrate the effect of $\pi'$ as
\begin{center}
	\begin{tikzpicture}
		
		\node [draw,
		circle,
		minimum size =2.5cm,
		label={135:$\domain$}] (D) at (0,0){};
		
		\node [draw,
		circle,
		minimum size =.8cm] (V) at (-.5,.5){$V$};
		
		\node [draw,
		circle,
		minimum size =1.5cm] (W) at (-.05,-.4){$W$};
		
		\node [draw,
		circle,
		minimum size =2.5cm,
		label={45:$\domain$}] (piD) at (4,0){};
		
		\node [draw,
		circle,
		minimum size =.8cm] (piV) at (3.75,.5){$Y$};

		\node [draw,
		circle,
		minimum size =1.5cm] (piW) at (4.05,-.4){$\pi'(W)$};

		\node (VW) at (-0.4,.25){};
		\node (piVW) at (3.95,.25){};
		
		
		\draw[->] (W) -- (piW) node[midway,above] {$\pi'$};
		\draw[->,dashed] (VW) -- (piVW) node[midway,above] {$\pi'$};
		
	\end{tikzpicture}
\end{center}	

From the point of view of $\advA$, $\pi'$ and $\pi$ are indistinguishable, and hence $\Ideal$ and the $\gamePI$ game when $c = 1$ are identical. Therefore,
\begin{align*}
	(\dagger) = \abs{\Pr[\gamePI(\advB) = 1 \mid c = 1]- \Pr[\Ideal(\advA, \dD, \simS) = 1]} = 0.
\end{align*}
	
Finally, we can collect the upper bounds and obtain the final result,
\begin{align*}
	\text{Adv}_{\Pi, \advA, \simS}^{RoIElemRepP} (\dD)&\,{\coloneqq}\,\abs{\Pr[\Real(\advA, \dD)\,{=}\,1] {-} \Pr[\Ideal(\advA, \dD, \simS)\,{=}\,1]} \\
	&\le (\star) + \varepsilon + (\dagger) \le \varepsilon_{\text{PRF}} + \varepsilon. \ACM{}{\queedee}
\end{align*}
\end{proof}

\subsection{Proof of Theorem \ref{th:er-to-r}}\label{app:proof-er-to-r}
\oldrepprivacythm*
\begin{proof}
Recall that $\Real$ (resp. $\Ideal$) denotes the $d = 0$ (resp. $d = 1$) version of $\gameRoIERP$ in Fig.~\ref{game:priv:ElemRep:r-o-i}, and $\overline{\Real}$ (resp. $\overline{\Ideal}$) denotes the $d = 0$ (resp. $d = 1$) version of $\gameRoIRP$ in Fig.~\ref{game:priv:Rep:r-o-i}.

	Let $\simS'$ be the same as $\simS$, but where every call to $\elemleakO$ is replaced with $\bot$. Then, the view that $\advA_2$ sees through $\simS$ and $\simS'$ are the same at least up until $\advA_2$ makes its first $\upO,\qryO$ query on something in $V$. Thus for every $t_d$-distinguisher $\dD$,
	
	\begin{align*}
		\left|
		\pr{\Ideal(\advA, \dD, \simS){=}1} 
		{-}\pr{\overline{\Ideal}(\advA, \dD,\simS'){=}1}
		\right| \nn
		\leq \Pr[{W \cap \setV \not= \emptyset}].
	\end{align*}
	Observing that $\Real, \overline{\Real}$ are identical and using the $\varepsilon$-Elem-Rep privacy of $\Pi$,
	\begin{align*}
		\text{Adv}&_{\Pi, \advA, \simS'}^{RoIRepP} (\dD)
		= \left|
		\pr{\overline{\Real}(\advA, \dD){=}1}
		{-}\pr{\overline{\Ideal}(\advA, \dD,\simS'){=}1}
		\right| \nn\\
		&= \left|
		\pr{\Real(\advA, \dD){=}1}
		{-}\pr{\overline{\Ideal}(\advA, \dD,\simS'){=}1}
		\right| \nn\\
		&\leq \left|
		\pr{\Real(\advA, \dD){=}1}
		{-}\pr{\Ideal(\advA, \dD,\simS){=}1}
		\right| \nn\\
		&\quad + \left|
		\pr{\Ideal(\advA, \dD, \simS){=}1}
		{-}\pr{\overline{\Ideal}(\advA, \dD,\simS'){=}1}
		\right| \nn\\
		&\leq \varepsilon + \Pr[{W \cap \setV \not= \emptyset}]. \ACM{}{\queedee}
	\end{align*}

\end{proof}

\section{More on secure instances}\label{app:secure-instantiations} 

We explain in detail how to use our results to instantiate \amq instances achieving provable guarantees, expanding on \S~\ref{sec:secure-instantiations}.
We emphasize that secure instantiation using our results requires three steps: first, either replacing hash functions with PRFs or wrapping the $\amq$ algorithms with a PRF evaluation; second, constraining the budget of queries allowed to the adversary; third, using our bounds to choose parameters for the $\amq$. The impact on efficiency of these measures comes mainly from the third step. For example, when considering an adversarial setting over a non-adversarial one, in order to achieve the same correctness guarantees the designer will have to choose bigger parameters, which will affect the space and runtime efficiency. Crucially, the use of PRFs will not imply a significant overhead over a construction using plain hash functions, since, as explained in the introduction, PRFs can be efficiently instantiated. In this section, we assume $\varepsilon$-secure PRFs with $\varepsilon = 2^{-256}$.

\subsection{Correctness}
Recall the guarantee from Theorem~\ref{thm:general-correctness}:
\begin{align}
	\text{Adv}_{\Pi, \mathcal{A, S}}^{{\text{RoI}}} (\dD) &=  \abs{\pr{\Real(\advA, \dD) {=} 1} {-} \pr{\Ideal(\advA, \dD,\simS) {=} 1}} \nonumber \\
	&= \abs{\pr{\dD(\advA) {=} 1} {-} \pr{\dD(\simS(\advA, pp)) {=} 1}} \nonumber \\
	& \leq \varepsilon +  2 q_t \cdot \pfp{n + q_u}, \label{eqn:thm1-sec-inst}
\end{align}
where $\varepsilon$ is a PRF distinguishing advantage, $n$ is the number of elements initially inserted into the \amq, and $q \coloneqq q_u + q_t$ is the total number of queries made by $\advA$ that influence her success probability.\footnote{Note that the number of $\revO$ queries $q_v$ does not play a role in bounding $\advA$'s success probability.} Crucially, usually $\pfp{n + q_u}$ can be estimated using well-established upper bounds, such as those provided by Lemmas~\ref{lem:bf-false-positive-probability} and~\ref{clm:cf-variants-fp-bound}.

This result allows us to establish an upper bound on the probability $\Pr[\dD(\advA)\,{=}\,1]$ of an adversary $\advA$ with the given query budget $q$ finding a sequence of queries to an \amq $\Pi$ that allows them to satisfy some desired predicate $P$ in the $\Real$ world, by relating this probability to that of $\advA$ satisfying the predicate in the $\Ideal$ world, where analysis of the probability $\Pr[\dD(\simS(\advA, pp))\,{=}\,1]$ is simplified by the extra guarantee of the state of $\Pi$ being non-adversarially-influenced at every point in time. For example, if the predicate is on the state $\sigma$ of the \amq after $\advA$'s run, this can be captured by having $\advA$ return $\sigma \gets \revO()$ at the end of her run, and $\dD$ return whether $P(\sigma)$ is satisfied or not.

Two example predicates for which designers may want to bound $\pr{\dD(\advA) {=} 1}$ could be $S_r \coloneqq [\advA$ observes $r$ false positives during her run$]$, the predicate investigated in~\cite{_CCS:ClaPatShr19}, and $H_w \coloneqq [$the state bitvector $\sigma$ of the Bloom filter $\Pi$ has Hamming weight exactly $w$ at some point during $\advA$'s run$]$.
Crucially, the best approach to analysing $\pr{\dD(\simS(\advA, pp)) {=} 1}$ may differ between different predicates.
In the case of $S_r$, the optimal strategy for an adversary in the $\Ideal$ world against a \emph{non-decreasing membership probability} \amq $\Pi$ (see Def.~\ref{def:bf-consistency-rules}) would be making all $n + q_u$ possible insertions into $\Pi$ first and then using all $q_t$ available $\qryO$ queries to try to produce $r$ false positive answers. As shown in~\cite{_CCS:ClaPatShr19}, the success probability of this adversary follows from a binomial distribution analysis using parameters $p =\pfp{n+q_u}$ and $q_t$.\footnote{Additionally, using the Chernoff bound to obtain a closed formula for the resulting probability.}
On the other hand, an adversary trying to satisfy predicate $H_w$ for some specific value of $w$ may achieve a higher success probability by not using all of her available $\upO$ queries, requiring a more careful derivation of the success probability in $\Ideal$.

As a practical example, we investigate the choice of adversarially correct parameters for Bloom and PRF-wrapped Cuckoo filters for three possible query budgets. Concretely, we bound the probability that after $n+q_u$ adversarial insertions, querying a random non-inserted element returns a false positive result. This is the adversarial variant of the commonly performed analysis for choosing parameters for Bloom filters, and can be captured by letting $\advA$ make $n+q_u$ $\upO$ queries on distinct elements first, and then output the result of a single $\qryO$ query on a non-inserted element. The output is passed to a trivial distinguisher $\dD$ which on input $b$ returns $b$. For different values of $k$ for Bloom filters and $s$ for Cuckoo filters, we plot the false positive probability against the bit-size of the data structure. We also plot the same curves when performing an honest-case analysis where the queries are considered non-adversarial.

\paragraph{Finding a false positive after $n+q_u$ insertions.}
We start by creating a query budget. This can be done by first estimating the total number of elements that are expected to be inserted by honest users into the \amq $\Pi$, and assigning this as the value of $n$. Then, one proceeds to estimate a maximum number of queries $q$ that $\advA$ is allowed to make. This could be determined based on the expected shelf-life of $\Pi$, and on rate-limiting done on the $\qryO$ and $\upO$ interfaces. We consider a situation where the total number of queries $q = q_u + q_t$ is bounded, but not the exact values of $q_u$ and~$q_t$.

We then proceed to analyse the set of candidate public parameters $pp$, and compute an upper bound to \eqref{eqn:thm1-sec-inst},\footnotemark
given $q_t = t$ and $q_u = q-t$ for every value of $t \in \{0,\dots,q\}$, storing the attained maximum as a function of $t$. This gives us an upper bound 
on the maximum adversarial advantage $\text{Adv}_{\Pi, \mathcal{A, S}}^{{\text{RoI}}} (\dD)$ for $q$ and $pp$. \footnotetext{Using the well-established formulae for $\overline{\pfpsym}$ in Lemmas~\ref{lem:bf-false-positive-probability}
	and~\ref{clm:cf-variants-fp-bound}.} 
Then, given our predicate, we upper bound  $\pr{\dD(\simS(\advA, pp)) {=} 1}$ as $\opfp{n+q_u} = \opfp{n+q-t}$, resulting in an upper bound to $\pr{\dD(\advA) {=} 1}$ in the $\Real$ world.

Finally, in Fig.~\ref{fig:correct-instances} we plot the upper bound on the maximum probability $\pr{\dD(\advA) {=} 1}$ attained by $\advA$ for any given value of $pp$, ordering these by their storage cost: $m$ for Bloom filters, $s \cdot 2^{\lambda_I} \cdot \lambda_T$ for Cuckoo filters.
In the case of Bloom filters, we produce various trade-offs for various values of $k$; we do similarly for Cuckoo filters based on values of $s$. We note that in this example we do not investigate other performance metrics, such as insertion or query runtime. 
We also plot trade-offs obtained from a non-adversarial analysis, where one considers $\pr{\dD(\advA) {=} 1} = \pr{\dD(\simS(\advA, pp)) {=} 1} = \pfp{n+q_u}$.
We stress that the difference in storage seems to grow by a constant factor ($\approx 2$ for Bloom filters, $\approx 3$ for Cuckoo filters) both in the case where the number of total insertions made is lower ($\approx 2^8$) and larger ($\approx 2^{30}$), hence across a broad range of practical parameters. We leave the analytical study of this growth factor to future work.

\begin{figure*}[t]
\Wider[10em]{
	\centering
	\begin{subfigure}{.45\textwidth}
		\centering
			\adjustbox{trim={0\width} {10pt} {0} {0}, clip}{
				\includegraphics[width=\textwidth]{\detokenize{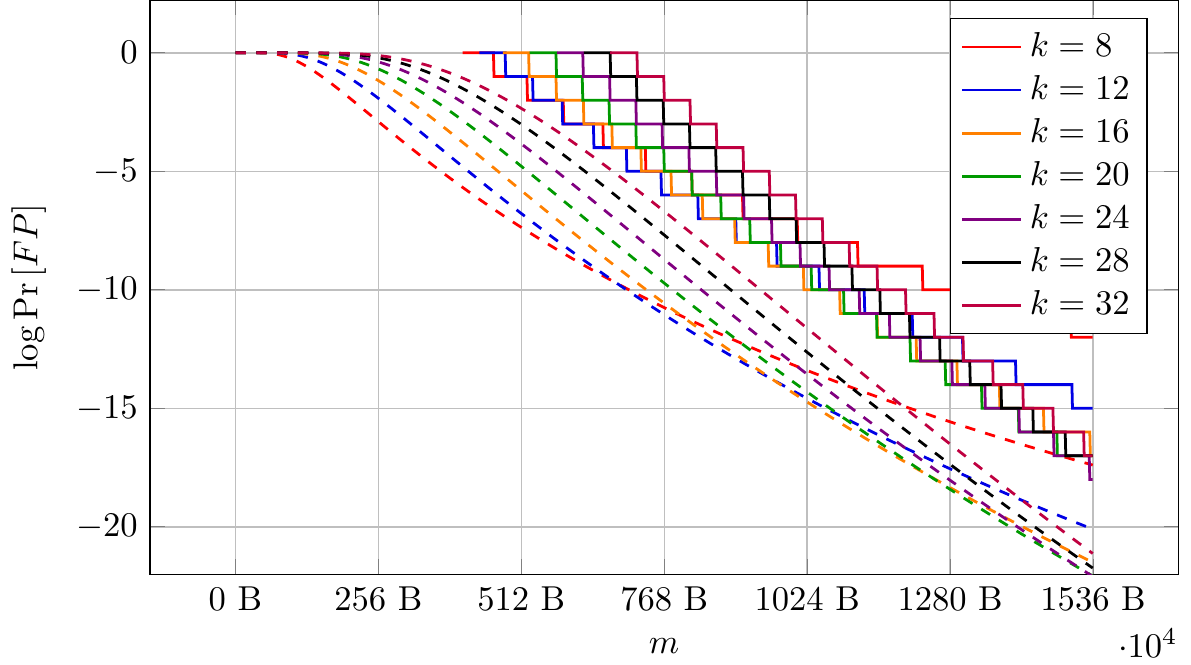}}}
		\caption{Bloom, $\log{n} = 7$, $\log{(q_u + q_t)} = 8$}\label{fig:bf-correct-instances-small-n-small-q}
	\end{subfigure}
	\begin{subfigure}{.45\textwidth}
	\centering
		\adjustbox{trim={0\width} {11pt} {0} {0}, clip}{
			\includegraphics[width=\textwidth]{\detokenize{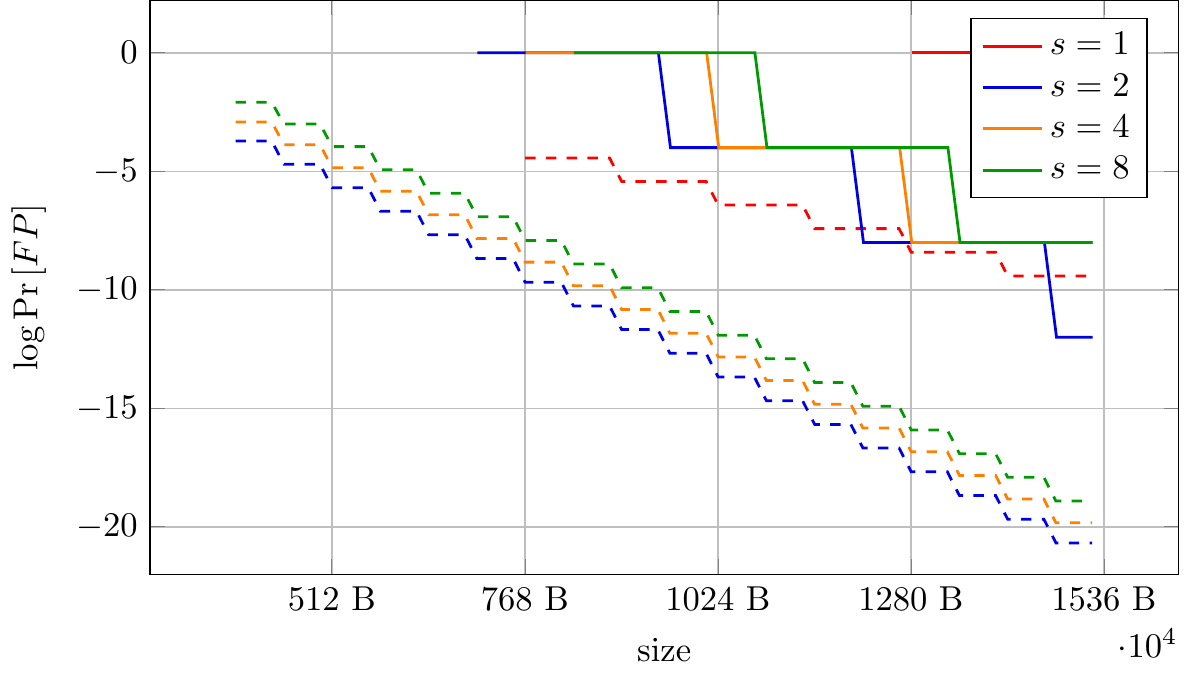}}}
	\caption{Cuckoo, $\log{n} = 7$, $\log{(q_u + q_t)} = 8$}\label{fig:cf-correct-instances-small-n-small-q}
	\end{subfigure}
	\begin{subfigure}{.45\textwidth}
		\centering
			\adjustbox{trim={0\width} {10pt} {0} {0}, clip}{
				\includegraphics[width=\textwidth]{\detokenize{plots/bf_log_pfp_vs_m_small_n_big_q.pdf}}}
		\caption{Bloom, $\log{n} = 7$, $\log{(q_u + q_t)} = 30$}\label{fig:bf-correct-instances-small-n-big-q}
	\end{subfigure}
	\begin{subfigure}{.45\textwidth}
	\centering
		\adjustbox{trim={0\width} {10pt} {0} {0}, clip}{
			\includegraphics[width=\textwidth]{\detokenize{plots/cf_log_pfp_vs_size_small_n_big_q.pdf}}}
	\caption{Cuckoo, $\log{n} = 7$, $\log{(q_u + q_t)} = 30$}\label{fig:cf-correct-instances-small-n-big-q}
	\end{subfigure}
	\begin{subfigure}{.45\textwidth}
		\centering
			\adjustbox{trim={0\width} {10pt} {0} {0}, clip}{
				\includegraphics[width=\textwidth]{\detokenize{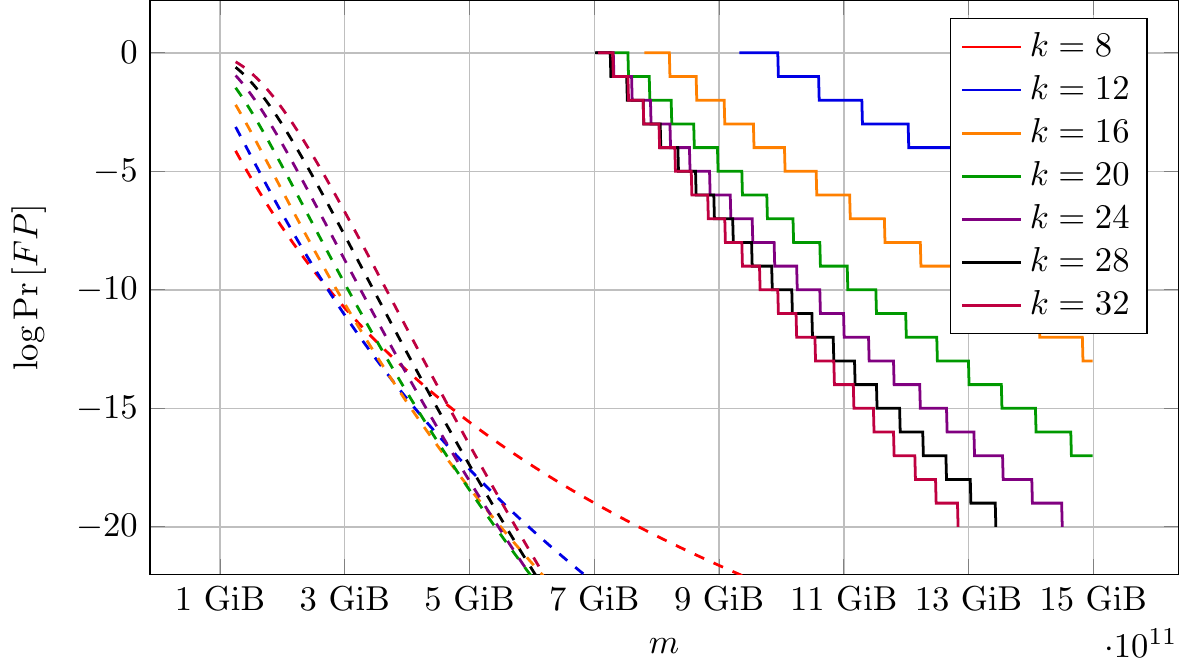}}}
		\caption{Bloom, $\log{n} = 29$, $\log{(q_u + q_t)} = 30$}\label{fig:bf-correct-instances-big-n-big-q}
	\end{subfigure}
	\begin{subfigure}{.45\textwidth}
		\centering
			\adjustbox{trim={0\width} {10pt} {0} {0}, clip}{
				\includegraphics[width=\textwidth]{\detokenize{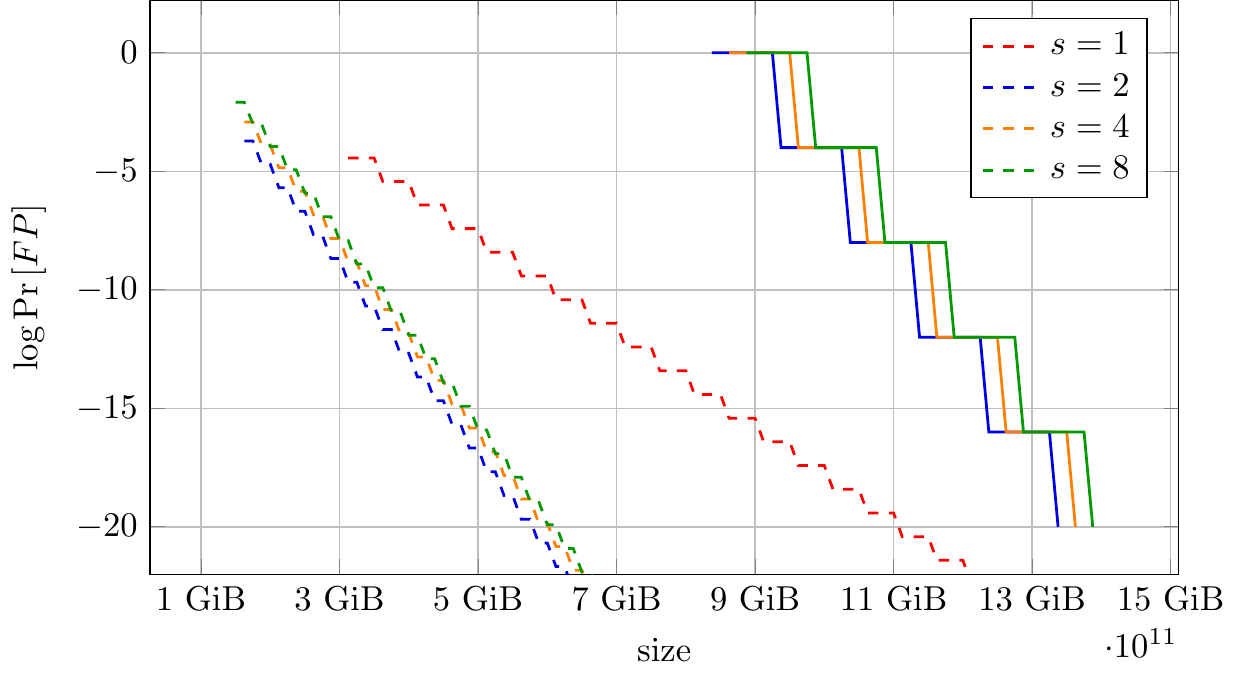}}}
		\caption{Cuckoo, $\log{n} = 29$, $\log{(q_u + q_t)} = 30$}\label{fig:cf-correct-instances-big-n-big-q}
	\end{subfigure}
}
\caption{Correctness guarantees vs.~size trade-offs for Bloom and PRF-wrapped insertion-only Cuckoo filters. Solid lines represent adversarial guarantees ($\log\Pr[FP] \ge \log \Pr[\dD(\advA) = 1]$). Dashed lines represent the values obtained assuming no adversarial influence ($\log\Pr[FP] = \log \opfp{n+q_u}$).}\label{fig:correct-instances}
\end{figure*}

\begin{remark}
	The topic of optimal parameter choice for Cuckoo filters has not been investigated as extensively as for Bloom filters.
	In particular, the procedure for choosing parameters used in~\cite{_CoNEXT:FAKM14} is based on an analysis of the probability of the $\upS$ algorithm being disabled. The authors provide a theoretical lower bound of this probability, and experimental evidence that for $s \ge 4$ and $\lambda_T \ge 6$ more than $95\%$ of the filter's slots will be filled before $\upS$ is disabled, whenever $num = 500$. This allows them to estimate $\lambda_I$ given an expected value for $n+q_u$. Unfortunately, the lower-bound analysis does not apply to the insertion-only variant of Cuckoo filters, and the experiments do not account for PRF wrapping. To produce our results in Fig.~\ref{fig:correct-instances}, we replicated their experimental analysis on PRF-wrapped insertion-only Cuckoo filters, assuming a PRF with $256$-bit outputs. Our results (see Fig.~\ref{fig:io-prf-cf-load-factor}) show that also in the case of PRF-wrapped insertion-only Cuckoo filters, for $s = 4,\,8$ at least $95\%$ of the slots are occupied when $num = 500$. Hence, we replicate the rest of their analysis to choose parameters. 
\end{remark}

\begin{figure*}[t]
	\Wider[10em]{
		\centering
		\begin{subfigure}{.45\textwidth}
			\centering
			\adjustbox{trim={0\width} {0pt} {0} {0}, clip}{
				\includegraphics[width=.9\textwidth]{\detokenize{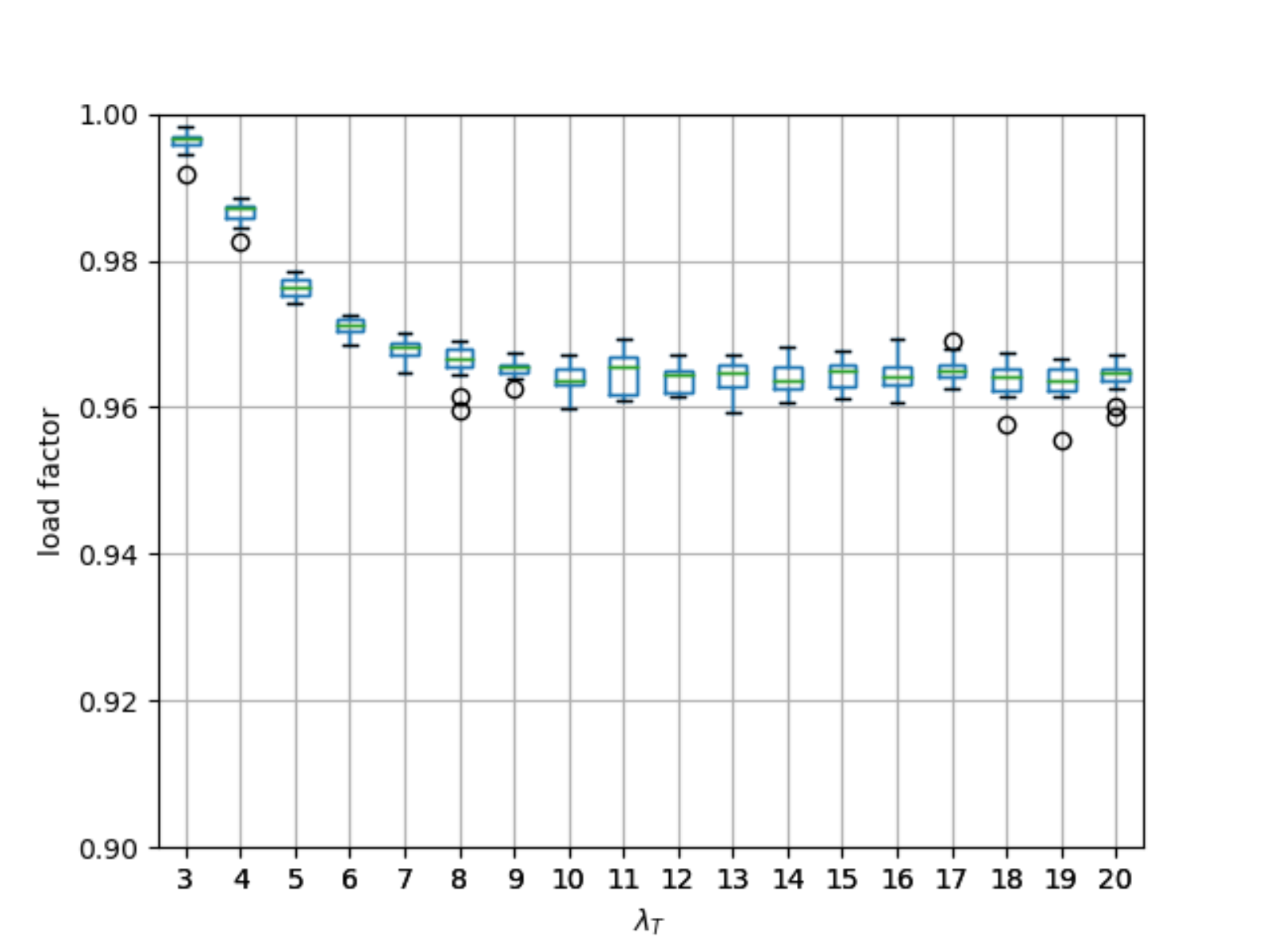}}}
		\caption{$\lambda_I = 15$, $s = 4$, $num = 500$}
		\end{subfigure}
		\begin{subfigure}{.45\textwidth}
		\centering
		\adjustbox{trim={0\width} {0pt} {0} {0}, clip}{
			\includegraphics[width=.9\textwidth]{\detokenize{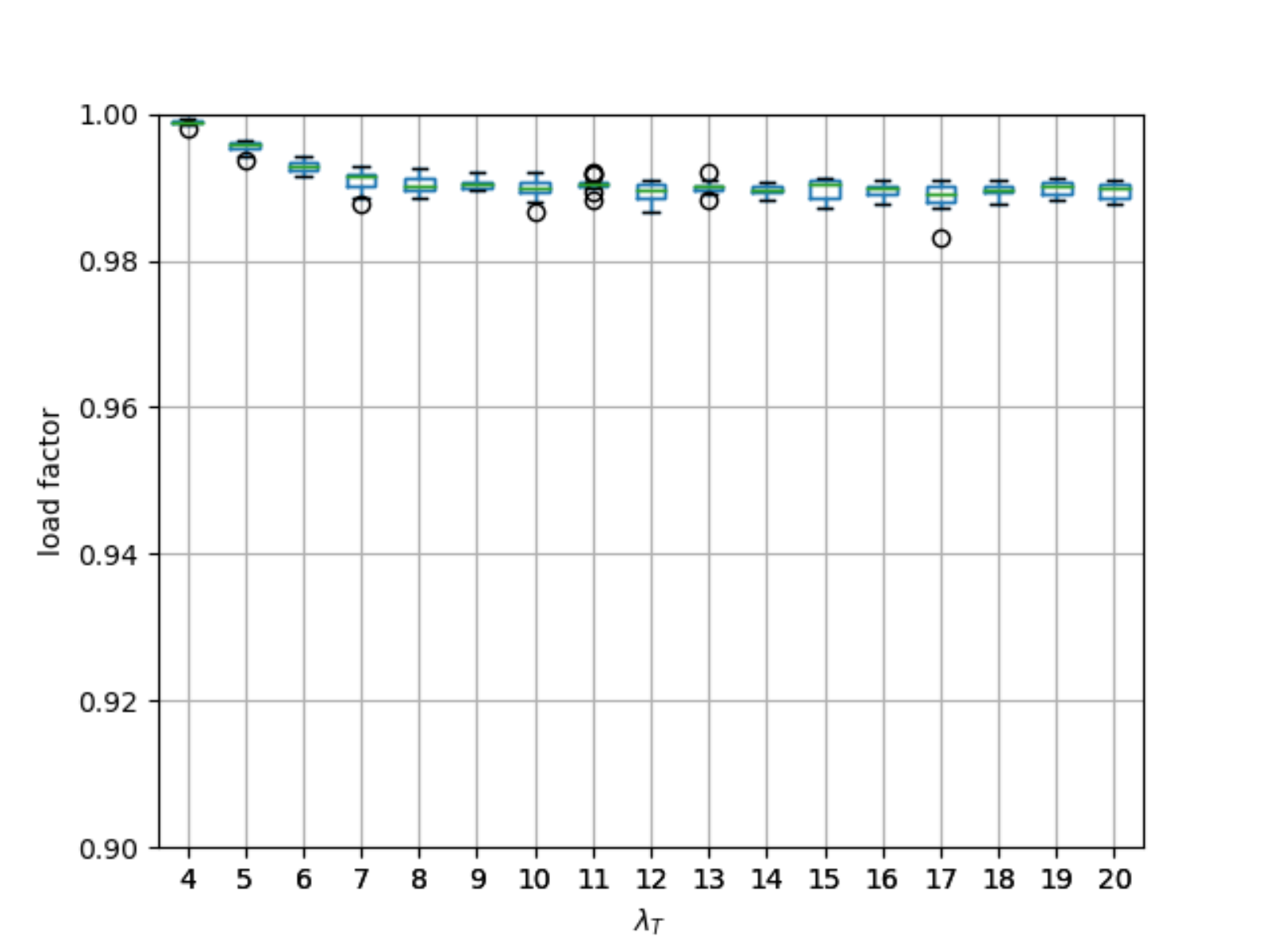}}}
		\caption{$\lambda_I = 15$, $s = 8$, $num = 500$}
		\end{subfigure}
		\begin{subfigure}{.45\textwidth}
		\centering
		\adjustbox{trim={0\width} {0pt} {0} {0}, clip}{
			\includegraphics[width=.9\textwidth]{\detokenize{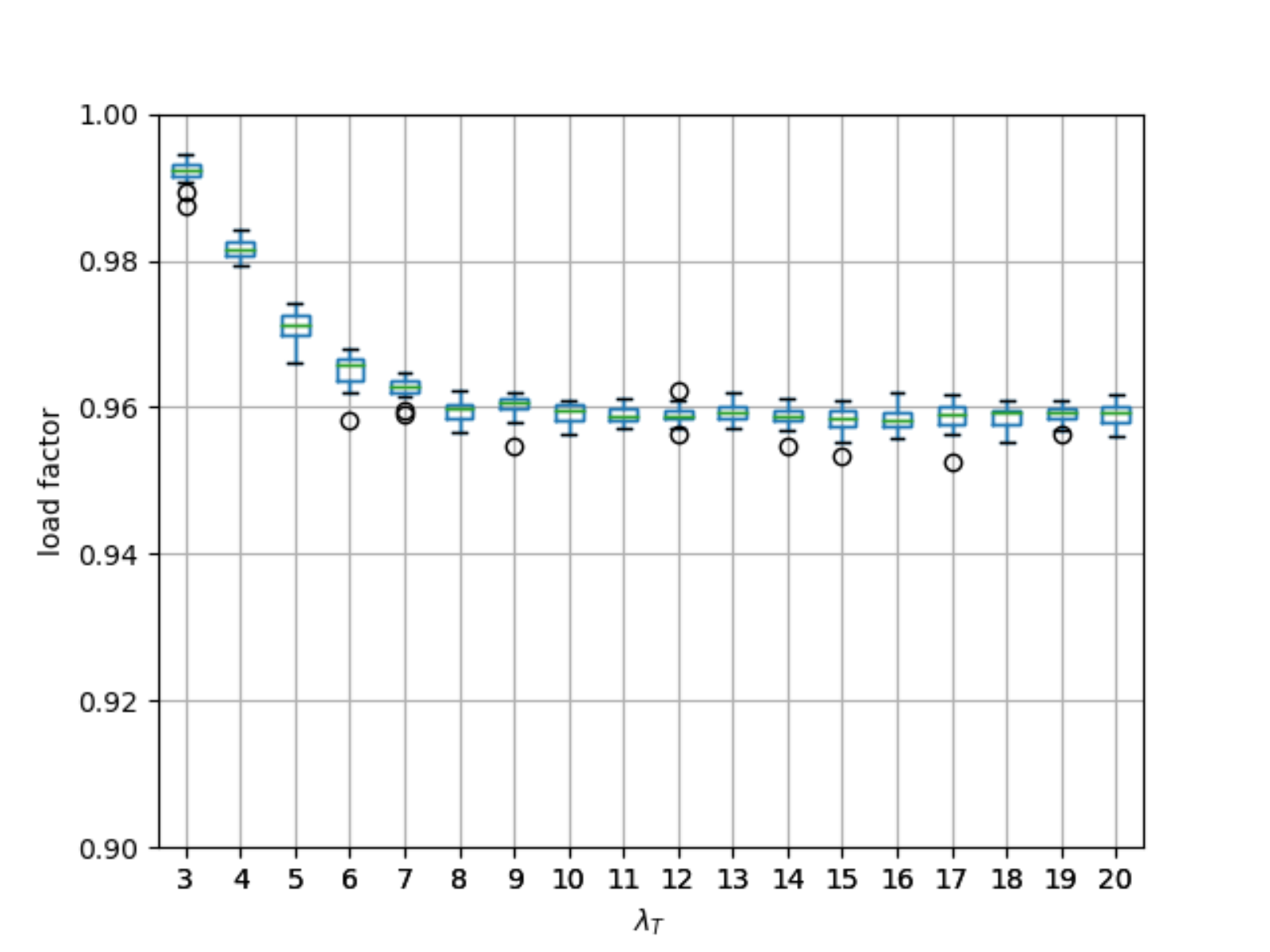}}}
		\caption{$\lambda_I = 20$, $s = 4$, $num = 500$}
		\end{subfigure}
		\begin{subfigure}{.45\textwidth}
			\centering
			\adjustbox{trim={0\width} {0pt} {0} {0}, clip}{
				\includegraphics[width=.9\textwidth]{\detokenize{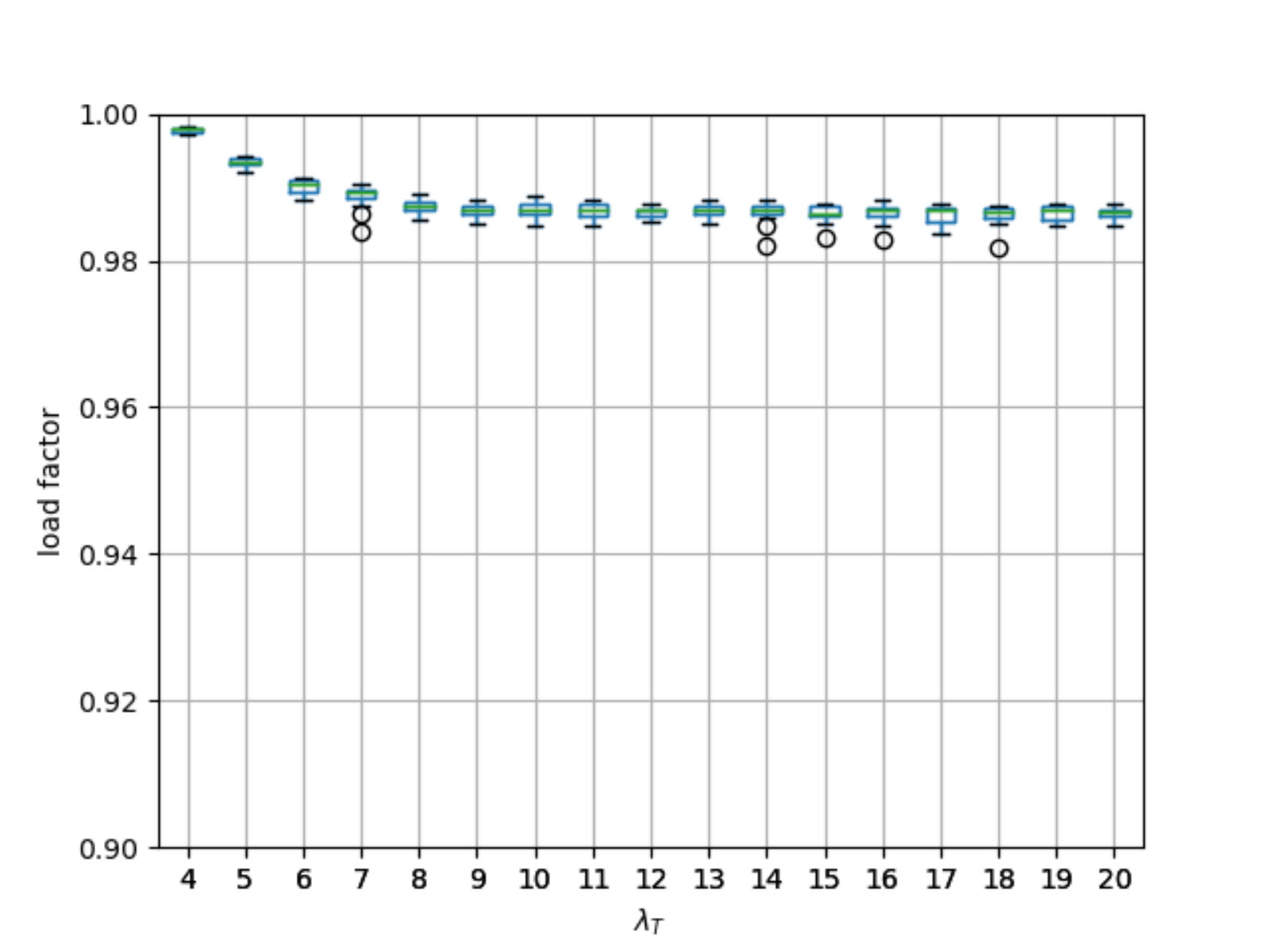}}}
			\caption{$\lambda_I = 20$, $s = 8$, $num = 500$}
		\end{subfigure}
	}
	\caption{Proportion of slots filed before $\upS$ is disabled (the ``load factor'' in~\cite{_CoNEXT:FAKM14}) for PRF-wrapped, insertion-only Cuckoo filters as a function of the tag length $\lambda_T$. Each plot is generated using data from 16 independent experiments.}\label{fig:io-prf-cf-load-factor}
\end{figure*}

\subsection{Privacy}

Contrary to our analysis of adversarial correctness, the privacy results we prove are independent of the \amq public parameters and instead depend on an estimation of the ``unguessability'' of the distribution output by $\advA_1$ (in this analysis, the honest user), given the query budget available to the adversary $\advA_2$.

It is likely that $\advA_1$'s distribution will be inherently application-specific, meaning that the only freedom from the designer's perspective is the query budget that can be allowed to the adversary. Similarly to the case of correctness, query budget constraints could be imposed by resource constraints, eg.~by limiting the life-span of the \amq, or by communication constraints, eg.~by imposing rate limiting of the $\qryO$ and $\upO$ oracles.

Unlike our correctness analysis, our privacy guarantees do not come from bounding the probability of $\advA$ achieving some predicate in an $\Ideal$ world. Rather, using Theorem~\ref{th:er-to-r}, we compute an upper bound on $\advA$'s ability to distinguish a \emph{function-decomposable} \amq from a simulation of it that knows at most only the cardinality of the set $V$ output by $\advA_1$. This gap is bounded by $\Pr[{W \cap \setV \ne \emptyset}]$, where $W$ is the set of elements queried by $\advA_2$ with a query budget of $q = q_t + q_u$ in the $\Ideal$ world. Whenever the gap is  small, a corresponding degree of indistinguishability, and hence privacy, is achieved.

\paragraph{An example computation.}
To compute a concrete privacy bound, a potentially useful tool is the min-entropy of the distribution output by $\advA_1$. In more detail, $\advA_1$ outputs a list $V = [x_1,\dots,x_n]$ from some distribution $D_{\advA_1}$ over $\plist(\domain)$.
We first define the conditional min-entropy of $D_{\advA_1}$ given $|V|$ to be
\begin{align*}
	H_\infty(D_{\advA_1} \mid |V| = \nu) \coloneqq -\log \max_{x \in \domain} \Pr[x \in V \mid |V| = \nu].
\end{align*}
Let $W$ be a random variable with support $\supp{W}$ describing the set of elements queried by $\advA_2$ to the $\qryO$ and $\upO$ oracles.
The best possible strategy for $\advA_2$ should be querying the elements most likely to be in $V$. $\advA_2$ could either directly test membership of these elements using $\qryO$, or in the case of \emph{reinsertion-invariant} \amq (see Def.~\ref{def:bf-consistency-rules}), indirectly test membership by attempting to insert likely elements and testing whether the state of the $\amq$ changed after the attempted insertion via the $\revO$ oracle.

To bound $\Pr[{W \cap \setV \ne \emptyset}]$ we start by bounding the conditional probability $\Pr[W \cap V \neq \emptyset \mid |V| = \nu]$ as
\clearpage
\begin{align*}
	&\Pr[W \cap V \neq \emptyset \mid |V| = \nu]\\
	&= \sum_{w \in \supp{W}} \Pr[W \cap V \neq \emptyset \wedge W = w \mid |V| = \nu] \\
	&= \sum_{w \in \supp{W}}\Pr\left[\bigvee_{i=1}^{|w|} (y_i \in V) \wedge W {=} w \ \vline\  |V| {=} \nu\right] 
	\begin{matrix*}[l]
		\enspace \text{where }\\ 
		 \enspace w = \{y_1,\dots,y_{|w|}\} 
	\end{matrix*} \\
	&\le \sum_{w \in \supp{W}} |w| \cdot \max_{y \in w} \Pr\left[y \in V \wedge W = w \ \vline\  |V| = \nu\right]  \\
	&= \sum_{w \in \supp{W}} |w| \cdot \max_{y \in w} \left \{
	\begin{matrix*}
		\Pr\left[y \in V \ \vline\ W = w, \,  |V| = \nu\right] \\
		\cdot \Pr\left[W = w \ \vline\  |V| = \nu\right]
	\end{matrix*}
	\right\} \\
	&= \hspace{-1em}\sum_{w \in \supp{W}} \hspace{-1em}|w| \cdot \max_{y \in w} \left \{
	\begin{matrix*}
		\hspace{-0.8em}\Pr\left[y \in V \ \vline\  |V|\,{=}\,\nu\right] \\
		\, \cdot \Pr\left[W\,{=}\,w \ \vline\  |V|\,{=}\,\nu\right] \,
	\end{matrix*}
	\right\}  \hspace{-.2em}\begin{matrix*}[l]
		\enspace \text{\small by $V$ and $W$}\\
		\enspace \text{\small being} \\
		\enspace \text{\small independent} \\
		\enspace \text{\small given $|V|$, as} \\
		\enspace \text{\small the simulator}\\
		\enspace \text{\small has only}\\ 
		\enspace \text{\small access to $|V|$}
		\end{matrix*} \\
	&= \sum_{w \in \supp{W}} |w| \cdot \Pr\left[W = w \ \vline\  |V| = \nu\right] \cdot \max_{y \in w} \Pr\left[y \in V \ \vline\  |V| = \nu\right] \\
	&\le (q_u + q_t) \sum_{w \in \supp{W}} 
	\begin{matrix*}
	\hspace{-4em}\Pr\left[W = w \ \vline\  |V| = \nu\right]\\
	\,\cdot\max_{y \in \domain} \Pr\left[y \in V \ \vline\  |V| = \nu\right]
	\end{matrix*} \\
	&\le (q_u + q_t) \cdot \max_{y \in \domain} \Pr\left[y \in V \ \vline\  |V| = \nu\right] \\
	&= (q_u + q_t) \cdot 2^{-H_\infty(D_{\advA_1} \mid |V| = \nu)}.
\end{align*}

We then conclude by observing
\begin{align*}
	\Pr&[{W \cap \setV \ne \emptyset}] = \sum_{\nu=1}^{n} \Pr[W \cap V \neq \emptyset \mid |V| = \nu] \Pr[|V| = \nu]\\
	&\le \sum_{\nu=1}^{n} (q_u + q_t) \cdot 2^{-H_\infty(D_{\advA_1} \mid |V| = \nu)} \Pr[|V| = \nu]\\
	&\le (q_u + q_t) \cdot \max_{\nu \in [n]} 2^{-H_\infty(D_{\advA_1} \mid |V| = \nu)} \sum_{\nu=1}^{n}  \Pr[|V| = \nu]\\
	&= (q_u + q_t) \cdot \max_{\nu \in [n]} 2^{-H_\infty(D_{\advA_1} \mid |V| = \nu)},
\end{align*}
hence obtaining a closed formula for a bound on the distance between the $\Real$ and $\Ideal$ worlds. High min-entropy settings may imply indistinguishability of the worlds for significantly large values of $q_u + q_t$. Lower min-entropy settings may require strict query budget constraining to attain indistinguishability.

\section{Hiding set cardinality}\label{app:leaking-nothing}

In \S~\ref{sec:privacy-other-pds} we noted that we assumed so far that in the context of achieving privacy, leaking the cardinality of the set being stored in an \amq is acceptable. 
If this were not to be an acceptable assumption, different designs may need to be investigated.
For example, one may attempt to achieve a smaller leakage profile by using some \amq with query runtime independent of $|V|$ by removing access to $\revO$ from the adversary. If the server's storage was untrusted, the PDS could be kept on a cryptographic enclave, or encrypted under a homomorphic encryption scheme and operated upon in encrypted form, with the scheme's secret key kept in the enclave. While the frequency of false positive $\qryO$ answers may still allow estimation of $|V|$, it may be harder in practice than it is by simply querying $\revO$.
}{}

\end{document}